\def\draft{1}  
\newcommand{\mycomment}[3]{{\color{#2}{[\textbf{#1: #3}]}}}
\newcommand{\myedit}[2]{{\color{#1}{#2}}\normalcolor}
\newcommand{\rebuttaledit}[2]{{\color{#1}{#2}}\normalcolor}
\newcommand{\mycomment}[3]{}
\newcommand{\rebuttaledit}[2]{#2}
\newcommand{\myedit}[2]{#2}
\newcommand{\Eric}[1]{\mycomment{Eric}{magenta}{#1}}
\newcommand{\Hao}[1]{\mycomment{Hao}{brown}{#1}}
\newcommand{\Yihan}[1]{\mycomment{Yihan}{purple}{#1}}
\newcommand{\Youla}[1]{\mycomment{Youla}{cyan}{#1}}
\newcommand{\er}[1]{\myedit{magenta}{#1}}
\newcommand{\y}[1]{\myedit{cyan}{#1}}
\newcommand{\hedit}[1]{\myedit{brown}{#1}}
\newcommand{\rebuttal}[1]{\rebuttaledit{cyan}{#1}}
\newcommand{\dnode}{DNode}
\newcommand{\vnode}{VNode}
\newcommand{\rcptr}{\texttt{rc\_ptr}}
\newcommand{\arcptr}{\texttt{arc\_ptr}}
\newcommand{\rangetracker}{range-tracking}
\newcommand{\downptr}{down-pointer}
\newcommand{\lrunreachable}{lr-unreachable}
\newcommand{\lrreachable}{lr-reachable}
\newcommand{\upptr}{up-pointer}
\newcommand{\future}[1]{} 
\newcommand{\myparagraph}[1]{\subparagraph*{#1}} 
\newcommand{\op}[1]{\co{#1}} 
\newcommand{\retireInterval}{\op{deprecate}}
\newcommand{\Announce}{\op{announce}}
\newcommand{\Unannounce}{\op{unannounce}}
\newcommand{\RT}{{\sc RangeTracker}}
\newcommand{\vCAS}{versioned CAS}
\newcommand{\announce}{{\tt Ann}}
\newcommand{\retired}{\co{LDPool}}
\newcommand{\Needed}{\op{Needed}}
\newcommand{\Redundant}{\op{Redundant}}
\newcommand{\queue}{{\tt Q}}
\newcommand{\Empty}{\mbox{$\bot$}}
\newcommand{\merge}{\op{merge}}
\newcommand{\sortAnnouncements}{\op{sortAnnouncements}}
\newcommand{\intersect}{\op{intersect}}
\newcommand{\Split}{\op{split}}
\def\StartLineAt#1{\lstset{firstnumber=#1}}
\newcommand{\here}[1]{{\bf [[[ #1 ]]]}}
\newcommand{\ignore}[1]{}
\newtheorem{theorem}{Theorem}
\newtheorem{lemma}[theorem]{Lemma}
\newtheorem{definition}[theorem]{Definition}
\newtheorem{observation}[theorem]{Observation}
\newtheorem{corollary}[theorem]{Corollary}
\newtheorem{invariant}[theorem]{Invariant}
\newcommand{\MRS}{MRS}
\newcommand{\co}[1]{\mbox{\tt #1}}
\newcommand{\A}{\co{A}}
\newcommand{\B}{\co{B}}
\newcommand{\C}{\co{C}}
\newcommand{\D}{\co{D}}
\newcommand{\E}{\co{E}}
\newcommand{\U}{\co{U}}
\newcommand{\V}{\co{V}}
\newcommand{\W}{\co{W}}
\newcommand{\X}{\co{X}}
\newcommand{\Y}{\co{Y}}
\newcommand{\Z}{\co{Z}}
\newcommand{\ts}{\co{ts}}
\newcommand{\sul}{\co{spliceUnmarkedLeft}}
\newcommand{\sur}{\co{spliceUnmarkedRight}}
\newcommand{\spl}{\co{splice}}
\newcommand{\frozen}{\co{frozen}}
\theoremstyle{definition}
\newtheorem{assumption}[theorem]{Assumption}
\title{Space and Time Bounded Multiversion Garbage Collection}
\author{Naama Ben-David}{VMware Research, USA}{bendavidn@vmware.com}{}{}
\author{Guy E. Blelloch}{Carnegie Mellon University, USA}{guyb@cs.cmu.edu}{}{}
\author{Panagiota Fatourou}{FORTH ICS and University of Crete, Greece}{faturu@csd.uoc.gr}{}{}
\author{Eric Ruppert}{York University, Canada}{ruppert@eecs.yorku.ca}{}{}
\author{Yihan Sun}{University of California, Riverside, USA}{yihans@cs.ucr.edu}{}{}
\author{Yuanhao Wei}{Carnegie Mellon University, USA}{yuanhao1@cs.cmu.edu}{}{}
\keywords{Lock-free, data structures, memory management, snapshot, version lists}
\authorrunning{N. Ben-David, G.\,E. Blelloch, P. Fatourou, E. Ruppert, Y. Sun, and Y. Wei}
\begin{document}
\nolinenumbers
  \maketitle

\begin{abstract}
We present a general technique for garbage collecting old versions for multiversion concurrency control that simultaneously achieves good time and space complexity. Our technique takes only $O(1)$ time on average to reclaim each version and maintains only a constant factor more versions than needed (plus an additive term).
It is designed for multiversion schemes using version lists, which are the most common.

Our approach uses two components that are of independent interest.
First, we define a novel {\it range-tracking data structure} which stores a set of old versions and efficiently finds those that are no longer needed.
We provide a wait-free implementation in which all operations take amortized constant time.
Second, we represent version lists using a new lock-free doubly-linked list algorithm that supports efficient (amortized constant time) removals given a pointer to any node in the list.
These two components naturally fit together to solve the multiversion garbage collection problem--the range-tracker identifies which versions to remove and our list algorithm can then be used to remove them from their  version lists.
We apply our garbage collection technique to generate end-to-end time and space bounds for the multiversioning system of Wei et al.\ (PPoPP 2021).

\end{abstract}

\ignore{OLD VERSION:
Multiversioning is a powerful technique for supporting queries on a snapshot of the state of a data structure
while concurrent updates are made to it.  The approach is widely used in database systems, and for implementing snapshots on concurrent data structures.   A problem
with multiversioning is knowing when to collect old versions to avoid unbounded growth of memory usage.
Existing techniques are conservative:  they collect only versions that are older
than any ongoing query.  If a query is held up for a long time it could force a very large amount of memory to be held.
We propose a technique that holds only versions for which there is an ongoing operation.
Hence if an old operation is held up, it will only hold on to versions that are valid during
that operation.  All versions not held for an operation in this way can be collected.
Our approach consists of two main parts that are of independent interest.

Firstly, we develop a range-tracking data structure for identifying versions that can be collected.
Each version has an associated range (or interval) of timestamp values.
When a version is retired, it is held until no query has a timestamp in the version's range.
The range-tracker keeps track of the intervals associated with versions
using $O(1)$ amortized time per operation.  We show that the memory held
is at most a constant factor more than required, plus an $O(P^2\log P)$ term, where $P$ is the number
of processes.
The data structure can be used to improve other collection schemes,
such as interval-based collection or hazard eras.

Secondly, we develop a novel lock-free implementation of a doubly-linked list to represent version lists.
Nodes can be appended to the end of the list and deleted at any location
in constant amortized time.  If $L$ nodes have been appended to the list and $R$ of them
have been removed, at most $2(L-R)+O(P\log L)$ nodes remain reachable.

Combining these elements with previous work by Wei et al.~\cite{WBBFRS21a}
allows us to bound the space needed to add multi-versioning to a CAS-based concurrent data structure
so that it can efficiently support complex query operations.
} 


\newcommand{\interval}{interval}

\section{Introduction}

Supporting multiple ``historical'' versions of data, often called multiversioning
or multiversion concurrency control, is a powerful technique widely used in
database systems \cite{Reed78,BG83,papadimitriou1984concurrency,HANAgc,neumann2015fast,Wu17},
transactional memory~\cite{perelman2010maintaining,FC11,Perelman11,Kumar14,Keidar2015},
and shared data structures~\cite{BBB+20,FPR19,NHP21,WBBFRS21a}.
This approach allows complex queries (read-only transactions)
to proceed concurrently with updates while still appearing
atomic because they get data views that are consistent with a single point in time.
If implemented carefully, queries do not interfere with one another or with updates.
The most common approach for multiversioning uses version lists~\cite{Reed78} (also called version chains):
the system maintains a global timestamp that increases over time,
and each object maintains a history of its updates as a list of value-timestamp pairs,
each corresponding to a value written and an update time.
Each node in the list has an associated \interval{} of time from that node's timestamp until the next (later) node's timestamp.
A query can first read a timestamp value $t$ and then, for each object
it wishes to read, traverse the object's version list to find the version whose \interval{} contains $t$.

Memory usage is a key concern for
multiversioning, since
multiple versions can consume huge amounts of memory.
Thus, most previous work on multiversioning
discusses how to reclaim the memory of old versions. \hedit{We refer to this as the multiversion garbage collection (MVGC) problem}.
A widely-used approach is to keep track of the earliest active query and reclaim the memory of any versions overwritten before \hedit{the start of this query} \cite{FC11,neumann2015fast,kim2019mvrlu,NHP21,WBBFRS21a}.
However, a query that runs for a long time, either because it is complicated or because it has
been delayed, will force the system to retain many unneeded intermediate
versions \hedit{between the oldest required version and the current one}.
This has been observed to be a major bottleneck for database systems with
Hybrid Transaction and Analytical Processing (HTAP) workloads~\cite{bottcher2019scalable}
(i.e., many small updates concurrent with some large analytical queries).
\rebuttal{To address this problem in the context of software transactional memory, Lu and Scott \cite{lu2013generic} proposed a non-blocking
algorithm that can reclaim intermediate versions.  Blocking techniques were later proposed by the database community~\cite{bottcher2019scalable,HANAgc}.
However, these techniques add significant time overhead in worst-case executions.
}

We present a wait-free MVGC scheme that achieves good time and space bounds, using $O(1)$ time\footnote{\hedit{For time/space complexity, we count both local and shared memory operations/objects.}} on average per allocated version and maintaining only a constant factor more versions than needed (plus an additive term).
The scheme is very flexible and it can be used in a variety of multiversioning implementations.
It uses a three-step approach that involves 1) identifying versions that can be reclaimed, including intermediate versions, 2) unlinking them from the version lists, and 3) reclaiming their memory.
To  implement these three steps efficiently, we develop
two general components---a \emph{range-tracking} data structure and a \emph{version-list}
data structure---that could be of independent interest beyond MVGC.


%
%

The range-tracking data structure is used to identify version list nodes
that are no longer needed.
It supports an \co{announce} operation that
is used by a query to acquire the current timestamp $t$ as well as 
protect any versions that were current at $t$ from being reclaimed.
A corresponding \co{unannounce} is used \y{to} indicate when the query is finished.
The data structure also supports a \co{\retireInterval} operation that is given
a version and its time \interval, and indicates that the version is no
longer the most recent---i.e., is safe to reclaim once its interval
no longer includes any announced timestamp.
When a value is updated with a new version, the previous version is
deprecated. 
A call to \co{\retireInterval} also returns a list of versions that had
previously been \co{deprecated} and \hedit{are} no longer cover any announced
timestamp---i.e., are now safe to reclaim.
%
We provide a novel implementation of the range-tracking data structure for which the amortized
number of steps per operation is $O(1)$.  
We also bound the number of versions on which \retireInterval\ has been called, but have not yet been returned.
If $H$ is the maximum, over all configurations, of the number of needed deprecated versions, then the number of deprecated versions that have not yet been returned is at most $2H + O(P^2 \log P)$, where $P$ is the number of processes.
\hedit{To achieve these time and space bounds,}
we borrow some ideas from real-time garbage collection~\cite{Baker78,BC99}, 
and add several new ideas such as batching and using a shared queue.


The second main component of our scheme is a wait-free version-list data structure
that supports efficient \hedit{(amortized constant time)} removals of nodes \hedit{from anywhere in the list.
When the \retireInterval\ operation} \er{identifies an unneeded version, we must splice it out of its version list, without knowing its current predecessor in the list, so we need a doubly-linked version list.}
\hedit{Our} doubly-linked list implementation has certain restrictions that are naturally satisfied when
maintaining version lists, for example nodes may be appended only at one end.
The challenge
is in achieving constant amortized time per \hedit{remove}, and bounded space.
Previously known concurrent doubly-linked lists~\cite{ST08,Sha15} do not meet these requirements, \hedit{requiring at least $\Omega(P)$ amortized time per \hedit{remove}.}
We first describe the implementation of our version list assuming
\hedit{a garbage collector and then we show how to manually reclaim removed nodes while maintaining our desired overall time and space bounds.}


To delete elements from the list efficiently, we leverage some recent ideas from randomized parallel list contraction~\cite{BFGS20}, which asynchronously removes elements from a list.  To avoid concurrently splicing out
adjacent elements in the list, \er{which can cause problems}, the approach defines an implicit binary tree \er{so that the list is an in-order traversal of the tree.  Only nodes corresponding to leaves of the tree, which cannot be adjacent in the list, may be spliced out.}
\hedit{Directly applying this technique, however, is not efficient in our setting.
To reduce space overhead, we had to develop intricate helping mechanisms for splicing out internal
nodes rather than just leaves.
To achieve wait-freedom, we had to skew the implicit tree so that it is right-heavy.}
The final algorithm ensures that at most $2(L-R)+O(P\log L_{max})$
nodes remain reachable in an execution with $L$ appends and $R$
removes across an arbitrary number of version lists\hedit{, and at most $L_{max}$ appends on a single version list.}
\hedit{This means the version lists store at most a constant factor more than the $L-R$ required nodes plus an additive term shared across all the version lists.}
\hedit{Combining this with the bounds from the range tracker, our MVGC scheme ensures that at most
$O(V + H + P^2 \log P + P\log L_{max})$ versions are reachable from the $V$ version lists.
This includes the current version for each list, $H$ needed versions, plus additive terms from the range tracking and list building blocks.
}

After a node has been spliced out of the doubly-linked list, its memory must be reclaimed.
This step may be handled automatically by the garbage collector in languages such as Java, but in non-garbage-collected languages, additional mechanisms are needed to safely reclaim memory.
The difficulty in this step is that while a node is being spliced out, \er{other processes traversing the list might be visiting that node.}
We use a reference counting reclamation scheme \hedit{and this requires modifying our doubly-linked list algorithm slightly to maintain the desired space bounds.}
\er{We apply an existing concurrent reference counting implementation \cite{ABW21} that}
\hedit{employs a local hash table per process which causes the time bounds of our reclamation to become amortized} \er{$O(1)$ \emph{in expectation}.}
\hedit{It also requires an additional fetch-and-add instruction, whereas the rest of our algorithms require only read and CAS.}

We apply \hedit{our MVGC scheme} to a specific multiversioning scheme~\cite{WBBFRS21a} \hedit{to generate end-to-end bounds for a full multiversioning system}.
This multiversioning scheme takes a given CAS-based concurrent
data structure and transforms it to support \hedit{complex} queries \hedit{(e.g., range queries)}
by replacing each CAS object with one that maintains a  version list.
\hedit{Overall, we ensure that the memory usage of the multiversion data structure is within a constant factor of the needed space, plus $O(P^2\log P + P^2 \log L_{max})$.}
\rebuttal{In terms of time complexity, our garbage collection scheme takes only $O(1)$ time on average for each allocated version.}

Detailed proofs of correctness and of our complexity bounds appear in the full version~\cite{arxiv}.


\section{Related Work}
\label{sec:related}
\myparagraph{Garbage Collection.}
One of the simplest, oldest techniques for garbage collection
is reference counting (RC)~\cite{correia2021orcgc,DM+01,HL+05}.
In its basic form, RC attaches to each object a counter of the number
of references to it. 
An object is reclaimed when its counter reaches zero.
Some variants of RC are wait-free~\cite{ABW21,S05}.
In Section~\ref{sec:smr-list}, we apply the RC scheme of~\cite{ABW21}
to manage version list nodes \hedit{as it adds only constant time overhead (in expectation) and it is the only concurrent RC scheme that maintains our desired time bounds.}

Epoch-based reclamation (EBR)~\cite{Fra04,B15} employs a counter that is incremented periodically
and
is used to divide the execution into epochs.
Processes read and announce the counter value at the beginning of an operation.
An object can be reclaimed only if it was retired
in an epoch preceding the oldest announced.
EBR is often the preferred choice in practice,
as it is simple and exhibits good performance.
However, 
a slow or crashed process with timestamp $t$ can prevent the reclamation
of {\it all} retired objects with timestamps larger than $t$.
EBR, or variants, are used in a variety of MVGC
schemes~\cite{FC11,neumann2015fast,WBBFRS21a}
to identify versions that are older than any query.  
\hedit{An advantage of these schemes is that identified versions can be immediately reclaimed without first being unlinked from the version lists because the section of the version list they belong to is old enough to never be traversed.}
\hedit{However, they} inherit the same problem as EBR and are not able to reclaim intermediate
versions between the oldest needed version and the current version when a long-running query holds on to an old epoch.   This
can be serious for multiversioned systems since EBR works best when
operations are short, but a key motivation for multiversioning is to support lengthy queries.


Hazard pointers (HP)~\cite{HL+05,M04} can be used to track which objects
are currently being accessed by each process and are therefore more
precise.
\hedit{Combinations of HP and EBR have been proposed 
(e.g.~\cite{RC17,WI+18-I}) 
with the goal of preserving the practical efficiency of EBR while lowering its memory usage.
However, unlike EBR, none of these techniques directly solve the MVGC problem.
Other memory reclamation schemes have been studied that require hardware support~\cite{AE+14,DH+11}
or rely on the signaling mechanism of the operating system~\cite{B15,SBM21}.
Hyaline~\cite{nikolaev2021snapshot} implements a similar interface to EBR and can be used for MVGC, but like EBR, it cannot reclaim intermediate versions.
}

\hedit{
  We are aware of \rebuttal{three} multiversioning systems based on version lists that reclaim
  intermediate versions: \rebuttal{GMV~\cite{lu2013generic}}, HANA~\cite{HANAgc} and
  Steam~\cite{bottcher2019scalable}.
  To determine which versions are
  safe to reclaim, \rebuttal{all three} systems merge the current version list for an object{}
  with the list of active timestamps to check for overlap.
  \rebuttal{The three schemes differ based on when they decide to perform this merging step and how they remove and reclaim version list nodes.}
  \rebuttal{In GMV, when an update operation sees that memory usage has passed a certain threshold, it iterates through all the version lists to reclaim versions.
  Before reclaiming a version, it has to 
  help other processes traverse the version list to ensure traversals remain wait-free.
  }%
  HANA uses a background thread to
  \rebuttal{identify and reclaim} obsolete versions while Steam scans the entire version list
  whenever a new version is added to it.
  In \rebuttal{HANA and Steam}, nodes are
  removed by locking the entire version list, \rebuttal{whereas in GMV, nodes are removed in a lock-free manner by first logically marking a node for deletion, as in Harris's linked list \cite{harris2001pragmatic}. If a \op{remove} operation in GMV experiences contention (i.e., fails a CAS), it restarts from the head of the version list. None of these three techniques ensure constant-time removal from a version list.}
  \rebuttal{Both Steam and GMV ensure $O(PM)$ space where $M$ is the amount of space required in an equivalent sequential execution.
  In comparison, we use a constant factor more than the required space plus an additive term of $O(P^2 \log P + P^2 \log L_{max})$, where $L_{max}$ is the maximum number of versions added to a single version list. This can be significantly less than $O(PM)$ in many workloads.
  
  }
  }

\myparagraph{Lock-Free Data Structures and Query Support.}
We use doubly-linked lists to store old versions.
Singly-linked lists had lock-free implementations as early as 1995 \cite{Val95}.
Several implementations of doubly-linked lists were developed later 
from multi-word CAS instructions~\cite{AH13,Gre02}, which are not widely available in hardware
\hedit{but can be simulated in software~\cite{harris2002practical,guerraoui2020efficient}.}
Sundell and Tsigas \cite{ST08} gave the first implementation from single-word CAS,
although it lacks a full proof of correctness.
Shafiei~\cite{Sha15} gave an implementation with a proof of correctness and
amortized analysis.
\hedit{Existing doubly-linked lists are not}
\er{efficient enough for our application,} \er{so we} 
give a new implementation \hedit{with better time bounds.}

Fatourou, Papavasileiou and Ruppert \cite{FPR19} used multiversioning to add
range queries to a search tree~\cite{EFR10}.
Wei et al.~\cite{WBBFRS21a} generalized this approach (and made it more efficient)
to support wait-free queries on a large class of lock-free data structures.
Nelson, Hassan and Palmieri \cite{NHP21} sketched a similar scheme, but
it is not non-blocking.
In Appendix~\ref{sec:application}, we apply our garbage collection scheme to the multiversion system of~\cite{WBBFRS21a}.

\ignore{
Since the work in~\cite{WBBFRS21a} provides the motivating application for our garbage collection scheme,
we provide some of its details.
Wei et al.~\cite{WBBFRS21a} define a {\it Camera} object, which has a set of associated {\it VersionedCAS} objects.
A \co{takeSnapshot} operation applied to the Camera object returns a handle.
That handle can later be used as an argument to a \co{readVersion} operation
on any associated VersionedCAS object to read the value
that object had at the linearization point of the \co{takeSnapshot}.
VersionedCAS objects also support \co{vRead} and \co{vCAS} operations, which behave
like ordinary read and CAS. 
Multiversioning can be applied to
a concurrent data structure implemented from CAS objects
by replacing each CAS object by a VersionedCAS object,
all associated with a single Camera.
Then, to support new, read-only query operations on the concurrent data structure (such as range-query, successor, filter, etc.), it suffices
to obtain a snapshot handle $s$, and then read the relevant portion of the data structure using $\co{readVersion}(s)$ operations to compute the result
of the query.  This approach can be used to add arbitrary queries to many standard data structures.

In the original implementation, the Camera object simply stores a counter.
A \co{takeSnapshot} operation
attempts to increment the counter and returns the old value of the counter as a snapshot handle.
Each VersionedCAS object stores a version list of all values that have  been stored in
the object.
The nodes of the singly-linked version list are called {\it vnodes}.
\ignore{Each vnode contains a pointer to a value
that has been stored in the VersionedCAS object, a pointer to the next older vnode and a timestamp
value read from the associated Camera object.
A \co{vRead} obtains the current value of the VersionedCAS object from the head of the list.
A $\co{vCAS}(old,new)$ operation checks that the head of the list represents the value $old$ and, if so,
attempts to add a new vnode to the head of the list to represent the value $new$.
A \co{versionedRead} operation using a timestamp $s$ returned by a \co{takeSnapshot}
traverses the version list until it finds a vnode whose timestamp
is less than or equal to $s$, and then returns the value stored in that vnode.
To ensure linearizability, a number of helping mechanisms are incorporated into the
implementation.  See \cite{WBBFRS21a} for details.
}%
As in any algorithm that uses version lists, there is a danger of using a great deal of memory
for out-of-date versions.
The original implementation of VersionedCAS objects used a version of EBR~\cite{Fra04} inheriting its drawbacks.
\Eric{This next bit seems repetitive of things that are now written in the intro.  And also makes the paper sound narrower than we want it to be.  Remove to save space?}
In this paper, we describe how to use our range-tracking object to collect intermediate nodes,
i.e., we allow vnodes in between two required vnodes to be collected.
This requires splicing those vnodes out of the version list. We design a special-purpose
doubly-linked list implementation to do this efficiently.

Wei et al.'s implementation of VersionedCAS objects appears in Figure~\ref{fig:vcas-alg-youla-gc}
of Appendix~\ref{vcas-description}, where additions we made to the algorithm for memory management are shown in blue;
the blue part illustrates the simplicity of applying our scheme.
}

\ignore{
To facilitate garbage collection, we add an \op{unreserve} operation to the Camera object, which a process can use
to release a particular snapshot handle, guaranteeing that the process will not subsequently use it
as an argument to \co{readVersion}.
}

\ignore{


\subsection{VersionedCAS Objects}
\label{vcas-description}

\Eric{Is this section too long/detailed?}

Wei et al.~\cite{WBBFRS21a} introduced a technique for supporting snapshots on collections of CAS objects.
They defined a {\it Camera} object, which has a set of associated {\it VersionedCAS} objects.
The \co{takeSnapshot} operation of the Camera object returns a handle.
That handle can later be used as an argument to a \co{readVersion} operation
on any associated VersionedCAS object to read the value
that object had at the linearization point of the \co{takeSnapshot} operation.
The VersionedCAS objects also support \co{vRead} and \co{vCAS} operations, which behave
like ordinary read and CAS operations on standard CAS objects.

Multiversioning can be applied to
a concurrent data structure implemented from CAS objects
by replacing each CAS object by a vCAS object,
all associated with a single Camera object.
Then, to support new, read-only query operations on the concurrent data structure, it suffices
to obtain a snapshot handle $s$, and then read the relevant portion of the data structure using $\co{readVersion}(s)$ on each vCAS object
to compute the result of the query.  This approach was used to add several types of queries to a number of
standard data structures.

Wei et al.'s implementation of VersionedCAS objects appears as the black text of
Figure~\ref{fig:vcas-alg-youla}.
(Additions we make to the algorithm for memory management are shown in blue.  These will be explained in later sections.)
In the original implementation, the Camera object simply stores a counter.
A \co{takeSnapshot} operation
attempts to increment the counter and returns the old value of the counter as a snapshot handle.
Each VersionedCAS object stores a version list of all values that have  been stored in
the object.
The nodes of the singly-linked version list are called {\it vnodes}.
Each vnode contains a pointer to a value
that has been stored in the VersionedCAS object, a pointer to the next older vnode and a timestamp
value read from the associated Camera object.
A \co{vRead} obtains the current value of the VersionedCAS object from the head of the list.
A $\co{vCAS}(old,new)$ operation checks that the head of the list represents the value $old$ and, if so,
attempts to add a new vnode to the head of the list to represent the value $new$.
A \co{versionedRead} operation using a timestamp $s$ returned by a \co{takeSnapshot}
traverses the version list until it finds a vnode whose timestamp
is less than or equal to $s$, and then returns the value stored in that vnode.
To ensure linearizability, a number of helping mechanisms are incorporated into the
implementation.  See \cite{WBBFRS21a} for details.

As in any algorithm that uses version lists, there is a danger of using a great deal of memory
for out-of-date versions.
The original implementation of VersionedCAS objects used a version of epoch-based memory
reclamation \cite{Fra04}.
However, this approach has the disadvantage that if a slow process holds a very old snapshot
timestamp, it can prevent all vnodes with larger timestamps from being collected, since it is
possible that a \co{versionedRead} may have to traverse such vnodes.
Here, we describe how to use our range tracking object to track exactly which vnodes (and
associated data)
must be retained because they might be the result of a \co{versionedRead} using one
of the currently held snapshot timestamps.
In particular, we allow vnodes in between two required vnodes to be collected.
This requires splicing those vnodes out of the version list. We design a special-purpose
doubly-linked list implementation to do this efficiently.

To facilitate garbage collection, we add an \op{unreserve} operation to the Camera object, which a process can use
to release a particular snapshot handle, guaranteeing that the process will not subsequently use it
as an argument to \co{readVersion}.

Due to lack of space, Wei et al.'s implementation of VersionedCAS objects appears in the appendix
(Figure~\ref{fig:vcas-alg-youla}), where additions we made to the algorithm for memory management are shown in blue;
these additions also illustrate the simplicity of applying our scheme.

\Eric{Following note appeared in file.  Do we need to say something about it?:
In \op{readVersion}, the \op{find} operation might start traversing from a node who's timestamp is still \op{TBD}. This is ok because that node will be set to a larger timestamp than what we are looking for.
}

\begin{figure*}[t]
	\begin{minipage}[t]{.43\textwidth}
		\begin{lstlisting}[linewidth=.99\columnwidth, numbers=left,frame=none]
class Camera {
	int timestamp;
	@\hil{RangeTracker<VNode*> rt;}@
	Camera() { timestamp = 0; }   @\label{line:ss-con}@
	
	int takeSnapshot() {
		@\hil{int ts = rt.\Announce(\&timestamp);}\label{line:read-time}@
		CAS(&timestamp, ts, ts+1);    @\label{line:inc-time}@
		@\hil{return ts;}@ }
	
	@\hil{void unreserve(ts) \{ }@
		@\hil{rt.\Unannounce(ts); \} \}; }@

class VNode {
	Value val; VNode* left; int ts;
	// the remaining variables are only
	// used by remove()
	@\hil{VNode* right; int num; }@
	@\hil{int counter; int status;}@
	VNode(Value v) {
		val = v; ts = TBD; } };

class VersionedCAS {
	VNode* VHead;
	Camera* Cam;
	VersionedCAS(Value v, Camera* c) {           @\label{line:vcas-con}@
		Cam = c;
		VNode* node = new VNode(v);
		@\hil{tryAppend(\&VHead, null, node);}@
		initTS(VHead); }			
			\end{lstlisting}
		\end{minipage}\hspace{3mm}
		\begin{minipage}[t]{.58\textwidth}
			\StartLineAt{31}
			\begin{lstlisting}[linewidth=.99\textwidth, numbers=left, frame=none]
	void initTS(VNode* n) {
		if(n->ts == TBD) {               @\label{line:init1}@
			int curTS = Cam->timestamp;       @\label{line:readTS}@
			CAS(&(n->ts), TBD, curTS); } }   @\label{line:casTS}@
	
	Value readVersion(int ts) {
		VNode* node = VHead;     @\label{line:readss-head}@
		initTS(node);                     @\label{line:readss-initTS}@
		while (node->ts > ts)
		  node = node->left;
		return node->val; } // node cannot be null       @\label{line:readss-ret}@
	
	Value vRead() {
		VNode* head = VHead;         @\label{line:readHead}@
		initTS(head);                         @\label{line:read-initTS}@
		return head->val; }
	
	bool vCAS(Value oldV, Value newV) {
		VNode* head = VHead;         @\label{line:vcas-head}@
		initTS(head);                         @\label{line:vcas-initTS}@
		if(head->val != oldV) return false;   @\label{line:vcas-false1}@
		if(newV == oldV) return true;         @\label{line:vcas-true1}@
		VNode* newN = new VNode(newV);        @\label{line:newnode}@
		if(@\hil{tryAppend(\&VHead, head, newN)}@) {@\label{line:append}@
			initTS(newN);                       @\label{line:initTSnewN}@
			@\hil{List<VNode*> redundant = }@
				@\hil{Cam.rt.\retireInterval(head, head.ts, newN.ts);}@
			@\hil{forevery(node in redundant)   }@
				@\hil{remove(node);}@
			return true; }                      @\label{line:vcas-true2}@
		else {
			delete newN;                        @\label{line:scdelete}@
			initTS(VHead);                      @\label{line:vcas-initTSHead}@
			return false; } } };                   @\label{line:vcas-false2}@
		\end{lstlisting}
	\end{minipage}
	\caption{Indirect \vCAS{} implementation (black text) using the range-tracking and version list maintenance interface (blue text). Functions \texttt{announce}, \texttt{unannounce} and \retireInterval\ are defined in the range-tracking object (see Section \ref{sec:identify}). Function \texttt{tryAppend} is defined in the version list maintenance interface (see Section \ref{sec:vlists}).
\Yihan{For RangeTracker, we are specifying the template, but in the code in \ref{fig:range-tracker} RangeTracker doesn't contain the template.}
	}
	\label{fig:vcas-alg-youla}
\end{figure*}

}
}


\section{Preliminaries}

We use a standard model with asynchronous, crash-prone processes that access shared memory
using CAS, read and write instructions.  
For our implementations of data structures, we bound the number of steps needed to perform operations,
and the number of shared objects that are allocated but not yet reclaimed. 

\label{dest}
We also use {\em destination objects}~\cite{BW20a}, which are single-writer objects that store a value and support 
 \op{swcopy} operations in addition to standard reads and writes.
A \op{swcopy(ptr)}  atomically reads the value pointed to by \co{ptr}, and
copies the value into the destination object.
Only the owner of a destination object can perform \op{swcopy} and \op{write}; any process may \op{read} it.
Destination objects can be implemented from CAS  so that all three
operations take $O(1)$ steps~\cite{BW20a}.
They are used to implement our range-tracking objects in Section~\ref{sec:identify}.

\myparagraph{Pseudocode Conventions.}
We use syntax similar to C++.
The type \co{T*} is a pointer to an object of type \co{T}.
\co{List<T>} is a \co{List} of objects of type \co{T}.
If \co{x} stores a pointer to an object, then \co{x->f} is that object's member \co{f}.
If \co{y} stores an object, \co{y.f} is that object's member \co{f}.


\section{Identifying Which Nodes to Disconnect from the Version List}
\label{sec:identify}

We present the \emph{range-tracking} object, which we use
to identify version nodes that are safe to disconnect from version lists because
they are no longer needed.
To answer a query, a slow process
may have to traverse an entire version list when searching for a
very old version. 
However, 
we need only maintain list nodes that are the potential target nodes of such queries.
The rest 
may be spliced out of the list \er{to improve space usage and traversal times}.


\er{We assign  to each version node \X\ an interval that represents
the period of time when \X\ was the current version.
When the next version \Y\ is appended to the version list, 
\X\ ceases to be the current version and becomes a potential candidate for
removal from the version list (if no query needs it).
Thus, the left endpoint of \X's interval is the timestamp assigned to \X\ by the multiversioning system,
and the right endpoint is the timestamp assigned to~\Y.}

We assume that a query starts by announcing a timestamp $t$,
and then proceeds to access, for each relevant object $o$, its corresponding version at time $t$,
by finding the first node in the version list with timestamp at most $t$ (starting from the most recent version).
Therefore, an announcement of $t$ means it is unsafe to disconnect any nodes whose intervals contain $t$.

\er{As many previous multiversioning systems~\cite{FC11,HANAgc,NHP21,neumann2015fast,WBBFRS21a}
align with the general scheme discussed above,
we define the range-tracking object to abstract the problem of identifying  versions that are not needed.  We believe this abstraction is of general interest.}

\begin{definition}[Range-Tracking Object]
\label{def:range tracker}
	A range-tracking object maintains a multiset $A$ of integers, and a set $O$
	of triples of the form \co{(o,low,high)} where \co{o} is an object of some type $T$ and
	$\co{low}\leq \co{high}$ are integers.
	Elements of $A$ are called {\em active announcements}.
	If $\co{(o,low,high)}\in O$ then \co{o} is a {\em deprecated object} with associated
	half-open interval $[\co{low}, \co{high})$.
	The range-tracking object supports the following operations.
\begin{itemize}
	\item \co{\Announce(int* ptr)} atomically reads the integer pointed to by \co{ptr}, adds the value read to $A$, and returns the value read.
	\item \co{\Unannounce(int i)} removes one copy of \co{i} from $A$, rendering the announcement inactive.
	\item \co{\retireInterval(T* o, int low, int high)}, where $\co{low} \leq \co{high}$,
	adds the triple \co{(o,low,high)} to $O$
        and returns a set $S$, which contains the deprecated objects of a set $O'\subseteq O$ such that for any $o\in O'$, the interval of $o$
        does not intersect $A$, and removes $O'$ from~$O$.
\end{itemize}
\end{definition}

The specification of Definition~\ref{def:range tracker} should be paired with a progress property
\er{that rules out the trivial implementation in which}
\retireInterval\ always returns an empty set.
We do this by bounding the number of deprecated objects that have not been returned by \retireInterval.

\begin{assumption}
\label{ass}
To implement the range-tracking object, we assume the following.
\hspace{1mm}
\begin{enumerate}
	\item
	\label{ass:ascending}
	A process's calls to \retireInterval\ have non-decreasing values of parameter \co{high}.
	\item
	\label{ass:previously}
	If, in some configuration $G$, there is a pending \Announce\ whose argument is a pointer to an integer variable \co{x}, then the value of \co{x} at $G$ is greater than or equal to the \co{high} argument of every \retireInterval\ that has been invoked before $G$.
%
	\item
	\label{ass:announce}
	For every process $p$, the sequence of invocations to \Announce\ and \Unannounce\
		performed by $p$ should have the following properties:
		a) it should start with \Announce;
		b) it should alternate between invocations of \Announce\ and invocations of \Unannounce;
		c) each \Unannounce\ should have as its argument the integer returned by the preceding \Announce.
	\item
	\label{ass:distinct}
	Objects passed as the first parameter to \retireInterval\ operations are distinct.
\end{enumerate}
\end{assumption}

In the context we are working on, we have a non-decreasing
integer variable that works as a global timestamp,
and is passed as the argument to every \Announce\ operation.
Moreover, the \co{high} value passed to each \retireInterval\ operation is a value that has been read from
this variable.  This ensures that parts \ref{ass:ascending} and \ref{ass:previously}
of Assumption \ref{ass} are satisfied. The other parts of the assumption are also
satisfied quite naturally for our use of the range-tracking object, and
we believe that the assumption is reasonably general. 
Under this assumption, we present and analyze a linearizable implementation of the range-tracking object
in Section~\ref{sec:rt-implem}. 

\subsection{A Linearizable Implementation of the Range-Tracking Object}
\label{sec:rt-implem}

\begin{figure*}[t]\small
	\begin{minipage}[t]{.48\textwidth}
		\begin{lstlisting}[linewidth=.99\columnwidth, numbers=left,frame=none, language=C++, morekeywords=class]
class Range { T* t, int low, int high; };	@\label{range-object}@
class RangeTracker {
	// global variables
	Destination Ann[P];
	Queue<List<Range>> Q; //initially empty
	// thread local variables
	List<Range> LDPool; // initially empty
	Array<int> @\sortAnnouncements @() {
		List<int> result;
		for(int i = 0; i < P; i++) {
			int num = Ann[i].read();      @\label{line:Ann_read}@
			if(num != @\Empty @) result.append(num); }
		return sort(toArray(result)); }
	
	List<T*>, List<Range> intersect(
	    List<Range> MQ, Array<int> ar) {
		Range r; int i = 0;
		List<T*> Redundant; 
		List<Range> Needed; 
		for(r in MQ) {     @\label{line:for}@
			while(i < ar.size() &&
						ar[i] < r.high) i++;
			if(i == 0 || ar[i-1] < r.low)  @\label{line:intersect-test}@
				Redundant.append(r.t);   @\label{line:RedAppend}@
			else Needed.append(r); }   @\label{line:NeedAppend}@
		return <Redundant, Needed>; }	
		\end{lstlisting}
	\end{minipage}\hspace{3mm}
	\begin{minipage}[t]{.55\textwidth}
		\StartLineAt{27}
		\begin{lstlisting}[linewidth=.99\textwidth, numbers=left, frame=none]
int Announce(int* ptr) {
	Ann[p].swcopy(ptr);			@\label{line:swcopy}@
	return Ann[p].read(); }     @\label{line:swread}@

void @\Unannounce@() { Ann[p].write(@\Empty @); }			@\label{line:write-empty}@

List<T*> @\retireInterval @(T* o, int low, int high) {
	List<T*> Redundant;
	List<Range> Needed, Needed1, Needed2;
	// local lists are initially empty
 	LDPool.append(Range(o, low, high)); @\label{line:retint-lin}@
 	if(LDPool.size() == B) { @\label{line:ifFreeNeeded}@
 		List<Range> MQ = merge(Q.deq(),Q.deq()); @\label{line:deq}@
 		Array<int> ar = @\sortAnnouncements @(); @\label{line:scan}@
		Redundant, Needed = intersect(MQ, ar); @\label{line:intersect}@		
		if(Needed.size() > 2*B) {    @\label{line:checksize}@
			Needed1, Needed2 = split(Needed);   @\label{line:split}@
			Q.enq(Needed1); @\label{line:enq1}@
			Q.enq(Needed2); } @\label{line:enq2}@
		else if(Needed.size() > B) {    @\label{line:ifNeeded}@
			Q.enq(Needed); }	@\label{line:enq3}@
		else {				@\label{line:else}@
			LDPool = merge(LDPool,Needed); }	@\label{line:mergeLDP}@
	 	Q.enq(LDPool); @\label{line:enq-retired}@
	 	LDPool = empty list; } @\label{line:list-clear}@
 	return Redundant; } };
		\end{lstlisting}
	\end{minipage}
	\vspace{-.1in}
	\caption{Code for process \co{p} for our linearizable implementation of a range-tracking object.
	}
	\label{fig:range-tracker}
\end{figure*}

Our implementation, \RT, is shown in Figure~\ref{fig:range-tracker}.
Assumption \ref{ass}.\ref{ass:announce} means that each process
can have at most one active announcement at a time.
So, \RT\ maintains a shared array \announce\ of length $P$
to store active announcements.
\co{\announce[p]} is a destination object (defined in Section \ref{dest}) that is owned by process \co{p}.
Initially, \co{Ann[p]} stores a special value \Empty.
To announce a value, a process \co{p} calls \op{swcopy} (line~\ref{line:swcopy}) to copy the current timestamp into \co{\announce[p]} and returns the announced value (line~\ref{line:swread}).
To deactivate an active announcement, $p$ 
writes \Empty\ into \co{\announce[p]}\ (line~\ref{line:write-empty}).
Under Assumption \ref{ass}.\ref{ass:announce},
the argument to \Unannounce{} must match the argument of the process's previous \Announce,
so we suppress \Unannounce's argument in our code.
An \Announce\ or \Unannounce\ performs O$(1)$ steps.

A Range object (line \ref{range-object})
stores the triple \co{(o,low,high)} for a deprecated object \co{o}.
It is created (at line \ref{line:retint-lin}) during a \retireInterval\ of \co{o}.
\RT\ maintains the deprecated objects as {\em pools} of Range objects.
 Each pool is sorted by its elements' \co{high} values.
Each process maintains a local pool of deprecated objects, called \retired.
To deprecate an object,
a process simply appends its Range
to the process's local \retired\ (line~\ref{line:retint-lin}).  Assumption~\ref{ass}.\ref{ass:ascending}
implies that objects are appended to \retired\ in non-decreasing order 
of their \co{high} values.

We wish to ensure that most deprecated objects are eventually returned by 
a \retireInterval\ operation so that they can be freed.
If a process $p$ with a large \retired\ ceases to take steps, it can
cause all of those objects to remain unreturned.
Thus, when the size of $p$'s \retired\ hits a threshold $B$, they
are flushed to a shared queue, \queue, so that other processes can also return them.
The elements of \queue\ are pools that each contain
$B$ to $2B$ deprecated objects.
For the sake of our analysis, we choose $B=P\log P$.
When a flush is triggered, 
$p$ dequeues two pools from \queue\ and processes them as a batch
to identify the deprecated objects whose intervals do not intersect with the values in \co{Ann},
and return them. 
The rest of the dequeued objects, together with those in \retired, 
are stored back into~\co{Q}.
We call these actions (lines~\ref{line:ifFreeNeeded}--\ref{line:list-clear}), the {\em flush phase} of \retireInterval.
A \retireInterval\ without a flush phase returns an empty set.

During a flush phase, 
a process $p$ dequeues two pools from \queue\  and
merges them  (line~\ref{line:deq}) into a new pool, \co{MQ}. 
Next, $p$ makes a local copy of \announce\ and sorts it  (line~\ref{line:scan}).
It then uses the \co{intersect} function  (line~\ref{line:intersect}) to partition \co{MQ} into two sorted lists:
\Redundant\ contains objects whose intervals do not intersect the local copy
of \announce, and \Needed\ contains the rest.
Intuitively, a deprecated object in \co{MQ} is put in \Redundant\
if the \co{low} value of its interval is larger than the announcement value immediately before its \co{high} value.
Finally, $p$ enqueues the \Needed\ pool with its \retired\ into \queue\ (lines~\ref{line:enq1}--\ref{line:enq3} and line~\ref{line:enq-retired}).
To ensure that the size of each pool in \queue\ is between $B$ and $2B$,
the \Needed\ pool is split into two halves if it is too large (line~\ref{line:split}), or is merged with \retired\ if it is too small (line~\ref{line:mergeLDP}).
A flush phase is performed once every $P\log P$ calls to \retireInterval,
and the phase executes $O(P\log P)$ steps. Therefore, the amortized number of steps for
\retireInterval\ is $O(1)$.

The implementation of the concurrent queue \co{Q} should ensure that an element can be enqueued or dequeued in $O(P \log P)$ steps.
The concurrent queue presented in~\cite{FK14} has step complexity $O(P)$ and thus ensures these bounds.
To maintain our space bounds, the queue nodes must be reclaimed.
This can be achieved if we apply hazard-pointers on top of the implementation in~\cite{FK14}.
If \co{Q} is empty, then \co{Q}.\op{deq()} returns an empty list.

We sketch the proofs of the following three theorems.  For detailed proofs, see~\cite{arxiv}.


\begin{restatable}{theorem}{RTcorrect}
\label{thm:rt-lin}
If Assumption~\ref{ass} holds, then \RT\ is a linearizable implementation of a range-tracking object.
\end{restatable}

The linearization points used in the proof of Theorem \ref{thm:rt-lin} are defined as follows.
An \Announce\ is linearized at its \op{swcopy} on line~\ref{line:swcopy}.
An \Unannounce\ is linearized at its \op{write} on line~\ref{line:write-empty}.
A \retireInterval\ is linearized at line \ref{line:enq-retired} if it executes that line, or at line \ref{line:retint-lin} otherwise.


The most interesting part of the proof concerns
a \retireInterval\ operation $I$ with a flush phase. 
$I$ dequeues two pools from \queue\ as \co{MQ} and decides which objects in \co{MQ} 
to return based on the local copy of \co{Ann} array.  To show linearizability, we must also show
that intervals of the objects returned by $I$ do not intersect the \co{Ann} array at the linearization point of $I$.
Because of Assumption \ref{ass}.\ref{ass:previously}, values written into \co{Ann} after the pools are dequeued cannot be contained in the intervals in \co{MQ}.
Thus, if an object's interval does not contain the value $I$ read from \co{Ann[i]},
it will not contain the value in \co{Ann[i]} at $I$'s linearization point.

\begin{restatable}{theorem}{RTtime}
\label{step1-time}
In the worst case, \Announce\ and \Unannounce\ take $O(1)$ steps, while \retireInterval\ takes $O(P\log P)$ steps.
The amortized number of steps performed by each operation is $O(1)$.
\end{restatable}

Let $H$ be the maximum, over all configurations in the execution, of the number of {\it needed} deprecated objects, i.e., those whose intervals contain an active announcement. 

\begin{restatable}{theorem}{RTspace}
\label{step1-space}
At any configuration, the number of deprecated objects
that have not yet been returned by any instance of \retireInterval\ is at most $2H + 25 P^2\log P$.
\end{restatable}

At any time, each process holds at most $P\log P$ deprecated objects in \retired\ and at most
$4P\log P$ that have been dequeued from \queue\ as part of a flush phase.
We prove by induction that the number of deprecated objects in \queue\ at a configuration $G$ is at most
$2H+O(P^2\log P)$. 
Let $G'$ be the latest configuration before $G$ such that all pools in \queue\ at $G'$
are dequeued between $G'$ and $G$.  Among the dequeued pools, only the objects that were needed at $G'$
are re-enqueued into \queue, and there are at most $H$ such objects.
Since we dequeue two pools (containing at least $B$ elements each) each time we enqueue $B$ {\it new} objects
between $G'$ and $G$, this implies that the number of such new objects is at most half the number of objects
in \queue\ at $G'$ (plus $O(P^2\log P)$ objects from flushes already in progress at $G'$).  Assuming the bound on the size of \queue\ holds at $G'$, this allows us to prove the bound at $G$.  

The constant multiplier of $H$ in Theorem \ref{step1-space} can be made arbitrarily close to 1
by dequeuing and processing $k$ pools of \queue\ in each flush phase instead of two.
The resulting space bound would be $\frac{k}{k-1}\cdot H + \frac{(2k+1)(3k-1)}{k-1}\cdot P^2\log P$.
This would, of course, increase the constant factor in the amortized number of steps performed by \retireInterval\ (Theorem \ref{step1-time}).

\ignore{

\subsection{Application to Hazard Eras}

\here{This section is still just rough notes.}
\Eric{Perhaps move discussion outside first 15 pages?}
\Hao{We can put something short in the body and move the details to the appendix.}
\Youla{If we are to say something, it should go to Related Work but the material below
sounds too detailed to appear there, so I have placed it in ignore. Feel free to bring it back
if needed.}

We improve Hazard Eras in three ways. First, we take advantage of the fact that each process's local retired list is sorted by end point to avoid looping through the announcement array once for each retired node (this idea is captured by our \op{intersect} function). Next, we break the retired list into batches of size $L$ to avoid scanning the entire retired list and collecting only a few nodes. This gives our retire good amortized complexity and we prove that it only increases space by a small constant fraction. We borrow ideas from the GC literature in this step. Finally, we prevent a single slow thread from holding on to too much garbage by periodically pushing onto a global queue. This ensures that each slow thread holds onto at most $O(L)$ nodes. This is mostly a problem in theory and we do not expect it to show up very often in practice.

Hazard Eras cannot directly use our \op{announce} operation because it requires announcing the timestamp atomically at the read of \co{ptr}.
Ideas from our interface can be used to implement HE, but using the interface directly might lead to high overheads from \op{swcopy}. Maybe we want to make this interface lower level.

}


\section{Maintaining Version Lists}
\label{sec:vlists}

We use a restricted version of a doubly-linked list to maintain each version list so that
we can more easily remove nodes from the list when they are no longer needed.
We assume each node has a timestamp field.
The list is initially empty and provides the following operations.
\begin{itemize}
	\item \op{tryAppend(Node* old, Node* new)}: Adds \op{new} to the head of the list and returns true if the current head is \op{old}. Otherwise returns false. Assumes \op{new} is not null.
	\item \op{getHead()}: Returns a pointer to the \op{Node} at the head of the list (or \co{null} if list is empty).
	\item \op{find(Node* start, int ts)}: Returns a pointer to the first \op{Node}, starting from \op{start} and moving away from the head of the list, whose timestamp is at most \op{ts} (or \co{null} if no such node exists). 
	\item \op{remove(Node* n)}: Given a previously appended \op{Node}, removes it from the list.
\end{itemize}

\er{To obtain an efficient implementation, we assume several preconditions, summarized in Assumption \ref{list-assumptions} (and stated more formally in the full version \cite{arxiv}).  
A version should be removed from the object's version list only if it is not current: either it has been 
superseded by another version (\ref{list-assumptions}.\ref{append-before-remove-inf}) 
or, if it is the very last version, the entire list is 
no longer needed (\ref{list-assumptions}.\ref{remove-pre-inf}).
Likewise, a version should not be removed if a \co{find} is looking for it (\ref{list-assumptions}.\ref{removals-safe-inf}), which can be guaranteed using our range-tracking object.
We allow flexibility in the way timestamps are assigned to versions.
For example, a timestamp can be assigned to a version after appending it to the list.
However, some assumptions on the behaviour of timestamps are needed to ensure that responses to \co{find} operations are properly defined (\ref{list-assumptions}.\ref{append-pre-inf}, \ref{list-assumptions}.\ref{find-append-inf}).
}

\begin{assumption}\mbox{ }
\label{list-assumptions}
\begin{enumerate}
\er{\item
\label{append-before-remove-inf}
Each node (except the very last node) is removed only after
the next node is appended.  
\item
\label{remove-pre-inf}
No \co{tryAppend}, \op{getHead} or \co{find} is called after a \co{remove} on the very last node.
\item
\label{removals-safe-inf}
After \co{remove(X)} is invoked, no pending or future \co{find} operation should be seeking a timestamp in the interval between \X's timestamp and its successor's.
\item
\label{append-pre-inf}
Before trying to append a node after} \y{a node} \B\ \er{or using \B\ as the starting point for a \co{find}, \B\ has been the head of the list and its
 timestamp has been set.
A node's timestamp does not change after it is set.
Timestamps assigned to nodes are non-decreasing. 
\item
\label{find-append-inf}
If a \co{find(X,t)} is invoked, any node appended after \X\ has a higher timestamp than \co{t}.
\item
\label{no-duplicates-inf}
Processes never attempt to append the same node to a list twice, or to remove it twice.
}
\end{enumerate}
\end{assumption}

\Youla{Maybe move assumptions later. Perhaps before the sketch of proofs if there's no references in 5.1.}

\subsection{Version List Implementation}

Pseudocode for our list implementation is in Figure~\ref{fig:list-alg-full}.
A \co{remove(X)} operation first marks the node \X\ to be deleted by setting a \co{status} field of \X\ to \co{marked}.
We refer to the subsequent physical removal of \X\ as {\it splicing}
\X\ out of the list.
\ignore{The \co{remove(X)} operation may or may not splice \X\ out of the list.
In the latter case, other operations may help to splice \X\ later.}

\begin{figure}[t]
\input{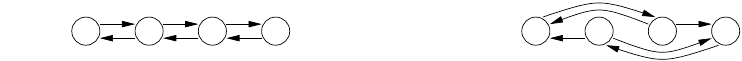_t}
\caption{An example of incorrect removals.}
\label{bad-example}
\end{figure}

Splicing a node \B\ from a doubly-linked list requires  finding
its left and right neighbours, \A\ and \C,
and then updating the pointers in \A\ and \C\ to point  to each other.
Figure~\ref{bad-example} illustrates the problem that could arise if  adjacent nodes \B\ and \C\ are
spliced out concurrently.
The structure of the doubly-linked list becomes corrupted:  \C\ is still reachable when traversing the
list towards the left, and \B\ is still reachable when traversing towards the right.
The challenge of designing our list implementation is to coordinate splices
to avoid this situation.

We begin with an idea that has been used for parallel list contraction \cite{SGBFG15}.
We assign each node a priority value and splice  a node out only if its priority
is greater than both of its neighbours' priorities.
This ensures that two adjacent nodes cannot be spliced concurrently.

Conceptually, we can define a {\it priority tree} corresponding to a list
of nodes with priorities as follows.
Choose the node with minimum priority as the root.
Then, recursively define the left and right subtrees of the root by applying the same procedure to the
sublists to the left and right of the root node.
The original list is an in-order traversal of the priority tree.
See Figure \ref{priority-tree} for an example.
We describe below how we choose priorities to ensure that \y{(1) there is always a unique minimum in a sublist corresponding to a subtree
(to be chosen as the subtree's root), and (2)} if $L$ nodes are appended to the list, the height of the priority
tree is $O(\log L)$.
We emphasize that the priority tree is not actually represented in memory; it is simply
an aid to understanding the design of our implementation.

The requirement that a node is spliced out of the list only if its priority is greater than its neighbours
corresponds to requiring that we splice only nodes whose descendants in the priority tree have all already been spliced out of the list.
To remove a node that still has unspliced descendants, we simply mark it as logically
deleted and leave it in the list.
If \X's descendants have all been spliced out, then \X's parent \Y\ in the priority tree is the neighbour of \X\
in the list with the \er{larger} priority.
An operation that splices \X\ from the list then attempts to {\it help} splice \X's parent \Y\
(if \Y\ is marked for deletion and \Y\ is larger than its two neighbours),
and this process continues up the tree.
Conceptually, this means that if a node \Z\ is marked  but not spliced,
the last descendant of \Z\ to be spliced is also responsible for splicing \Z.

\begin{figure}
\hspace*{22mm}\input{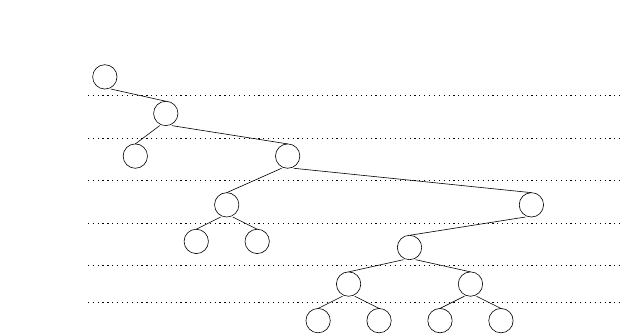_t}
\caption{A list and its priority tree.
}
\label{priority-tree}
\end{figure}

In this scheme, an unmarked node can block  its ancestors in the priority tree
from being spliced out of the list.
For example, in Figure \ref{priority-tree}, if the nodes with counter values 10 to 16
are all marked for deletion, nodes 11, 13 and 15 could be spliced out immediately.  After
13 and 15 are spliced, node 14 could be too.  The unmarked node 9
prevents the remaining nodes 10, 12 and 16 from being spliced,
since each has a neighbour with higher priority.
Thus, an unmarked node  could prevent up to
$\Theta(\log L)$ marked nodes from being spliced out of the list.

Improving this space overhead factor to $O(1)$ requires an additional, novel mechanism.
If an attempt to remove node \B\ observes that  \B's left neighbour \A\ is unmarked
and \B's priority is greater than \B's right neighbour \C's priority, we allow \B\ to be spliced
out of the list using a special-purpose routine called \sul, even if \A's priority is greater than \B's.
In the example of the previous paragraph, this would allow node 10 to be spliced out
after~11.  Then, node 12 can be spliced out after 10 and 14, \er{again using \sul}, and finally node 16 can be spliced out.
A symmetric routine \sur\ applies if \C\ is unmarked and \B's priority is greater than \A's.
This additional mechanism of splicing out nodes when one neighbour is unmarked
allows us to splice out all nodes in a string of consecutive marked nodes,
except possibly one of them,
which might remain in the list if both its neighbours are unmarked and have higher priority.
However, 
during the \sul\ routine that is splicing out~\B, \A\ could become marked.  If \A's priority
is greater than its two neighbours' priorities, there could then be simultaneous splices
of \A\ and \B.
To avoid this, 
instead of splicing out \B\ directly, the \sul\ installs a
pointer to a {\it Descriptor} object into node \A,
which describes the splice of \B.  If \A\ becomes marked, the information in the
Descriptor is used to {\it help} complete the splice of \B\ before \A\ itself is spliced.
\er{Symmetrically, a \sur\ of \B\ installs a Descriptor in~\C.}

Multiple processes may attempt to splice the same node \B, either because of the helping
coordinated by Descriptor objects or because the process that spliced \B's last descendant
in the priority tree will also try to splice \B\ itself.
To avoid unnecessary 
work,
processes use a CAS to 
change the status of \B\ 
from \co{marked} to \co{finalized}.  Only the process that succeeds in this CAS
has the responsibility to recursively  splice \B's ancestors.
(In the case of the \sul\ and \sur\ routines, only the process that successfully
installs the Descriptor  recurses.)
If one process responsible for removing a node (and its ancestors) 
stalls,
it could leave $O(\log L)$ marked nodes in the list; this is the source of \y{an} {\it additive}
$P\log L$ term in \y{the bound we prove} on the number of unnecessary nodes in the list.

\begin{figure*}[!thp]\small
	\hspace{-10mm}
	\begin{minipage}[t]{.55\textwidth}
		\begin{lstlisting}[linewidth=.99\columnwidth, numbers=left,frame=none]
class Node {
	Node *left, *right; // initially null
	enum status {unmarked,marked,finalized};
			// initially unmarked
	int counter; // used to define priority
	int priority; // defines implicit tree
	int ts; // timestamp
	Descriptor *leftDesc, *rightDesc;
			// initially null
};

class Descriptor { Node *A, *B, *C; };
Descriptor* frozen = new Descriptor();

class VersionList {
	Node* Head;
	// public member functions:
	Node* getHead() {return Head;}
	
	Node* find(Node* start, int ts) {
		VNode* cur = start;                   @\label{find-init}@
		while(cur != null && cur->ts > ts) @\label{find-read}@
			cur = cur->left;                @\label{find-advance}@
		return cur; }
	
	bool tryAppend(Node* B, Node* C) {
		// B can be null iff C is the initial node
		if(B != null) {
			C->counter = B->counter+1;
			Node* A = B->left;							@\label{mr-read-left}@
			// Help tryAppend(A, B)
			if(A != null) CAS(&(A->right), null, B); @\label{mr-cas1}@
		} else C->counter = 2;
		C->priority = p(C->counter);					@\label{set-priority}@
		C->left = B; 									@\label{set-left}@
		if(CAS(&Head, B, C)) {							@\label{head-cas}@
			if(B != null) CAS(&(B->right), null, C); @\label{mr-cas2}@
			return true;
		} else return false; }
	
	// public static functions:
	void remove(Node* B) {
		// B cannot be null
		B->status = marked;					@\label{mark}@
		for F in [leftDesc, rightDesc] {
			repeat twice {					@\label{repeat-twice}@
				Descriptor* desc = B->F;		@\label{read-desc1}@
				help(desc);						@\label{remove-help}@
				CAS(&(B->F), desc, frozen); } } @\label{freeze-desc}@
		removeRec(B); }						@\label{call-rr1}@
	
	// private helper functions:
	bool validAndFrozen(Node* D) {
		// rightDesc is frozen second
		return D != null && D->rightDesc == frozen; }
	
	void help(Descriptor* desc) {
		if(desc != null && desc != frozen) 			@\label{test-frozen}@
			splice(desc->A, desc->B, desc->C); }	@\label{help-splice}@
	
	int p(int c) {
		k = floor(log@$_2$@(c));
		if(c == 2^k) return k;
		else return 2k + 1 - lowestSetBit(c); }
		\end{lstlisting}
\end{minipage}\hspace{4mm}
\begin{minipage}[t]{.62\textwidth}
	\StartLineAt{65}
	\begin{lstlisting}[linewidth=.99\textwidth, numbers=left, frame=none]
// private helper functions continued:
void removeRec(Node* B) {
	// B cannot be null
	Node* A = B->left;							@\label{read-left}@
	Node* C = B->right;							@\label{read-right}@
	if(B->status == finalized) return;          @\label{test-unfinalized}@
	int a, b, c;
	if(A != null) a = A->priority;
	else a = 0;
	if(C != null) c = C->priority;
	else c = 0;
	b = B->priority;
	if(a < b > c) { @\label{test-priority}@
		if(splice(A, B, C)) {               @\label{call-splice}@
			if(validAndFrozen(A)) {
				if(validAndFrozen(C) && c > a) removeRec(C);		@\label{recurse1}@
				else removeRec(A); }	@\label{recurse2}@
			else if(validAndFrozen(C)) {
				if(validAndFrozen(A) && a > c) removeRec(A);		@\label{recurse3}@
				else removeRec(C); } } }		@\label{recurse4}@
	else if(a > b > c) {
		if(spliceUnmarkedLeft(A, B, C) && @\label{call-sul}@
		validAndFrozen(C)) {
			removeRec(C); } }				@\label{recurse5}@
	else if(a < b < c) {
		if(spliceUnmarkedRight(A, B, C) && @\label{call-sur}@
		validAndFrozen(A)) {
			removeRec(A); } } } }			@\label{recurse6}@

bool splice(Node* A, Node* B, Node* C) {
	// B cannot be null
	if(A != null && A->right != B) return false;							@\label{test-right}@
	bool result = CAS(&(B->status), marked, finalized);    @\label{finalize-cas}@
	if(C != null) CAS(&(C->left), B, A);    								@\label{left-cas}@
	if(A != null) CAS(&(A->right), B, C);   								@\label{right-cas}@
	return result; }

bool spliceUnmarkedLeft(Node* A, Node* B, Node* C) {
	// A, B cannot be null
	Descriptor* oldDesc = A->rightDesc;			  @\label{read-desc2}@
	if(A->status != unmarked) return false;      @\label{sul-test-marked}@
	help(oldDesc);								@\label{sul-help-old}@
	if(A->right != B) return false;				 @\label{sul-test-right}@
	Descriptor* newDesc = new Descriptor(A, B, C); @\label{create-right-desc}@
	if(CAS(&(A->rightDesc), oldDesc, newDesc)) {   @\label{store-right-desc}@
		// oldDesc != frozen
		help(newDesc);								 @\label{sul-help-new}@
		return true;
	} else return false; }

bool spliceUnmarkedRight(Node* A, Node* B, Node* C) {
	// B, C cannot be null
	Descriptor* oldDesc = C->leftDesc;			@\label{read-desc3}@
	if(C->status != unmarked) return false;     @\label{sur-test-marked}@
	help(oldDesc);								@\label{sur-help-old}@
	if(C->left != B || (A != null && A->right != B)) 	@\label{sur-test} @
		return false;
	Descriptor* newDesc = new Descriptor(A, B, C); @\label{create-left-desc}@
	if(CAS(&(C->leftDesc), oldDesc, newDesc)) {	 @\label{store-left-desc}@
		// oldDesc != frozen
		help(newDesc);								 @\label{sur-help-new}@
		return true;
	} else return false; } };
	\end{lstlisting}
\end{minipage}
\vspace{-.1in}
\caption{Linearizable implementation of our doubly-linked list.
}
\label{fig:list-alg-full}
\end{figure*}

We now look at the code in more detail. 
Each node \X\ in the doubly-linked list has \co{right} and \co{left} pointers
\er{that point toward the list's head and away from it, respectively}.
\X\ also has a \co{status} field that is initially \co{unmarked} and \co{leftDesc} and \co{rightDesc} fields to hold pointers to Descriptors
for splices happening to the left and to the right of \X, respectively.
\X's \co{counter} field is filled in  when \X\ is appended to the right end of the list
with a value that is one greater than the preceding node. 
To ensure that the height of the priority tree is $O(\log L)$, we use the \co{counter} value $c$
to define the \co{priority} of \X\ as $p(c)$, where
$p(c)$ is either $k$ if $c$ is of the form~$2^k$, \linebreak
or $2k+1-(\mbox{number of consecutive 0's at the right end of the binary representation of }c)$, if 
$2^k<c<2^{k+1}$.
The resulting priority tree has a sequence of nodes with priorities $1, 2, 3, \ldots$
along the rightmost path in the tree, where the left subtree of the $i$th node along this rightmost path is a complete binary tree of height $i-1$,
as illustrated in Figure \ref{priority-tree}.
(Trees of this shape have been used to describe 
search trees \cite{BH76} and in concurrent data structures \cite{AAC09,AF01}.)
This assignment of priorities ensures that between any two nodes with the same
priority, there is another node with lower priority.
Moreover, the depth of a node with counter value $c$ is 
$O(\log L)$.
This construction also ensures that \co{remove} operations are
wait-free, since the priority of a node is a bound on the number of recursive calls that
a \co{remove} 
performs.

A Descriptor of a splice of node \B\ out from between 
\A\ and \C\ is 
an object that stores pointers to the three nodes \A, \B\ and \C.
After \B\ is marked, we set its Descriptor pointers to a special  Descriptor 
\co{frozen}
to indicate that  no further updates should occur on them.

To append a new node \C\ after the head node \B, the \co{tryAppend(B,C)} operation simply fills in the fields of \C, and then attempts to swing the \co{Head} pointer to \C\ at line \ref{head-cas}.
\B's \co{right} pointer is then updated at line \ref{mr-cas2}.
If the \co{tryAppend} stalls before executing line \ref{mr-cas2}, any attempt to append
another  node after \C\ will first help complete the append of \C\ (line \ref{mr-cas1}).
The boolean value returned by \co{tryAppend} indicates whether the append was successful.

A \co{remove(B)} first
sets \B's \co{status} to \co{marked} at line \ref{mark}.
It then stores the \co{frozen} Descriptor in both \co{B->leftDesc} and \co{B->rightDesc}.
The first attempt to store \co{frozen} in one of these fields may fail,
but we prove that the second will succeed because of some handshaking, described below.
\B\ is {\it frozen} once \co{frozen} is stored in both of its Descriptor fields.
Finally, \co{remove(B)} calls \co{removeRec(B)} 
to attempt
the real work of splicing \B.

The \co{removeRec(B)} routine
manages the recursive splicing of nodes.
It first calls \co{splice}, \sul\ or \sur, as appropriate,
to splice \B.  If the splice of \B\ was successful, it then recurses (if needed)
on the neighbour of \B\ with the larger priority. 

The actual updates to pointers are done inside the
\co{splice(A,B,C)} routine, which is called after reading \A\ in \co{B->left} and \C\ in \co{B->right}.
The routine first tests that $\co{A->right} = \B$ at line \ref{test-right}.
This could fail for two reasons:  \B\ has already been spliced out,
so there is no need to proceed,
or there is a \co{splice(A,D,B)} that has been partially completed;  \co{B->left} has 
been updated to \A, but \co{A->right} has not yet been updated to \B.
In the latter case, the \co{remove} that is splicing out \D\ will also splice \B\
after 
\D, 
so again there is no need to proceed with the splice of \B.
If  $\co{A->right} = \B$, \B's \co{status} is updated to \co{finalized} at line \ref{finalize-cas},
and the pointers in \C\ and \A\ are updated to splice \B\ out of the list at line \ref{left-cas} and \ref{right-cas}.

The \co{\sul(A,B,C)} handles the splicing of a node \B\ when \B's left neighbour \A\ has higher priority
but is unmarked,
and \B's right neighbour \C\ has lower priority.
The operation attempts to CAS a Descriptor of the splice into \co{A->rightDesc} at line \ref{store-right-desc}.
If there was already an old Descriptor there, it is first helped to complete at line \ref{sul-help-old}.
If the new Descriptor is successfully installed,
the \co{help} routine is called at line \ref{sul-help-new},
which in turn calls \co{splice} to complete the splicing out of \B.
The \sul\ operation can fail in several ways.
First, it can observe that \A\ has become marked, in which case \A\ should be spliced out before
\B\ since \A\ has higher priority.
(This test is also a kind of handshaking:  once a node is marked, at most one more Descriptor can be installed in it, and this ensures that one of the two attempts to install \co{frozen} in a node's Descriptor field during the \co{remove} routine succeeds.)
Second, it can observe at line \ref{sul-test-right} that $\co{A->right} \neq \B$.
As described above for the \co{splice} routine, it is safe to abort the splice in this case.
Finally, the CAS at line \ref{store-right-desc} can fail, either because
\co{A->rightDesc} has been changed to \co{frozen} (indicating that \A\ should be spliced before \B)
or another process has already stored a new Descriptor in \co{A->rightDesc} (indicating either
that \B\ has already been spliced or will be by another process).

The \sur\ routine is symmetric to \sul, aside from a slight difference in line \ref{sur-test} because \co{splice} changes the \co{left} pointer before the \co{right}~pointer.
The return values of \co{splice}, \sul\ and \sur\ 
say
whether 
the calling process should continue recursing up the priority tree to splice out more nodes.

\ignore{
\Eric{The finalize-cas may be unnecessary; processes could instead return false if the left pointer cas failed.  This would require an overhaul of the proof.  Instead of talking about finalized nodes \X,
the notion would be nodes where $\co{X->right->left} \neq \X$.
Also, at some point, we should check whether it is really necessary to help the oldDesc in spliceUnmarkedLeft/Right.}

\Eric{Might be possible to simplify lines \ref{recurse3}--\ref{recurse4} to a simple call to removeRec(C) but that may complicate the progress proof a little, because with current version, we know that if we recurse to the higher of the two neighbours, the lower neighbour is not frozen at the time we recurse.}
}


\Youla{I believe that the above part of this section can become shorter by removing several sentences here and there.
However, after some point, I left this for later on.}

\subsection{Properties of the Implementation}

Detailed proofs of the following  results appear in the full version \cite{arxiv}.  We sketch them here.
\begin{restatable}{theorem}{listcorrectness}
\label{list-correctness}
Under Assumption \ref{list-assumptions}, the implementation in Figure \ref{fig:list-alg-full} is linearizable.
\end{restatable}
Since the implementation 
is fairly complex, the correctness proof
is necessarily quite intricate.
We say that $\X <_c \Y$ if node \X\ is appended to the list before node~\Y.
We prove that \co{left} and \co{right} pointers in the list always respect this ordering.
Removing a node has several key steps:  marking it (line \ref{mark}),
freezing it 
(second iteration of line \ref{freeze-desc}),
finalizing it 
(successful CAS at line \ref{finalize-cas}) 
and then making it
unreachable 
(successful CAS at line \ref{right-cas}).
We prove several lemmas showing that these steps take place in an orderly way.
We also show that the steps make progress.  
Finally, we show that the coordination between \co{remove} operations
guarantees that the structure of the list remains a doubly-linked list in which nodes are
ordered by $<_c$, except
for a temporary situation while a node is being spliced out, \er{during which its left neighbour
may still point to it after its right neighbour's pointer has been updated to skip past it.}
To facilitate the inductive proof of this invariant, it is wrapped up with several others,
including an assertion that overlapping calls to \co{splice} of the form \co{splice(W,X,Y)} and \co{splice(X,Y,Z)} never occur.
The invariant also asserts that  unmarked
nodes remain in the doubly-linked list; no \co{left} or \co{right} pointer can
jump past a node that has not been finalized.
Together with Assumption \ref{list-assumptions}.\ref{removals-safe-inf}, this
ensures a \co{find} cannot miss the node that it is supposed to return, \er{regardless of how \co{find} and \co{remove} operations are linearized.}
We linearize \co{getHead} and \co{tryAppend}  when they access the
\co{Head} pointer.

\begin{theorem}
\label{remove-wait-free}
The number of steps a \co{remove(X)} operation performs is $O(\co{X->priority})$ and the
\co{remove} operation is therefore wait-free.
\end{theorem}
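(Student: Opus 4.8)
The plan is to separate two sources of cost in \co{remove(X)}: (i) the work a single invocation of each of the helper routines performs, not counting any nested \co{removeRec} call it triggers, and (ii) the length of the chain of recursive \co{removeRec} calls that one \co{remove} spawns. I would first establish a uniform $O(1)$ bound for (i) and then prove that the chain in (ii) has length at most \co{X->priority}; multiplying the two gives the theorem, and the independence of the bound from the scheduling of other processes gives wait-freedom.

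For (i): I would check that \co{help}, \spl, \sul\ and \sur\ each perform $O(1)$ steps and make no recursive calls. Indeed \spl\ is straight-line code with a constant number of reads and CAS operations; \co{help} either returns immediately (line \ref{test-frozen}) or makes one call to \spl; and \sul, \sur\ do a constant number of field accesses plus two calls to \co{help}. The preamble of \co{remove(X)} --- marking \X\ (line \ref{mark}) and the doubly nested fixed-count loop (lines \ref{repeat-twice}--\ref{freeze-desc}) that stores \co{frozen} into \X's two Descriptor fields --- likewise performs $O(1)$ steps and $O(1)$ calls to \co{help}, and then makes exactly one call to \co{removeRec(X)}. Finally, one invocation of \co{removeRec(B)}, not counting its at most one recursive call, does $O(1)$ work: it reads \B's two current neighbours and their priorities and makes at most one call to \spl, \sul\ or \sur. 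Hence the total cost of \co{remove(X)} is $O(1)$ plus $O(1)$ times the number of \co{removeRec} frames.

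For (ii), the crux is that the recursion is a chain along which priorities strictly decrease. Reading \co{removeRec(B)}: a recursive call is issued only after a successful \spl, \sul\ or \sur, and in each of the three branches --- guarded respectively by $a < b > c$, $a > b > c$, and $a < b < c$ with $b = \co{B->priority}$ --- the argument of the recursive call is one of \B's current list-neighbours \A, \C\ whose priority the guard forces to be strictly below $b$ (in the first case it is one of \A, \C, both of which have priority $< b$; in the second it is \C; in the third it is \A). Since a node's \co{priority} field never changes once set, the successive arguments $X = B_0, B_1, B_2, \ldots$ of \co{removeRec} satisfy $\co{B_0->priority} > \co{B_1->priority} > \cdots \ge 1$, so there are at most \co{X->priority} of them. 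With the per-frame $O(1)$ bound this yields the claimed $O(\co{X->priority})$ step count; and because \co{X->priority} is a fixed finite integer (indeed $O(\log L)$ for a list to which $L$ nodes have been appended, by the definition of $p(\cdot)$) that is independent of other processes' steps, \co{remove} is wait-free.

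The step I expect to be the main obstacle is making the ``strictly decreasing priority'' claim airtight: one must verify, by a careful case-by-case reading of the nested conditionals of \co{removeRec} (including the \co{validAndFrozen} sub-cases and the $c>a$/$a>c$ tie-breaks), that the recursive argument is always a genuine current list-neighbour of \B\ whose \co{priority} is strictly less than \B's, using the fact --- already needed for correctness --- that two list-neighbours always have distinct priorities, even though \B's neighbours may have changed since \B\ was marked. Everything else is routine accounting for $O(1)$-cost straight-line code.
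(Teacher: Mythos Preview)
Your proposal is correct and follows essentially the same approach as the paper: the paper's own proof is a terse four-sentence version of exactly your decomposition---$O(1)$ work in \co{remove} aside from the call to \co{removeRec}, $O(1)$ work per \co{removeRec} frame aside from its single recursive call, and strictly decreasing priority along the recursion chain (priorities being non-negative integers). Your more detailed case analysis of the three branches of \co{removeRec} and the per-routine constant-cost accounting simply fleshes out what the paper leaves implicit.
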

\begin{proof}
Aside from the call to \co{removeRec(X)},  \co{remove(X)} performs $O(1)$ steps.
Aside from doing at most one recursive call to \co{removeRec}, a \co{removeRec} operation
performs $O(1)$ steps.
Each time \co{removeRec} is called recursively, the node on which it is called has a smaller priority.
Since priorities are non-negative integers, the claim follows.
\end{proof}
\begin{restatable}{theorem}{removeAmortized}
\label{remove-amortized-time}
The \co{tryAppend} and \co{getHead} operations take $O(1)$ steps.  The amortized number of steps for  \co{remove} is $O(1)$.
\end{restatable}
Consider an execution with $R$ \co{remove} operations. Using the argument for Theorem \ref{remove-wait-free}, it suffices to bound the number of calls to \co{removeRec}.
There are at most $R$ calls to \co{removeRec} directly from \co{remove}.
For each of the $R$ nodes \X\ that are removed, we show that at most one call to \co{removeRec(X)}
succeeds either in finalizing \X\ 
or installing a Descriptor to remove \X, 
and only this \co{removeRec(X)} can
call \co{removeRec} recursively.

We say a node is {\it\lrreachable} if it is reachable from the head of the list by following \co{left} or \co{right} pointers.  A node is {\it \lrunreachable} if it is not \lrreachable.
%
%
\begin{restatable}{theorem}{removeSpace}
\label{space-bound}
At the end of any execution by $P$ processes that contains $L$ successful \co{tryAppend}
operations and $R$ \co{remove} operations \hedit{on a set of version lists, and a maximum of $L_{max}$ successful \co{tryAppend}s on a single version list}, the total number of lr-reachable nodes across all the version lists \hedit{in the set}
is at most $2(L-R)+O(P\log L_{max})$. 
\Youla{The phrase "across all version lists" is confusing. Until now we were talking about the implementation of a version list. 
We probably need to explain why we refer to many version lists here.}
\Hao{Added a sentence to address Youla's previous comment.}
\end{restatable}

\hedit{Theorem \ref{space-bound} considers a set of version lists to indicate that the $O(P \log L_{max})$ additive space overhead is shared across all the version lists in the system.}
A node \X\ is {\it removable} if \co{remove(X)} has been invoked.
We must show  at most $(L-R)+O(P\log L_{max})$ removable nodes are still lr-reachable.
We count the number of nodes that are in each of the various phases (freezing, finalizing, making unreachable) of the removal.
There are at most $P$ removable nodes that are not yet frozen, since each has a pending \co{remove} operation on it.
There are at most $P$ finalized nodes that are still lr-reachable, since each has a pending \co{splice}
operation on it.
To bound the number of nodes that are frozen but not finalized,
we classify an unfinalized node as Type 0, 1, or 2, depending on the number of its subtrees
that contain an unfinalized node.
We show that each frozen, unfinalized node \X\ of type 0 or 1 has a pending
\co{remove} or \co{removeRec} at one of its descendants.
So, there are $O(P\log L_{max})$ such nodes.
We show that at most half of the unfinalized nodes are of type 2, so there are at most $L-R+O(P\log L_{max})$ type-2 
nodes.
Summing up yields the 
bound.

\future{Disentangle two meanings of L in following theorem:  total length of all lists in L-R term and length of one list in log L term.}


\section{Memory Reclamation for Version Lists}
\label{sec:smr-list}

We now describe how to safely reclaim the nodes 
spliced out of version lists and
\y{the} Descriptor objects that are no longer needed.
We apply an implementation of
Reference Counting (RC)~\cite{ABW21} with amortized expected $O(1)$ time overhead to a slightly modified version of our list.
\y{To apply RC in Figure~\ref{fig:list-alg-full},} we add a reference count field to each Node or Descriptor
and replace raw pointers to \y{Node}s or Descriptors with reference-counted pointers.
Reclaiming an object clears all \y{its reference-counted pointers,} 
which may lead to recursive reclamations if any reference count hits zero.
This reclamation scheme is simple, but
not sufficient by itself because 
a single pointer to a spliced out node may prevent a long chain of spliced out nodes from being reclaimed
(see Figure~\ref{fig:up-pointers}, discussed later).
To avoid this, we modify the \co{splice} routine so that 
whenever the \co{left} or \co{right} pointer of an node
\co{Y} points to a descendant in the implicit tree, we set the pointer to~$\top$
after \co{Y} is spliced out. 
Thus, only \y{\co{left} and \co{right}} pointers from spliced out nodes to their ancestors
in the implicit tree 
remain \y{valid}.
This ensures that there are only $O(\log L)$ spliced out nodes reachable from any 
spliced out node. 

\begin{figure}[t]
\centering{	\input{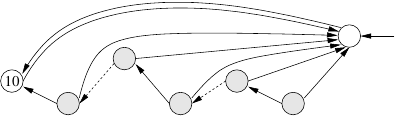_t}}
	\caption{\hedit{A portion of a version list where shaded nodes 15, 14, ..., 11 have been removed, in that order. Dotted pointers represent left and right pointers set to $\top$ by our modified \op{splice} routine. Node labels are counter values and vertical positioning represents nodes' priorities (cf.\ Figure \ref{priority-tree}).}}
	\label{fig:up-pointers}
\end{figure}

This modification requires some changes to \co{find}.
When a \op{find} reaches a node whose
\co{left} pointer is $\top$, the traversal moves right instead;
this results in following a valid pointer because
whenever \op{splice(A, B, C)} is called, it is guaranteed that either \A\ or \C\ is an ancestor of \B.
For example in Figure \ref{fig:up-pointers}, a process $p_1$, paused on node 15, will next traverse nodes 14, 16, and 10.
Breaking up chains of removed nodes (e.g., from node 15 to 11 in Figure~\ref{fig:up-pointers})
by setting some pointers to $\top$ is important because otherwise, such chains can become arbitrarily long and 
a process paused at the head of a chain can prevent all of its nodes from being reclaimed.
In the full version of the paper, we prove that traversing backwards does not have any significant impact on the time complexity of \op{find}.
Intuitively, this is because backwards traversals only happen when the \op{find} is poised to read a node that has already been spliced out and each backwards traversal brings it closer to a non-removed node.
\Eric{Previous sentence doesn't sound convincing.  Isn't there more to it than this?  Should we just say "we prove that ..." in order to indicate that it is not meant to be obvious.}
\Hao{Applied Eric's suggestion}

Using the memory reclamation scheme described above, 
 we prove Theorems~\ref{thm:step3-time} and~\ref{thm:step3-space}  
that provide bounds similar to Theorems \ref{remove-amortized-time} and~\ref{space-bound} in~\cite{arxiv}. 
\hedit{Both theorems include the resources needed by the RC algorithm, 
such as incrementing reference counts, maintaining retired lists, etc.}
\hedit{Since the RC algorithm uses process-local hash tables, 
the amortized time bounds in Theorem~\ref{remove-amortized-time} become amortized \emph{in expectation} in Theorem~\ref{thm:step3-time}.}
Using this scheme requires that \op{getHead} and \op{find} return reference counted pointers rather than raw pointers.
Holding on to these reference counted pointers prevents the nodes that they point to from being reclaimed. 
For the space bounds in Theorem~\ref{thm:step3-space}, we consider the number of reference counted pointers $K$, 
returned by version list operations that are still used by the application code.
In most multiversioning systems (including the one in Appendix~\ref{sec:application}), 
each process holds on to a constant number of such pointers, so $K \in O(P)$. 

\begin{restatable}{theorem}{memoryReclamationTime}
	\label{thm:step3-time}
	The amortized expected time complexity of \op{tryAppend}, \op{getHead}, \op{remove}, and creating a new version list is $O(1)$.
	The amortized expected time complexity of \op{find(V, ts)} is $O(n + \min(d, \log c))$, where $n$ is the number of version nodes with timestamp greater than $\ts$ that are reachable from \op{V} by following \co{left} pointers (measured at the start of the \op{find}), $d$ is the depth of the \vnode{} \op{V} in the implicit tree and $c$ is the number of successful \op{tryAppend} from the time \V\ was the list head until the end of the \op{find}.
	All  operations are wait-free.
\end{restatable}
\vspace{-.1in}
\begin{restatable}{theorem}{memoryReclamationSpace}
	\label{thm:step3-space}
	Assuming there are at most $K$ reference-counted pointers to \vnode{}s from the application code, at the end of any execution that contains $L$ successful \op{tryAppend} operations, $R$ \op{remove} operations and a maximum of $L_{max}$ successful \op{tryAppend}s on a single version list, the number of \vnode{}s and Descriptors that have been allocated but not reclaimed is $O((L-R) + (P^2 + K) \log L_{max})$.
\end{restatable}

In RC, cycles must be broken before a node can be reclaimed.
While there are cycles in our
version lists, we show that \vnode s that have been spliced out are not part of
any cycle.

\section{Application to Snapshottable Data Structures}
\label{sec:application}


We present a summary of the multiversioning scheme of Wei et al.~\cite{WBBFRS21a}, 
and describe how the techniques in this paper can be applied to achieve good complexity bounds.


\myparagraph{The Multiversioning Scheme.}
Wei et al.~\cite{WBBFRS21a} apply multiversioning to
a concurrent data structure (DS) implemented from CAS objects
to make it {\em snapshottable}. It does so
by replacing each CAS object by a VersionedCAS object which
stores a version list of all earlier values of the object.
VersionedCAS objects support \co{vRead} and \co{vCAS} operations, which behave
like ordinary read and CAS. 
They also support a \co{readVersion} operation which can be used to
read earlier values of the object.
\hedit{Wei et al. present an optimization for avoiding the level of indirection introduced by version lists. For simplicity, we apply our MVGC technique to the version without this optimization.}

Wei et al.\ also introduce a \emph{Camera} object
which is associated with these \emph{VersionedCAS} objects.
The Camera object simply stores a timestamp.
A \co{takeSnapshot} operation applied to the Camera object
attempts to increment the timestamp and returns the old value of the timestamp as a snapshot handle.
To support read-only query operations on the concurrent DS
(such as range-query, successor, filter, etc.), it suffices
to obtain a snapshot handle $s$, and then read the relevant objects
in the DS using $\co{readVersion}(s)$ to get their values
at the linearization point of the \co{takeSnapshot} that returned $s$. 
This approach can be used to add arbitrary queries to many standard data structures.

For multiversion garbage collection,
Wei et al.~\cite{WBBFRS21a} uses a variation of EBR~\cite{Fra04}, inheriting its drawbacks.
Applying our range-tracking and version-list data structures significantly reduces space usage,
resulting in bounded space without sacrificing time complexity.

\myparagraph{Applying Our \hedit{MVGC} Scheme.}
Operations on snapshottable data structures (obtained by applying the technique in~\cite{WBBFRS21a})
are divided into \emph{snapshot queries},
which use a snapshot handle to answer queries,
and \emph{frontier operations},
which are inherited from the original non-snapshottable DS.
We use our doubly-linked list algorithm (with the memory reclamation scheme from Section~\ref{sec:smr-list}) for each VersionedCAS object's version list, and a range-tracking object \op{rt} to announce timestamps and keep track of required versions by ongoing snapshot queries.
We  distinguish between objects inherited from the original DS (\dnode{}s) and version list nodes (\vnode{}s).
\Eric{This sounds a bit weird now that you renamed dnodes to dobjects}
\Hao{I changed it back to dnodes to address Eric's comment and save space.}
For example, if the original DS is a search tree, the \dnode{}s would be the nodes of the search tree.
See~\cite{arxiv} for the enhanced code of~\cite{WBBFRS21a} with our MVGC scheme.

At the beginning of each snapshot query, the taken snapshot is announced using \op{rt.announce()}.
At the end of the query, \op{rt.unannounce()} is called to indicate that the snapshot that it reserved is no longer needed.
Whenever a \op{vCAS} operation adds a new \vnode{} \C{} to the head of a version list,
we \op{deprecate} the previous head VNode \B{} by calling \op{rt.deprecate(B, B.timestamp, C.timestamp)}.
Our announcement scheme prevents \vnode s that are part of any ongoing snapshot from being returned by \op{deprecate}.


Once a \vnode{} is returned by a \op{deprecate}, it is removed from its version list and the reclamation of this \vnode{} and the Descriptors that it points to
is handled automatically by the reference-counting scheme of Section~\ref{sec:smr-list}.
Thus, we turn our attention to \dnode{}s.
A \dnode{} can be reclaimed when neither frontier operations nor snapshot queries can access it.

We assume that the original, non-snapshottable DS comes with a memory reclamation scheme, \MRS,
which we use to determine if a \dnode{} is needed by any frontier operation.
We assume that this scheme
calls \op{retire} on a node $X$ when it becomes unreachable from the roots of the DS,
and \op{free} on $X$ when no frontier operations need it any longer.
This assumption is naturally satisfied by many well-known reclamation schemes (e.g.,~\cite{HL+05,RC17,Fra04}). 

Even when \MRS\ \op{free}s a \dnode{},
it may not be safe to reclaim it, as it may still be needed by ongoing snapshot queries.
To solve this problem, we tag each \dnode{} with a birth timestamp and a retire timestamp.
A \dnode's birth timestamp is set after a \dnode\ is allocated but before it is attached to the data structure.
Similarly, a \dnode's retire timestamp is set when \MRS\ calls \op{retire} on it.
We say that a \dnode{} is \emph{necessary} if it is not yet freed by  \MRS, 
or if there exists an announced timestamp in between its birth and retire timestamp.
We track this using the same \rangetracker{} data structure \op{rt} that was used for \vnode{}s.
Whenever \MRS\ \op{frees} a \dnode{} \op{N},
we instead call \op{rt.deprecate(N, N.birthTS, N.retireTS)}.
When a \dnode{} gets returned by a \op{deprecate}, it is no longer needed so we 
\hedit{reclaim its storage space}.

We say that a \vnode{} is \emph{necessary} if it is pointed to by a \dnode{} that has not yet been deprecated (i.e. freed by \MRS) or if its interval contains an announced timestamp.
Let $D$ and $V$ be the maximum, over all configurations in the execution, of the number of necessary \dnode{}s and \vnode{}s, respectively.
Theorem~\ref{thm:overall-space} bounds the overall memory usage of our memory-managed snapshottable data structure. 
Theorem~\ref{thm:overall-time} is an amortized version of the time bounds proven in~\cite{WBBFRS21a}.


\begin{restatable}{theorem}{applicationSpace}
\label{thm:overall-space}
	Assuming each \vnode{} and \dnode{} takes $O(1)$ space, the overall space usage of our memory-managed snapshottable data structure is $O(D + V + P^2 \log P + P^2 \log L_{max})$, where $L_{max}$ is the maximum number of successful \op{vCAS} operations on a single \op{VCAS} object.
\end{restatable}

%

\begin{restatable}{theorem}{applicationTime}
\label{thm:overall-time}
	A snapshot query takes amortized expected time proportional to its sequential complexity plus the number of \op{vCAS} instructions concurrent with it.
	The amortized expected time complexity of frontier operations is the same as in the non-snapshottable DS.
\end{restatable}







\newpage

\bibliographystyle{abbrv}
\bibliography{gc}

\appendix
 

\section{Model}
\label{app:model}

We consider a standard concurrent system where $P$ {\em processes}
execute asynchronously and may fail by crashing.
The system provides a set of {\em base objects},  which
have a state and support
atomic operations to read or update their state, such as
\op{read}, \op{write}, and/or \op{Compare\&Swap} (\op{CAS}).
An {\em implementation} of a data structure (also known as {\em concurrent data structure})
provides, for each process $p$, an algorithm for implementing the operations supported by the data structure
(using base objects).
More complex objects (or data structures) can be implemented using the base objects.

We follow a standard formalism to model the system.
A {\em configuration} is a vector that represents an instantaneous snapshot of the system
(i.e., it provides the state of each process and the value stored in each base object) at some point in time.
In an {\em initial configuration}, all processes are in an initial state
and each object stores an initial value.
A {\em step} by a process $p$ is comprised of a single operation applied by $p$ on a base object
(and may also contain some local computation performed by $p$).
The sequence of steps that each process performs
depends on its program.
An {\em execution} is an alternating sequence of configurations and steps,
starting from an initial configuration.

Fix any execution $E$.
The {\em execution interval} of an operation starts with
the step that invokes the operation and finishes with
the step performing its last instuction.
We use a standard definition of linearizability:
$E$ is linearizable if there is a sequential execution $\sigma_E$,
which contains all complete operations in $E$ (and some of the incomplete ones)
and the following hold:
$\sigma_E$ respects the real-time ordering imposed by the execution intervals of operations
in $E$ and each operation in $E$ that appears in $\sigma_E$ returns
the same response in both executions.

An implementation is {\em wait-free} is each process completes the execution of
every operation it invokes within a finite number of steps.
It is {\em lock-free} if in every infinite execution, an infinite number
of operations complete; thus, a lock-free implementation ensures that the system as whole makes progress,
but individual processes may starve.
The {\em step complexity}
of an operation is the maximum number of steps, over all executions, that a thread performs to
complete any instance of the operation; we sometimes use
the term {\em time complexity} (or simply {\em time}) to refer to
the step complexity of an operation.
\er{Saying that the expected amortized number of steps for an operation is $O(f(n))$ means
that the expected total number of steps performed by $m$ invocations of the operation
is $O(m\cdot f(n))$.}

\ignore{
\Youla{Do we need also something like the following? I was thiknig that this couls be c or better remove it?}
In this paper, we focus on {\em memory reclamation} techniques for data structures
that support complex queries using multi-versioning (i.e., that can become {\em snapshotable}).
We call the original concurrent data structure on which multi-versioning is applied to support complex
queries {\em frontier} data structure, and all operations
that it supports, {\em frontier operations}.
}


\section{Correctness and Analysis of \RT}
\label{RT-proofs}

Here, we provide the proofs for Theorems \ref{thm:rt-lin}, \ref{step1-time} and \ref{step1-space} from Section \ref{sec:identify}.  
These show that \RT\ is linearizable and give  time and space bounds for it.

\subsection{Correctness of \RT}
\label{sec:rt-correctness}

In this section, we prove Theorem \ref{thm:rt-lin}, which says that \RT\ (Figure~\ref{fig:range-tracker}) is a linearizable implementation of the range-tracking object. 
Consider any execution $E$. We first prove that the pools maintained by the algorithm are disjoint and have the right size.


\begin{invariant}\hspace{1mm}
\label{inv:distinct}
\begin{enumerate}
\item
Among all the local \retired s and pools in \co{Q}, no object appears more than once
(with one temporary exception:  if a process is poised at line \ref{line:list-clear}, 
its \retired\ may contain elements that also appear in pools of \co{Q}). 
\item
Each \retired\ contains at most $B$ objects (with one temporary exception:  if a process has
executed line \ref{line:mergeLDP}, but not yet executed line \ref{line:list-clear}, then
its \retired\ may contain up to $2B$ objects).
\item
Each pool in \co{Q} contains between $B$ and $2B$ objects.  
\item
Each pool is sorted by \co{high}.
\end{enumerate}
\end{invariant}
\begin{proof}
\begin{enumerate}
\item
Assumption \ref{ass}.\ref{ass:distinct} ensures that each object added to a local \retired\ is
distinct from objects that are in other processes' \retired s or in one of \co{Q}'s pools.
The flush phase of a \retireInterval\ rearranges the objects of a few pools, and may remove
some of the objects to be returned, but does not add or duplicate any objects
(with the temporary exception that a copy of \retired\ is added to \co{Q} at line 
\ref{line:enq-retired} before \retired\ is emptied on the next line). 
\item
The test at line \ref{line:ifFreeNeeded} ensures that \retired\ is emptied (at line \ref{line:list-clear})
whenever it reaches size $B$.
\item
We show that the modifications to \co{Q} made by \retireInterval\ preserve this claim.
Since the two pools dequeued at line \ref{line:deq} have at most $2B$ objects each, \Needed\ has
at most $4B$ objects.  If \Needed\ has more than $2B$ objects, it is split into two pools whose sizes
are between $B$ and $2B$ before those pools are enqueued.  \retired\ is of size at least $B$ when it is enqueued at line \ref{line:enq-retired}.  If two pools are merged at line \ref{line:mergeLDP},
then each of the two is of size at most $B$, so the resulting size is at most $2B$.
\item
Assumption \ref{ass}.\ref{ass:ascending} ensures that elements are appended to \retired\ in sorted order.
All steps that split or merge pools maintain their sorted order.\qedhere
\end{enumerate}
\end{proof}

Recall that we assign linearization points to operations as follows.
\begin{itemize}
\item An \Announce\ is linearized at its \op{swcopy} on line~\ref{line:swcopy}.
\item An \Unannounce\ is linearized at its \op{write} on line~\ref{line:write-empty}.
\item A \retireInterval\ is linearized at line \ref{line:enq-retired} if it executes that line, or at line \ref{line:retint-lin} otherwise.  
\end{itemize}
Operations that do not execute the line specified above are not included in the linearization.
We say that an object \co{o} is {\em deprecated} at some configuration of an execution
if a \retireInterval\ of object \co{o} is linearized before that configuration.

\begin{invariant}
\label{inv:Q-elements}
All objects in the following pools are deprecated:
\begin{enumerate}
\item all pools in \co{Q},
\label{Q-deprecated}
\item 
\label{local-deprecated}
the \Needed\ and \Redundant\ pools of any \intersect\ or \retireInterval\ operation, and
\item 
\label{ldp-inv}
each process $p$'s \retired, except for object \co{o} if $p$ is executing \retireInterval\co{(o,low,high)}.
\end{enumerate}
\end{invariant}

\begin{proof}
\co{Q} and each process's \retired\ are initially empty and no calls to \retireInterval\ or \intersect\
	are initially active, so the claim holds trivially in the initial state.
	We proceed inductively by showing every step $s$ preserves the invariant.
Since \co{Q} is a linearizable  queue,
we treat the execution of \co{Q}.\op{enq()} and \co{Q}.\op{deq()} as atomic steps. 

We first show that if $s$ is a \co{Q}.\op{enq()} performed by a \retireInterval\ on some object \co{o}, it preserves
claim \ref{Q-deprecated}. 
The pools enqueued at line~\ref{line:enq1},~\ref{line:enq2} or~\ref{line:enq3}, 
contain objects stored in \Needed, which are deprecated by induction hypothesis \ref{local-deprecated}.
The pool enqueued at line~\ref{line:enq-retired}
results from merging \retired\ with \Needed. 
By the induction hypotheses \ref{local-deprecated} and \ref{ldp-inv}, all objects in \Needed\ and all objects in \retired\ other than \co{o} are deprecated.
Since $s$ is the linearization point of the \retireInterval\ on \co{o}, \co{o} is also deprecated in the configuration after $s$. 
So, claim \ref{Q-deprecated} holds after~$s$.

We next show that if $s$ causes a \retireInterval\ by process $p$ on object \co{o} to terminate,
it preserves claim \ref{ldp-inv}.
By induction hypothesis \ref{ldp-inv}, we must just show
that if \co{o} is in $p$'s \retired, then \co{o} is deprecated.
If \co{o} is in \retired, then line \ref{line:list-clear} was not executed, so the \retireInterval\ of \co{o} was
linearized at line \ref{line:retint-lin} before $s$.  Thus, \co{o} is deprecated,
by the definition of deprecated objects.
\Youla{This paragraph does not connect nicely with the rest of the proof.
Maybe move it after the next paragraph, as the next paragraph seems to be
also necessary to prove claim 3?}
\Eric{I did some rewriting to make structure of proof clearer.  Is it okay now?}

We next show that if $s$ is the assignment of the result of \merge\ into \retired\ on line~\ref{line:mergeLDP} of a \retireInterval\ on \co{o}, it preserves claim \ref{ldp-inv}. 
By induction hypotheses \ref{local-deprecated} and \ref{ldp-inv}, 
all objects  in \Needed\  and all objects
in \retired\ other than \co{o} are  deprecated. Since $s$ merges the objects
in \retired\ with those in \Needed\ and assigns the result to \retired, it follows that 
all nodes in \retired\ other than \co{o} are deprecated after $s$.

Finally, assume that $s$ is an execution of  line~\ref{line:RedAppend} or line~\ref{line:NeedAppend} of \co{intersect},
which appends an object drawn from \co{MQ} to \co{Redundant} or \co{Needed}.
We show that $s$ preserves claim \ref{local-deprecated}.
Before \co{intersect} was called at line \ref{line:intersect}, 
\co{MQ} was created on line \ref{line:deq} by merging two pools that were dequeued from \co{Q}.
By induction hypothesis  \ref{Q-deprecated}, the object $s$ appends to \co{Redundant} or \co{Needed} is deprecated. 
\end{proof}

To show that \retireInterval\ returns a correct response,
we first prove some lemmas.

\begin{lemma}
\label{intersect-works}
If \co{intersect(MQ, ar)} is called with sorted arguments, the \co{Redundant} and \co{Needed} sets
it returns partition the elements of \co{MQ}, where \co{Needed} contains exactly those elements whose
intervals intersect \co{ar}.
\end{lemma}
\begin{proof}
We consider the test on line \ref{line:intersect-test}.
If $\co{i}=0$, then by the exit condition of the while loop, we have 
$\co{r.high} \leq \co{ar[0]}\leq \co{ar[p]}$ for all \co{p}, and \co{r} is placed in \co{Redundant}.
Assume now that $\co{i}> 0$.
Since \co{i} was last incremented during an iteration of the for loop on object $\co{r}'$, 
we have $\co{ar[i-1]} < \co{r}'.\co{high} \leq \co{r.high}$.
Using the exit condition of the while loop, we therefore have $\co{ar[i-1]} < \co{r.high} \leq \co{ar[i]}$.
Since \co{ar} is sorted, \co{r}'s interval $[\co{r.low},\co{r.high})$ 
intersects $\co{ar}$ if and only if $\co{ar[i-1]} \geq \co{r.low}$ and this is exactly the
test used to determine whether \co{r} is added to \co{Needed} or \co{Redundant}.
\end{proof}

We say that an object \co{o} is \emph{redundant} at some configuration $G$ if it is deprecated at $G$
and its interval does not intersect \co{Ann}. 
If a deprecated object is not redundant at configuration~$G$, 
we say that it is \emph{needed} at $G$. 

The following observation is a direct consequence of Assumption \ref{ass}.\ref{ass:previously}.
\begin{observation}
\label{Ann-bound}
Any non-$\bot$ value written into \co{Ann} after a \co{\retireInterval(o,low,high)} is invoked
 is greater than or equal to \co{high}.
\end{observation}

\begin{lemma}
\label{lem:deprecateRetVal}
All objects returned by a \retireInterval\ are redundant in all configurations after the \retireInterval\ is linearized.  The size of the set returned by \retireInterval\ is at most $4P\log P$.
\end{lemma}

\begin{proof}
Consider any instance $I$ of \retireInterval.
If  $I$ returns the empty set, then the claim holds trivially.
So, assume $I$ performs a flush phase and returns a non-empty set \Redundant.
During the flush phase,
$I$ dequeues two pools from \co{Q} and puts their elements in \co{MQ}.
Then, $I$ calls \co{sortAnnouncements} at line \ref{line:scan} to make a local copy of the announcement array 
\co{Ann} in the array \co{ar}, and  calls \co{intersect} at line \ref{line:intersect} to compute
\Redundant.

Let \co{r} be an element of \Redundant\ and \co{p} be any process.
By Invariant~\ref{inv:Q-elements}, all elements of \co{MQ} are deprecated.
So, \co{r} is deprecated before \co{sortAnnouncements} is called.
By Observation \ref{Ann-bound}, no value written into \co{Announce[p]} after that will intersect
\co{r}'s interval.
By Lemma \ref{intersect-works}, \co{ar[p]} is not in \co{r}'s interval either.
Thus at all times after \co{sortAnnouncements} reads the value \co{ar[p]} from \co{Announce[p]}, 
the value in \co{Announce[p]} is not in \co{r}'s interval.
Since $I$ is linearized after this read for every \co{p}, the first claim follows.

All objects returned by the \retireInterval\ are drawn from two pools dequeued from \queue.
It follows from Invariant \ref{inv:distinct}, that at most $4P\log P$ objects are returned.
\end{proof}


The flush phase of a \retireInterval\ removes all the objects it returns from the pools maintained by \RT.
Thus, the sets of nodes returned by any pair of calls 
to \retireInterval\ are disjoint. 
Given the way linearization points are assigned, 
Invariants~\ref{inv:distinct} and \ref{inv:Q-elements}, and Lemma~\ref{lem:deprecateRetVal},
then imply the following theorem. 

\RTcorrect*
	
\subsection{Time and Space Bounds for \RT}
\label{sec:rt-bounds}

In this section, we prove Theorems \ref{step1-time} and \ref{step1-space}.
In particular, we show that, in any execution $E$, \retireInterval\ takes $O(P\log P)$ steps in the worst case, 
and the amortized number of steps per \retireInterval\ is constant. We also show that at any  time, 
at most $2H_{max} + 5P^2\log P$ redundant intervals have not yet been returned by any instance of \retireInterval,
where $H_{max}$ is the highest number of objects needed at any point in time during $E$; 
this will help us provide space bounds on the overall garbage collection mechanism we construct in this paper.

\RTtime*

\begin{proof}
	Recall that \op{swcopy}, \op{read}, and \op{write} on destination objects take a constant number of steps. 
Thus, \Announce\ and \Unannounce\ take a constant number of steps in the worst case. Aside from the flush phase,
a \retireInterval\ takes $O(1)$ steps.
By Invariant \ref{inv:distinct}, all pools have size $O(P\log P)$.
Enqueueing or dequeueing a pool from \co{Q} takes $O(P\log P)$ steps, including the time to copy elements in the pool.
Thus, as discussed in Section 4.1, the flush phase of a \retireInterval\ takes $O(P\log P)$ steps.
Since each process performs the flush phase once every $P\log P$ calls to 
\retireInterval, the amortized bound follows.
\end{proof}

\ignore{
\begin{proof}
	Recall that \op{swcopy}, \op{read}, and \op{write} on destination objects take $O(1)$ steps in the worst case. 
Thus, \Announce\ and \Unannounce\ take $O(1)$ steps in the worst case. 
	
We now focus on \retireInterval. Fix any process $p$.
As long as $p$'s \retired\ has fewer than $B$ elements, 
a call to \retireInterval\ by $p$ terminates in a constant number of steps, after appending the given interval to \retired. 
Furthermore, once the size of \retired\ reaches $B$, \retired\ is emptied, 
resetting its size to $0$ (Line~\ref{line:list-clear}). Since $B = P\log P$,
it follows that in any sequence of instances of \retireInterval\ by the same process, 
the flush phase 
is executed only once every $P\log P$ such instances. 

Consider an instance $I$ of \retireInterval\ by $p$ that executes the flush phase. 
$I$ handles two types of data structures: 1) object pools, one of which is \retired\ and 
the rest are pools dequeued from \co{Q}, 
and 2) a local to $p$ announcement array, which contains elements of \co{Ann}, and it is sorted.
Since \co{Ann} has size $P$, the creation of a local copy of \co{Ann} and its sorting 
(that are performed once during a flush phase) require $O(P\log P)$ steps. 
By Invariant~\ref{inv:distinct}, every pool of \co{Q}, as well as the \retired\ of every process, 
has at most $2B = 2P\log P$ elements. 
Since two interval lists dequeued from \co{Q} are merged to form \co{MQ}, it follows that 
\co{MQ} contains at most $4B$ objects. 
The step complexity of \merge\ is linear, 
so the invocation of \merge\ on line~\ref{line:deq} takes $O(P \log P)$ steps. 
The same is true for the step complexity of \Split\ on line~\ref{line:split}
which results in two pools of size at most $2B$ each. 

Note that \intersect\ simply loops over the interval lists they are given as its parameters. 
Since they both have size $O(P \log P)$, the step complexity of \intersect\ is $O(P \log P)$.
\Needed\ and \Redundant\ are subsets of \co{MQ}. Thus, \Needed\ has size
at most $4 P \log P$. 
By the way \Split\ works, it follows that each of Needed1 and Needed2 have size $2 P \log P$.
Therefore, the calls to \co{Q}.\op{enq()} on lines~\ref{line:enq1}
and~\ref{line:enq2} take  $O(P \log P)$ steps each. \Youla{This is not obvious and needs
to be discussed in detail, probably presenting a short summary of a shared 
queue implementation that achieved this.}
\co{Q}.\op{enq()} of line~\ref{line:enq3} also takes $O(P \log P)$ steps. 
Finally, the \merge\ of line~\ref{line:mergeLDP} is only performed
if the size of \Needed\ is less than $P \log P$. Invariant~\ref{inv:distinct} implies that the size of \retired\
	is at most $P \log P$ by the time that line~\ref{line:mergeLDP} is executed. \Youla{This is another
	subbtle point that needs to be taken care of above. } Thus, the execution 
of lines~\ref{line:mergeLDP} and~\ref{line:enq-retired} take  $O(P \log P)$ steps. 

It follows that the flush phase 
takes  $O(P\log P)$ steps. Recall that the flush phase 
is executed only once every $P\log P$ invocations of \retireInterval\ by each process.
It follows that the amortized number of steps of \retireInterval\ is $O(1)$.
\end{proof}

\Eric{========End of Youla's longer proof}
}

We next bound 
the number of deprecated objects that have not yet been returned by any instance of \retireInterval.


\begin{lemma}
\label{remain-redundant}
If an object \co{o} is redundant at some configuration, then it remains
redundant in every following configuration.
\end{lemma}

\begin{proof}
Consider any object \co{o} that is redundant at some configuration $G$. 
Since \co{o} is redundant at $G$, the elements of \co{Ann} at $G$
do not intersect \co{o}'s interval. 
By Observation \ref{Ann-bound}, no value in \co{o}'s interval can be written into \co{Ann} after $G$.
It follows that \co{o} is redundant in every configuration after $G$. 
\end{proof}

\begin{lemma}
\label{return-redundant}
Consider any instance $I$ of \retireInterval\ that executes the flush phase. 
Let $G$ be the configuration just after $I$ dequeues a list $L$ from \co{Q} (on line~\ref{line:deq}). 
Then, $I$ returns all objects in $L$ that were redundant at $G$. 
\end{lemma}

\begin{proof}
Consider any object \co{o} in $L$ that is redundant at $G$.
By Lemma \ref{remain-redundant}, \co{o} is still redundant when $I$'s call to \sortAnnouncements\ (line~\ref{line:scan})
reads each entry from \co{Announce} and stores the value in the local copy \co{ar}.
Thus each value copied into \co{ar} cannot be in \co{o}'s interval.
By Lemma \ref{intersect-works}, \co{o} is included in the \Redundant\ set returned by $I$.
%
\end{proof}

We denote by $H_G$ the number of needed objects at configuration $G$.
We denote by $H_{max}$ the maximum number of needed objects over all configurations of the execution $E$,
i.e., $H_{max} = \max \{H_G~ |~ G \in E\}$.

\begin{lemma}
\label{Q-bound}
In the configuration after a dequeue from \queue, \queue\ contains at most\linebreak $2H_{max}+15P^2\log P$ objects.
In the configuration after an enqueue, \queue\ contains at most $2H_{max}+20P^2\log P$ objects.
\end{lemma}
\begin{proof}
We prove the claim by induction.
The claim is clearly true after the first enqueue or dequeue in the execution (since an enqueue can
add at most $2P\log P$ objects to the initially empty queue, by Invariant \ref{inv:distinct}).
We assume the claim is true for all configurations prior to some operation on \queue, and prove
that it holds in the configuration $G$ after that operation.

First, consider a dequeue from \queue.
If \queue\ is empty, then the claim is clearly true.
So, suppose the dequeue removes some pool $L$ from \queue.
Let $G'$ be the configuration after $L$ was enqueued.

To bound the number of objects in \queue\ at $G$
we first bound the number of objects that are in \co{Q} at both of the configurations $G'$ and $G$.
Consider any such object \co{o}.
The pool $L_{\co{o}}$ containing \co{o} was dequeued from \co{Q} by some \retireInterval\ between $G'$ and $G$.
By Lemma \ref{return-redundant}, the \retireInterval\ re-enqueues into \co{Q} only objects that
are needed at the time $L_{\co{o}}$ was dequeued.  So, \co{o} was needed at that time.
By Lemma \ref{remain-redundant}, this means \co{o} was needed at the earlier configuration $G'$.
So, there are at most $H_{max}$ such objects \co{o}.
	
Next, we count the objects that were dequeued from \queue\ before $G'$ 
and were re-enqueued between $G'$ and $G$. 
All of these objects were held in the local memory of a \retireInterval\ at $G'$.
Since each process holds at most $4P\log P$ dequeued objects at any time, 
there are at most $4P^2\log P$ objects in this category. 

Finally, we bound the number of {\it new} objects that are enqueued into \queue\ 
(by line \ref{line:enq-retired}) for the first time between $G'$ and $G$.
Let $N'$ be the number of objects in \queue\ at $G'$.
By Invariant \ref{inv:distinct}, each pool in \queue\ contains at least
$B$ objects, so there are at most $\frac{N'}{B}$ pools in \queue\ at $G'$.
Thus, the number of dequeues from \queue\ between $G'$ and $G$ 
is at most $\frac{N'}{B}$.
Between any pair of enqueues by a process executing line \ref{line:enq-retired}, 
the process must do two dequeues at line \ref{line:deq}.
Thus, there are at most $P+\frac{N'}{2B}$ pools enqueued into \queue\ between $G'$ and $G$.
Each pool contains exactly $B$ new objects for a total of at most $BP+\frac{N'}{2}$ new objects.
By the induction hypothesis, $N'\leq 2H_{max}+20 P^2\log P$, so the number of new objects enqueued
is at most $P^2\log P + H_{max} + 10 P^2 \log P$.

In total, the number of objects in \queue\ at $G$ is at most 
$H_{max} + 4P^2\log P + (H_{max} + 11 P^2\log P)$, as required.

Next, consider the configuration $G$ after an enqueue into \queue.
Let $G'$ be the configuration after the preceding dequeue (or the initial configuration if
there is no such dequeue).  By the induction hypothesis, the number of objects in \queue\ at $G'$
is at most $2H_{max}+15 P^2\log P$.  Between $G$ and $G'$, each process can execute steps of at most 
one flush phase, and each flush phase can enqueue pools containing at most $5P\log P$ objects.
So, the number of objects in \queue\ at $G$ is at most $2H_{max}+15P^2\log P + 5P^2\log P$, as required.
\end{proof}


\RTspace*

\begin{proof}
By Lemma \ref{Q-bound}, there are at most $2H_{max}+20 P^2 \log P$ deprecated objects in \queue,
and a further $5P\log P$ in the local memory of any process, for a total of $2H_{max} + 25 P^2\log P$.
\end{proof}

\ignore{
\Eric{================The following is the old epoch-based proof of Theorem \ref{step1-space} for comparison.
It proves a different bound:  $3H_{max}+25P^2\log P$.}

\begin{proof}
If no pool is ever enqueued into \co{Q}, then the number of deprecated objects is at most 
$P^2\log P$ by Invariant \ref{inv:distinct}.
In the remainder of the proof, assume at least one pool is enqueued.
We inductively define a finite sequence of configurations $G_0, G_1, \ldots$ of the execution.
Let $G_0$ be the initial configuration.
For $i>0$ let $G_i$ be
\begin{itemize}
\item
the configuration after the first enqueue into \co{Q} after $G_{i-1}$, if \co{Q} is empty in $G_{i-1}$, or
\item
the configuration after the dequeue of the item that was at the end of \co{Q} in $G_{i-1}$, otherwise.
\end{itemize}
In either case, 
all lists stored in \queue\ at $G_{i-1}$ are dequeued between $G_{i-1}$ and $G_i$.
Let $SO_i$ be the number of objects stored in pools in \co{Q} in the configuration $G_i$.

\begin{claim}
\label{recursive-relation}
For $i>0$, $SO_i \leq H_{max} + 5P^2 \log P + \frac{1}{2}\cdot SO_{i-1}$.
\end{claim}

\begin{proof}
To bound the number of objects in \co{Q} at $G_i$,
we first bound the number of objects that are in \co{Q} at both of the configurations $G_{i-1}$ and $G_i$.
Consider any such object \co{o}.
The pool $L$ containing \co{o} was dequeued from \co{Q} by some \retireInterval\ between $G_{i-1}$ and $G_i$.
By Lemma \ref{return-redundant}, the \retireInterval\ re-enqueues into \co{Q} only objects that
are needed at the time $L$ was dequeued.  So \co{o} was needed at that time.
By Lemma \ref{remain-redundant}, this means \co{o} was needed at the earlier configuration $G_{i-1}$.
So, there are at most $H_{max}$ such objects \co{o}.
	
Next, we count the objects that were dequeued from \queue\ before $G_{i-1}$ 
and were re-enqueued between $G_{i-1}$ and $G_i$. 
All of these objects were held in the local memory of a \retireInterval\ at $G_{i-1}$.
Since each process holds at most $4P\log P$ dequeued objects at any time, 
there are at most $4P^2\log P$ objects in this category.

Finally, we bound the number of {\it new} objects that are enqueued into \queue\ 
(by line \ref{line:enq-retired}) for the first time between $G_{i-1}$ and $G_i$.
By Invariant \ref{inv:distinct}, each pool in \queue\ contains at least
$B$ objects, so there are at most $\frac{SO_{i-1}}{B}$ pools in \queue\ at $G_{i-1}$.
Thus, the number of dequeues from \queue\ between $G_{i-1}$ and $G_i$ 
is at most $\frac{SO_{i-1}}{B}$.
Between any pair of enqueues by a process executing line \ref{line:enq-retired}, 
the process must do two dequeues at line \ref{line:deq}.
Thus, there are at most $P+\frac{SO_{i-1}}{2B}$ pools enqueued into \queue\ between $G_{i-1}$ and $G_i$.
Each pool contains exactly $B$ new objects for a total of at most $BP+\frac{SO_{i-1}}{2}$ new objects.
	
In total, we have $SO_{i} \leq H_{max} + 5P^2\log P + 1/2 \cdot SO_{i-1}$, as needed.
\end{proof}

By induction, the recurrence in Claim~\ref{recursive-relation} (and the base case $SO_0=0$) yields the bound 
$SO_i \leq 2 H_{max} + 10P^2\log P$.

Now, consider any configuration $G$ that is after $G_{i-1}$ but not after $G_i$ (if it exists).
Any object in \queue\ at $G$ 
\begin{itemize}
\item is in \queue\ at $G_{i-1}$, or
\item was dequeued from \queue\ before $G_{i-1}$ and in the local variable of a \retireInterval\ at $G_{i-1}$, or
\item is added to \queue\ for the first time between $G_{i-1}$ and $G$.
\end{itemize}
As argued in the proof of Claim \ref{recursive-relation}, the number of objects in the last category
is at most $P^2\log P +\frac{SO_{i-1}}{2}$.
Thus, the total number of objects in \queue\ at $G$ is at most $SO_{i-1} + 4P^2\log P + (P^2\log P +\frac{SO_{i-1}}{2}) \leq
\frac{3}{2}(2H_{max} + 10 P^2\log P) + 5P^2\log P = 3H_{max} + 20P^2\log P$.
There may be up to $4P^2\log P$ additional objects in pools that were dequeued from \queue\ before $G$ 
by a \retireInterval\ that is in progress at $G$,
and another $P^2\log P$ objects in processes' \retired s at $G$, for a grand total of
$3H_{max} + 25P^2\log P$ deprecated objects at $G$.
\end{proof}
}


\section{Correctness and Analysis of Version List Implementation}
\label{list-appendix}

In this section, we prove Theorems \ref{list-correctness}, \ref{remove-amortized-time} and \ref{space-bound} from Section \ref{sec:vlists}.
These show that our List implementation is linearizable and give time and space bounds for it.

Throughout this section, we fix an execution.
Let $\V_1, \V_2, \ldots$ be the nodes assigned to the \co{Head} pointer 
(in the order they were stored there), and let 
$\ts_1, \ts_2,\ldots$ be the timestamps assigned to those nodes.  
If a node $\V_i$ is never assigned a timestamp, then $\ts_i = \bot$.
(The last node in the sequence, if there is one, might not be assigned a timestamp).
We assume that the execution satisfies the following
statements, which are more formal versions of those stated in Assumption \ref{list-assumptions}.
\begin{enumerate}[\ref{list-assumptions}.1]
\item
\label{append-before-remove}
For any node \B, if there are calls to both \co{tryAppend(B,*)} and \co{remove(B)} then some call to \co{tryAppend(B,*)} returns true before the invocation of \co{remove(B)}.
\item
\label{remove-pre}
If \co{remove(B)} is invoked before any call to \co{tryAppend(B,*)} returns true, then
there is no \co{tryAppend} or \co{find} that is either pending or invoked after the \co{remove(B)} is invoked.
\item
\label{removals-safe}
If \co{remove($\V_i$)} is invoked and $\ts_{i+1} \neq \bot$, then there is no \co{find(*,ts)} with $\ts_i\leq \ts < \ts_{i+1}$ that is either pending or invoked after the \co{remove($\V_i$)} is invoked.
\item
\label{append-pre}
Before \co{tryAppend(B,C)} or \co{find(B,*)} is invoked, \B\ is returned by some \co{getHead} operation and \B's timestamp field has been set.  Once a node's timestamp is set, it never changes. 
If \co{tryAppend(B,C)} succeeds, the timestamp eventually assigned to \C\ is greater than or equal to \B's timestamp. 
\item
\label{find-append}
If \co{find($\V_i$,t)} is invoked and \co{tryAppend($\V_i$,$\V_{i+1}$)} succeeds, then $\ts_{i+1} > \co{t}$.
\item
\label{no-duplicates}
There are no two invocations of \co{remove(B)} or of \co{tryAppend(*,C)}.
\end{enumerate}

In our usage in version lists, version nodes are removed only after they have been superseded by a
later version (\ref{list-assumptions}.\ref{append-before-remove}).
The exception is the last version node, which is removed when the list is no longer needed,
and hence no more calls to \co{tryAppend} or \co{find} should be executing after this happens (\ref{list-assumptions}.\ref{remove-pre}).

\Eric{We need to explain in Step 3 why Figure \ref{fig:vcas-alg-youla} satisfies these preconditions}
\Hao{double check}

\subsection{Correctness of Version Lists}
\label{sec:list-correct}

We say a node is {\it active} if it has been successfully added to the list.
More precisely, a node becomes active when the \co{Head} pointer is changed to point to it at line
\ref{head-cas} and it remains active forever after that.

In our proof, we use the notation $\X \rightarrow \Y$ to mean $\co{X->right} = \Y$ and
$\X \not\rightarrow \Y$ to mean $\co{X->right} \neq \Y$.
Similarly, $\X \leftarrow \Y$ means $\co{Y->left} = \X$
and $\X \not\leftarrow \Y$ means $\co{Y->left} \neq \X$.
Finally, $\X \leftrightarrow\Y$ means $\X\leftarrow \Y$ and $\X\rightarrow \Y$.

We define two orderings on all the nodes that become active during the execution:
if \X\ and \Y\ are pointers to nodes, we write
$\X <_p \Y$ if $\co{X->priority} < \co{Y->priority}$ and $\X <_c \Y$ if $\co{X->counter} < \co{Y->counter}$.
The \co{counter} and \co{priority} fields of a node are initialized in the \co{tryAppend} routine
before the node becomes active at line \ref{head-cas}.
So, the \co{counter} and \co{priority} fields of an active node are well-defined and never change.
Hence, the orderings $<_c$ and $<_p$ are fixed.
\er{In particular, we have $\V_1 <_c \V_2 <_c \cdots$.}
For convenience, we also define $\X >_p \co{null}$ if \X\ is a pointer to any node.


We start with some simple lemmas that follow easily from the code.  The following straightforward lemma ensures that,
whenever the code reads \co{X->F} for some field \co{F} of a node pointer \X,  \X\ is not \co{null}.

\begin{lemma}
\label{splice-pre}
If \spl, \sul\ or \sur\ is called with arguments \A, \B, \C,
or if there is a Descriptor (other than \frozen) containing \A, \B, \C,
then \B\ is not \co{null} and there was a previous time when $\A\leftarrow \B$ and a previous time when $\B\rightarrow \C$.\\
Every call to \co{removeRec} has a non-\co{null} argument.\\
Every call to \co{\sul(A,B,C)} has a non-\co{null} argument for \A.\\
Every call to \co{\sur(A,B,C)} has a non-\co{null} argument for \C.
\end{lemma}
\begin{proof}
Assume the claim holds in some prefix of the execution.  We prove that it holds after one more step~$s$.

If $s$ creates a Descriptor at line \ref{create-right-desc} or \ref{create-left-desc}, the claim follows from the precondition of the \sul\ or \sur\ that performs $s$.

If $s$ is an invocation of \co{\sul(A,B,C)} at line \ref{call-sul} of \co{removeRec}, then \B\ is
not \co{null} by the precondition of \co{removeRec}.
The test on the preceding line ensures \co{a} is not 0, and hence \A\ is not \co{null}.
Furthermore, $\A\leftarrow \B$ was previously true at line \ref{read-left} and $\B\rightarrow \C$
was true at line \ref{read-right}.

If $s$ is an invocation of \sur{} at line \ref{call-sur} of \co{removeRec}, the proof is symmetric to the call of \sul.

If $s$ is an invocation of \spl\ in \co{removeRec}, then
the claim follows from the execution of lines \ref{read-left}--\ref{read-right} that precedes $s$, and the precondition of \co{removeRec}.

If $s$ is an invocation of \spl\ in the \co{help} routine, then the claim follows from the
induction hypothesis on the Descriptor passed as an argument to \co{help}, because the test on
line \ref{test-frozen} ensures that the Descriptor is neither \co{null} nor \frozen.

If $s$ is a recursive call to \co{removeRec} inside \co{removeRec}, it is preceded by a
call to \co{validAndFrozen} that ensures
the node used as the argument is not \co{null}.

If $s$ is a call to \co{removeRec} inside \co{remove}, the claim follows from the precondition that the argument of a \co{remove} is not \co{null}.
\end{proof}

We next observe that Descriptors are correctly created by \sul\ and \sur\ routines.

\begin{lemma}
\label{desc-properties}
If \co{X->rightDesc} is neither \co{null} nor \frozen, then the Descriptor it points to contains $(\X,\Y,\Z)$ for some \Y\ and \Z\ satisfying $\X >_p \Y >_p \Z$.\\
Similarly, if \co{Z->leftDesc} is neither \co{null} nor \frozen, then the Descriptor contains $(\X,\Y,\Z)$ for some \X\ and \Y\ satisfying $\X <_p \Y <_p \Z$.
\end{lemma}
\begin{proof}
We prove the first claim; the second is symmetric.
The Descriptor can only be stored in \co{X->right} at line \ref{store-right-desc}.
It was created at line \ref{create-right-desc} in a call to \co{\sul(X,Y,Z)} for some \Y\ and \Z.
The Descriptor contains $(\X,\Y,\Z)$ and by the test preceding the
call to \co{\sul(X,Y,Z)} on line \ref{call-sul}, we have $\X >_p \Y >_p \Z$.
\end{proof}

The next lemma shows that nodes of the list remain in the order they were appended, and
that whenever a \co{left} or \co{right} pointer is updated, it points further along the list than it did
before the change (i.e., nodes are spliced out).  It also ensures that there are no cycles of \co{left} pointers or cycles of \co{right} pointers
and that \co{left} and \co{right} pointers are not subject to ABA problems.


\begin{lemma}
\label{order}
Let \X\ be any node.\\
When \co{X->left} is first set to a non-\co{null} value \W, \W\ is an active node and $\W <_c \X$.\\
Whenever the value of \co{X->left} changes from a value \W\ to a non-\co{null} value \U, \U\ is an active node and $\U <_c \W <_c \X$.\\
Similarly, When \co{X->right} is first set to a non-\co{null} value \Y, \Y\ is an active node and $\X <_c \Y$.\\
Whenever the value of \co{X->right} changes from a non-\co{null} value \Y\ to \Z, \Z\ is an active node and $\X <_c \Y <_c \Z$.
\end{lemma}
\begin{proof}
A CAS on \co{X->left} at line \ref{left-cas} uses a non-\co{null} expected value, by Lemma \ref{splice-pre}.
So, \co{X->left} is first set to a non-\co{null} value \W\ at line \ref{set-left} of tryAppend.
By Assumption \ref{list-assumptions}.\ref{append-pre}, \W\ is active.  Moreover, $\co{X->counter} = \co{W->counter}+1$.

\co{X->left} is never changed by this line again because \co{tryAppend(*,X)} is called at most once, by Assumption \ref{list-assumptions}.\ref{no-duplicates}.
Consider an update of \co{X->left} from value \W\ to a non-\co{null} value \U, which can only happen at line \ref{left-cas} of a \co{splice(U,W,X)}.  Assume the claim holds in the prefix of the execution prior to this update.
By Lemma \ref{splice-pre}, \W\ is not \co{null} and there was an earlier time when $\U \leftarrow \W$.
By the induction hypothesis, if \U\ is not \co{null}, then \U\ is active and $\U <_c \W <_c \X$.  

A CAS on \co{X->right}  at line \ref{right-cas} uses a non-\co{null} expected value, by Lemma \ref{splice-pre}.
So, \co{X->right} is first set to a non-\co{null} value \Y\ at line \ref{mr-cas1} or \ref{mr-cas2} of \co{tryAppend}.
Prior to this step, we have $\X \leftarrow \Y$ at line \ref{mr-read-left} or \ref{set-left}, respectively.  As proved above, this means that $\X <_c \Y$.
By assumption \ref{list-assumptions}.\ref{append-pre}, the new value \Y\ used by the CAS at line \ref{mr-cas1} is an active node.
The new value \Y\ used by the CAS at line \ref{mr-cas2} is an active node, since it was successfully stored in \co{Head} at line \ref{head-cas}.

Consider an update to \co{X->right} from a non-\co{null} value \Y\ to a non-\co{null} value \Z, which can only happen at line \ref{right-cas} of \co{splice(X,Y,Z)}.  Assume the claim holds in the prefix of the execution prior to this update.
There was an earlier time when
$\Y\rightarrow \Z$, by Lemma \ref{splice-pre}.
Thus, by the induction hypothesis, \Z\ is active and $\X <_c \Y <_c \Z$.
\end{proof}

The following lemma and its corollary  are used to ensure that a node is
spliced out of the list only if a remove operation has been called on it.

\begin{lemma}
\label{splice-marked}
Before \co{removeRec(Y)}, \co{splice(X,Y,Z)}, \co{\sul(X,Y,Z)} or\linebreak
\co{\sur(X,Y,Z)} is called, \co{Y->status} is set to \co{marked}.
\end{lemma}
\begin{proof}
By Lemma \ref{splice-pre}, \Y\ is not \co{null}.
\co{Y->status} is initially \co{unmarked}.
The only way that it can change from \co{unmarked} to another value
is by setting it to \co{marked} at line \ref{mark} so the first execution of that line on the node will succeed.

If \co{removeRec(Y)} is called at line \ref{call-rr1} of \co{remove(Y)}, \co{Y->status} was previously set to \co{marked} at line \ref{mark}.
If \co{removeRec(Y)} is called recursively inside the \co{removeRec} routine,
the test \co{validAndFrozen(Y)} must first have returned true.
So, at some earlier time, \co{Y->rightDesc} was set to \co{frozen} at line \ref{freeze-desc}, which can
only happen after \co{Y->status} is set to \co{marked} at line~\ref{mark}.

The routines \co{\sul(X,Y,Z)} and \co{\sur(X,Y,Z)} can only be called by \co{removeRec(Y)},
and it has already been shown that \co{Y->status} is set to \co{marked} before \co{removeRec(Y)} is called.  The same applies if \co{splice(X,Y,Z)} is called in \co{removeRec(Y)}.

If \co{splice(X,Y,Z)} is called by \co{help}, then \co{help} was called on a Descriptor containing $(\X,\Y,\Z)$.
This Descriptor was created either by a \co{\sul(X,Y,Z)} or \co{\sur(X,Y,Z)}.  So, as shown above,
\co{Y->status} was already set to \co{marked}.
\end{proof}


\begin{corollary}
\label{remove-safe}
If a node \Y\ is finalized, some process has called \co{remove(Y)}.
\end{corollary}
\begin{proof}
\Y\ can only be finalized by a call to \co{splice(*,Y,*)}.  By Lemma \ref{splice-marked}
\co{Y->status} is set to \co{marked} before the \co{splice} is called, and this can only
be done at line \ref{mark} of a \co{remove(Y)}.
\end{proof}

Next, we show that the first CAS step that attempts to finalize a node succeeds, and permanently changes
the node's status to \co{finalized}.

\begin{lemma}
\label{finalize-works}
At all times after line \ref{finalize-cas} is performed on a node \X, \co{X->status} is \co{finalized}.
\end{lemma}
\begin{proof}
Consider the first time the CAS on line \ref{finalize-cas} is performed on \X.
By Lemma \ref{splice-marked}, \co{X->status} was set to \co{marked} prior to this CAS.
There is no way for \co{X->status} to change to any other value, except \co{finalized}.
So the CAS will succeed and change \co{X->status} to \co{finalized}.
After this occurs, no execution of line \ref{mark} can change the value of \co{X->status},
since \co{remove(X)} is called at most once by assumption \ref{list-assumptions}.\ref{no-duplicates}, and no other step can change the value of \co{X->status}.
\end{proof}

The next two lemmas describe freezing the fields of a node that can store Descriptors.
After a \co{remove(X)} marks \X, it performs two attempts to change each of its Descriptor fields
to \co{frozen}.  Although the first attempt may fail, we prove that no other Descriptor can be
written into these fields after that first attempt.  This, in turn, guarantees that the second
attempt will succeed, and the Descriptor fields remain \co{frozen} forever after the second attempt.
After this, it is safe to go on to try to splice \X\ out of the list.

\begin{lemma}
\label{desc-frozen}
Let $\co{F}\in\{\co{rightDesc}, \co{leftDesc}\}$.
No non-\frozen\ Descriptor can be stored in \co{X->F} after a \co{remove(X)} performs
line \ref{freeze-desc} on \co{X->F}.
\end{lemma}
\begin{proof}
We prove the claim for \co{X->rightDesc}.  The proof for \co{X->leftDesc} is symmetric.
Let $cas_1$ be the first execution of line \ref{freeze-desc} on \co{X->rightDesc} and let
$r_1$ be the preceding read of \co{X->rightDesc} at line \ref{read-desc1}.
To derive a contradiction, assume there is some step  that stores a non-\frozen\ value
in \co{X->rightDesc} after $cas_1$.  Let $cas_2$ be the first such step.
Then, $cas_2$ is an execution of line \ref{store-right-desc} in a call to \co{\sul(X,Y,Z)} for some \Y\ and \Z.
Let $r_2$ be the preceding read of \co{X->rightDesc} at line \ref{read-desc2}.

Prior to $r_1$, \co{X->status} is set to \co{marked} at line \ref{mark}, and \co{X->status} remains \co{marked} or \co{finalized} forever after.
So, $r_2$ must occur before $r_1$, since the line after $r_2$ observes that \co{X->status} is 
unmarked.
Since $r_2$ is before $cas_1$, which is the first attempt to CAS \frozen\ into \co{X->rightDesc},
it must read a value $D$ different from \co{frozen}.
Since $cas_2$ succeeds, \co{X->rightDesc} still has the value $D$ when $cas_2$ occurs.
Since at most one CAS attempts to store $D$ in \co{X->rightDesc}, this means \co{X->rightDesc}
has the value $D$ at all times between $r_2$ and $cas_2$.
But $r_1$ and $cas_1$ are within this interval of time, so $r_1$ would read $D$ and
$cas_1$ would succeed in changing the value from $D$ to \co{frozen}.  This contradicts the
fact that \co{X->rightDesc} does not change between $r_2$ and $cas_2$.
\end{proof}

\begin{lemma}
\label{frozen-before-rr}
If \co{removeRec(X)} has been called, then both \co{X->rightDesc} and \co{X->leftDesc} are \co{frozen}.
\end{lemma}
\begin{proof}
Let $\co{F} \in \{\co{leftDesc},\co{rightDesc}\}$.
Lemma \ref{desc-frozen} says that no non-\frozen\ Descriptor can be stored in \co{X->F} after
line \ref{freeze-desc} has been performed on \co{X->F}.
Thus, the second execution of line \ref{freeze-desc} on \co{X->F} stores \frozen\ in \co{X->F},
and \co{X->F} can never change thereafter.
So, if \co{removeRec(X)} is called at line \ref{call-rr1}, \frozen\ has already been stored in \co{X->F}.

If \co{removeRec(X)} is called recursively from the \co{removeRec} routine, then there
is a preceding test that ensures \co{X->rightDesc} is \frozen.  Since \frozen\ can be written
to \co{X->rightDesc} only after two executions of line \ref{freeze-desc} on \co{X->leftDesc},
it follows that \co{X->leftDesc} has also been frozen.
Thus, \frozen\ has already been stored in \co{X->F} before the call to \co{removeRec(X)}.

By Lemma \ref{desc-frozen}, once \frozen\ is stored in \co{X->F} (at line \ref{freeze-desc}),
\co{X->F} can never be changed again.
\end{proof}

Since the \co{Head} pointer can only change at line \ref{head-cas}, we have the following observation.
It implies that there is no ABA problem on the \co{Head} pointer.

\begin{observation}
\label{head-increases}
If \co{Head} changes from node \co{X} to node \co{Y},  $\co{Y->counter} = \co{X->counter} +1$.
\end{observation}

A {\it waning configuration} $C$ is one where, for some node \X, a \co{remove(X)} was called
before $C$ and before any \co{tryAppend(X,*)} had returned true.
A configuration is called a {\it waxing configuration} if it is not a waning configuration.
Once the execution switches from waxing to waning configurations, there can no longer
be any \co{tryAppend} operations, by precondition \ref{list-assumptions}.\ref{remove-pre}.
Thus, the list can grow while the execution is in waxing configurations, but can
only shrink once it enters a waning configuration.

\begin{lemma}
\label{head-unfinalized}
In any waxing configuration, the node that \co{Head} points to is not finalized.
\end{lemma}
\begin{proof}
Suppose \co{Head} points to \X\ in some waxing configuration $C$.
By Observation \ref{head-increases}, no \co{tryAppend(X,*)} has performed a  
successful CAS at line \ref{head-cas}.
Thus, no \co{tryAppend(X,*)} has returned true before $C$.
Since $C$ is a waxing configuration, this means that no  \co{remove(X)} has been invoked before $C$.
By Corollary \ref{remove-safe}, \X\ is not finalized in~$C$.
\end{proof}

\begin{figure}[t]
\input{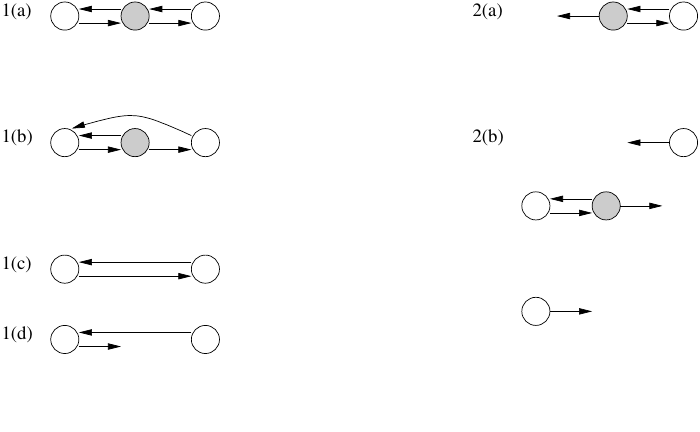_t}
\caption{Possible configurations for list nodes, as described in Invariant \ref{inv}, claims \ref{well-formed}--\ref{well-formed-right}.  Finalized nodes are grey.
}
\label{possibilities}
\end{figure}

The following invariant gives a fairly complete description of the state of the doubly-linked
list at any time.
All of the unfinalized nodes are in the list.
There can be at most one finalized node between any two unfinalized nodes (or to the
left of the first unfinalized node or to the right of the last unfinalized node)
if there is an ongoing splice of that finalized node.
Moreover, the \co{left} and \co{right} pointers of any nodes that have already been spliced
out of the list cannot skip over any unfinalized node.
These properties are useful in showing that any traversal of the list cannot skip
any unfinalized nodes and hence must visit every node that has not yet been removed.

Intuitively, part \ref{no-overlap} of the invariant shows that we cannot
concurrently splice out two adjacent nodes, which is important for avoiding the problem illustrated in Figure \ref{bad-example}.
We say that \co{(X,Y,Z)} is a {\it splice triple} in an execution if there is a call to \co{splice(X,Y,Z)}
or if a Descriptor (other than \co{frozen}) that contains \co{(X,Y,Z)} is stored in some node's \co{leftDesc} or \co{rightDesc} field.

\begin{invariant}
\label{inv}
In any configuration $C$, the following statements hold.  Let $\U_1, \U_2, \ldots, \U_k$ be the sequence of active, non-finalized nodes in $C$, ordered by $<_c$.
\begin{enumerate}
\item
\label{well-formed}
For $1\leq i<k$, one of the following statements is true (refer to Figure \ref{possibilities}).
\begin{enumerate}[(a)]
\item
\label{not-done}
There is a finalized node \Y\ such that $\U_i\leftrightarrow \Y\leftrightarrow \U_{i+1}$ and\\
a \co{splice(U$_i$,Y,U$_{i+1}$)} has been invoked, but no \co{splice(U$_i$,Y,U$_{i+1}$)} has performed line \ref{left-cas}.
\item
\label{left-done}
There is a finalized node \Y\ such that $\U_i\leftrightarrow \Y \rightarrow \U_{i+1}$ and $\U_i \leftarrow \U_{i+1}$ and\\
a  \co{splice(U$_i$,Y,U$_{i+1}$)} has been invoked, but no \co{splice(U$_i$,Y,U$_{i+1}$)}  has performed line \ref{right-cas}.
\item
\label{done}
$\U_i\leftrightarrow \U_{i+1}$.
\item
\label{exception}
$i=k-1$ and $\U_{k-1}\rightarrow \co{null}$ and $\U_{k-1}\leftarrow \U_k$ and\\
a \co{tryAppend($\U_{k-1}$,$\U_k$)} has performed line \ref{head-cas} successfully, but has not performed line \ref{mr-cas2} and\\
no \co{tryAppend($\U_k$,*)} has performed line \ref{mr-cas1}.
\end{enumerate}
\item
\label{well-formed-left}
If $k\geq 1$ then either
\begin{enumerate}[(a)]
\item
\label{not-done-leftend}
There is a finalized node \Y\ such that $\co{null} \leftarrow \Y \leftrightarrow \U_1$ and a
\co{splice(null,Y,U$_1$)} has been invoked, but no \co{splice(null,Y,U$_1$)}  has performed line \ref{left-cas}, or
\item
\label{done-leftend}
$\co{null} \leftarrow \U_1$.
\end{enumerate}
\item
\label{well-formed-right}
If $k\geq 1$ then either
\begin{enumerate}[(a)]
\item
\label{not-done-rightend}
There is a finalized node \Y\ such that $\U_k \leftrightarrow \Y \rightarrow \co{null}$ and a
\co{splice(U$_k$,Y,null)} has been invoked, but no \co{splice(U$_k$,Y,null)}  has performed line \ref{right-cas}, or
\item
\label{done-rightend}
$\U_k\rightarrow \co{null}$.
\end{enumerate}
\item
\label{splice-args}
If, before this configuration, either a \co{splice(X,Y,Z)} has been invoked or a Descriptor containing \co{(X,Y,Z)} has been stored in a node, then $\X\leftarrow \Y \rightarrow \Z$.
\item
\label{no-skip}
\begin{enumerate}[(a)]
\item
\label{no-skip-right}
Suppose \X\ is finalized and $\X\rightarrow \Z$.  If $\Z \neq \co{null}$, then all nodes \Y\ such that $\X<_c \Y<_c \Z$ are finalized.  If \Z\ is \co{null}, then all nodes \Y\ such that $\X<_c \Y$ are finalized.
\Hao{I believe this statement would be useful in step 3 if we change it to say 'all nodes \Y\ such that $\X<_c \Y<_c \Z$ are \lrunreachable'. This would require updating the proof a bit.}
\Eric{Since the stronger statement is not needed for the induction to work, I'd prefer to state the stronger version as a corollary of the invariant (I think it is), since the induction proof is already complicated as it is.}
\item
\label{no-skip-left}
Suppose \Z\ is finalized and $\X\leftarrow \Z$.  If $\X \neq \co{null}$, then all nodes \Y\ such that $\X<_c \Y<_c \Z$ are finalized.  If \X\ is \co{null}, then all nodes \Y\ such that $\Y<_c \Z$ are finalized.
\Eric{This item will help show that a find won't skip any non-finalized nodes, and by Lemma \ref{remove-safe} that no non-removed nodes are skipped}
\end{enumerate}
\item
\label{descriptors-work}
For any Descriptor \co{(X,Y,Z)} that was in the \co{rightDesc} or \co{leftDesc} field of a node in a previous configuration but is no longer in that field,
\begin{enumerate}[(a)]
\item
\label{descriptors-work-right}
if $\X \neq \co{null}$ then $\co{X->right} >_c \Y$, and 
\item
\label{descriptors-work-left}
if $\Z \neq \co{null}$ then $\co{Z->left} <_c \Y$.
\end{enumerate}
\item
\label{no-overlap}
There are no two splice triples of the form \co{(W,X,Y)} and \co{(X,Y,Z)} in the prefix of the execution up to $C$.
\item
\label{unfinalized}
If a \co{splice(X,Y,Z)} has been called and \Y\ is not finalized, then neither \X\ nor \Z\ are finalized nodes.
\end{enumerate}
\end{invariant}


\Eric{Remark:  claim \ref{splice-args} implies that if \co{splice(A,B,C)} and \co{splice(A$'$,B,C$'$)} are called, then $\A=\A'$ and $\C=\C'$.}


\begin{proof}
First, we show that claim \ref{unfinalized} follows from the others.
Suppose claims \ref{well-formed}, \ref{well-formed-left}, \ref{well-formed-right}, \ref{splice-args} and \ref{no-overlap} hold for some configuration
in which \Y\ is not finalized and a \co{splice(X,Y,Z)} has been called.
By claim \ref{splice-args}, $\X\leftarrow \Y \rightarrow \Z$.
To derive a contradiction, suppose \X\ is a finalized node.  The proof for \Z\ is symmetric.
If \Y\ is $\U_1$ then by claim \ref{well-formed-left}, a \co{splice(null,\X,\Y)} has been invoked
since $\co{null} \not\leftarrow \Y$.
Otherwise, \Y\ is $\U_{i+1}$ for some $i\geq 1$.
Since $\X\leftarrow \Y$, the non-finalized nodes $\U_i$ and $\U_{i+1}$ satisfy case \ref{not-done} of claim \ref{well-formed},
and a \co{splice($\U_i$,\X,\Y)} has been invoked.
Either way, we have an invocation of \co{splice(X,Y,Z)} and \co{splice(*,X,Y)} contradicting claim \ref{no-overlap}.

\medskip

It remains to show that for any step $s$ if all the claims are satisfied in all configurations before $s$, then claims \ref{well-formed} to \ref{no-overlap} are satisfied in the configuration after $s$.
The only steps that could affect the truth of claims \ref{well-formed}--\ref{no-overlap} are
successful CAS steps,
an invocation of \co{splice},
executions of line  \ref{mr-cas1}, \ref{mr-cas2}, \ref{left-cas} or \ref{right-cas}, that do not perform a successful CAS,
and executions of line \ref{mark}.
So, we shall consider each of these types of steps in turn.

\Eric{Recent rearrangement of code makes the ordering of cases in the following now seem a bit random.  Could reorder the cases.}
\begin{enumerate}[{C{a}se 1}]
\item
\label{left-cas-proof}
Suppose $s$ is a successful CAS at line \ref{left-cas} by a call to \co{splice(A,B,C)}.
This step changes \co{C->left} from \B\ to \A.

To derive a contradiction, suppose \C\ is finalized in the configuration before $s$.
Consider the \co{splice(*,C,*)} operation that finalized \C.
By induction hypothesis \ref{splice-args}, since $\B \leftarrow \C$, it must have been a \co{splice(B,C,*)} operation.  This contradicts induction hypothesis~\ref{no-overlap}.
Thus, \C\ is not finalized when $s$ occurs.

By Lemma \ref{finalize-works}, \B\ is finalized when $s$ occurs.
In the configuration before $s$, we have $\A\leftarrow \B \rightarrow \C$ by induction hypothesis \ref{splice-args} and $\B \leftarrow \C$ since the CAS succeeds.

If \A\ is \co{null}, then by the induction hypothesis,
\C\ must be the leftmost non-finalized node and
case \ref{not-done-leftend} of induction hypothesis \ref{well-formed-left} is satisfied.
Thus, once $s$ updates \co{C->left} to \co{null}, case \ref{done-leftend} of claim
\ref{well-formed-left} is satisfied.

If \A\ is not \co{null}, then
by the  induction hypothesis,
\A\ is not finalized and the nodes \A, \B, \C\ satisfy
case \ref{not-done} of induction hypothesis \ref{well-formed}, so
$\A\rightarrow \B$ in the configuration prior to~$s$.
Thus, once $s$ updates \co{C->left} to \A, case \ref{left-done} of claim \ref{well-formed} will be satisfied for these three nodes.

Thus, claims \ref{well-formed}, \ref{well-formed-left} and \ref{well-formed-right} are preserved by $s$.

Next, we verify that $s$ preserves claim \ref{splice-args}.
Suppose \co{splice(X,Y,Z)} has been called before $s$.  The only way that $s$ can cause claim \ref{splice-args} to become false is if
it changes \co{Y->left} from \X\ to another value.  If $s$ did this, it would have to be part of a
\co{splice(*,X,Y)} operation, but this would contradict induction hypothesis \ref{no-overlap}.

Next, we verify that $s$ preserves claim \ref{no-skip}.
We consider two cases.

Suppose \A\ is \co{null}.  Since claim \ref{no-skip-left} holds before $s$,
it suffices to check that there are no finalized nodes less than \B\ in the $<_c$ order.
This follows from induction hypothesis \ref{no-skip-left}, because when $s$ occurs,
\B\ is finalized and $\co{null}\leftarrow \B$.

Suppose \A\ is not \co{null}.  Since claim \ref{no-skip-left} holds before $s$,
it suffices to check that there are no finalized nodes between \A\ and \B\ in the $<_c$ order.
This follows from induction hypothesis \ref{no-skip-left}, because when $s$ occurs,
\B\ is finalized and $\A \leftarrow \B$.

It follows from Lemma \ref{order} that step $s$ preserves claim \ref{descriptors-work}.

\item
\label{right-cas-proof}
Suppose $s$ is a successful CAS at line \ref{right-cas} by a call to \co{splice(A,B,C)}.
This step changes \co{A->right} from \B\ to \C.  (This case is nearly symmetric to \ref{left-cas-proof}.)

To derive a contradiction, suppose \A\ is finalized in the configuration before $s$.
Consider the \co{splice(*,A,*)} operation that finalized \A.
By induction hypothesis \ref{splice-args}, since $\A\rightarrow\B$, it must have been a \co{splice(*,A,B)} operation.  This contradicts induction hypothesis~\ref{no-overlap}.
Thus, \A\ is not finalized when $s$ occurs.

By Lemma \ref{finalize-works}, \B\ is finalized when $s$ occurs.
In the configuration before $s$, we have $\A\leftarrow \B \rightarrow \C$ by induction hypothesis \ref{splice-args} and $\A \rightarrow \B$ since the CAS succeeds.

If \C\ is \co{null}, then by the induction hypothesis,
\A\ must be the rightmost non-finalized node and
case \ref{not-done-rightend} of induction hypothesis \ref{well-formed-right} is satisfied.
Thus, once $s$ updates \co{A->right} to \co{null}, case \ref{done-rightend} of claim
\ref{well-formed-right} is satisfied.

If \C\ is not \co{null}, then
by the induction hypothesis,
\C\ is not finalized and the nodes \A, \B, \C\ satisfy
case \ref{left-done} of induction hypothesis \ref{well-formed} in the configuration before $s$.
(Case \ref{not-done} cannot be satisfied, since the \co{splice(A,B,C)} that performs $s$ has already
completed line \ref{left-cas}.)
So $\A\leftarrow \C$ in the configuration prior to $s$.
Thus, once $s$ updates \co{C->left} to \A, case \ref{done} of claim \ref{well-formed} will be satisfied with $\A\leftrightarrow \C$.

Thus, claims \ref{well-formed}, \ref{well-formed-left} and \ref{well-formed-right} are preserved by $s$.

Next, we verify that $s$ preserves claim \ref{splice-args}.  Suppose \co{splice(X,Y,Z)} has been called before $s$.  The only way that $s$ can cause claim \ref{splice-args} to become false is if
it changes \co{Y->right} from \Z\ to another value.  If $s$ did this, it would have to be part of a
\co{splice(Y,Z,*)} operation, but this would contradict induction hypothesis \ref{no-overlap}.

Finally, we verify that $s$ preserves claim \ref{no-skip}.
We consider two cases.

Suppose \C\ is \co{null}.  Since claim \ref{no-skip-right} holds before $s$,
it suffices to check that there are no finalized nodes greater than \B\ in the $<_c$ order.
This follows from induction hypothesis \ref{no-skip-right}, because when $s$ occurs,
\B\ is finalized and $ \B\rightarrow \co{null}$.

Suppose \C\ is not \co{null}.  Since claim \ref{no-skip-right} holds before $s$,
it suffices to check that there are no finalized nodes between \B\ and \C\ in the $<_c$ order.
This follows from induction hypothesis \ref{no-skip-right}, because when $s$ occurs,
\B\ is finalized and $\B \rightarrow \C$.

It follows from Lemma \ref{order} that step $s$ preserves claim \ref{descriptors-work}.

\item
\label{finalize-cas-proof}
Suppose $s$ is a successful CAS at line \ref{finalize-cas} of \co{splice(A,B,C)}.  Then, $s$ finalizes \B.

In the configuration before $s$, $\A\leftarrow \B \rightarrow \C$ by
induction hypothesis \ref{splice-args}
and neither \A\ nor \C\ is a finalized node
by induction hypothesis \ref{unfinalized}.
Moreover, since the CAS $s$ succeeds, it follows from Lemma \ref{finalize-works} that
no \co{splice(A,B,C)} has performed line \ref{finalize-cas}
before, and hence no \co{splice(A,B,C)} has performed line \ref{left-cas} or \ref{right-cas}.

If \C\ is not \co{null}, then
$\B \leftarrow \C$ by induction hypothesis \ref{well-formed} applied to the unfinalized nodes \B\ and \C.

If \A\ is not \co{null}, we apply induction hypothesis \ref{well-formed} to the
unfinalized nodes \A\ and \B.
Since $\A \leftarrow \B$, they either satisfy case \ref{left-done} or \ref{done}.
Suppose they satisfy case \ref{left-done}.  Then there is a finalized node \D\ such that
$\A\rightarrow \D \rightarrow \B$.  By Lemma \ref{order}, $\D <_c \B$.
Also by Lemma \ref{order}, $\D \geq_c \B$, since $\A\rightarrow \B$ at the earlier step
when line \ref{test-right} is executed.  This contradiction implies that if \A\ is non-\co{null},
then \A\ and \B\ satisfy
case \ref{done} of claim \ref{well-formed}, so $\A\leftrightarrow\B$ in the configuration before $s$.

Suppose both \A\ and \C\ are \co{null}.
Since $\co{null}\leftarrow \B$, \B\ is the smallest active, non-finalized node with respect to $<_c$.
Since $\B\rightarrow \co{null}$, \B\ satisfies either induction hypothesis \ref{exception} (with $k-1=1$) or \ref{done-rightend} (with $k=1$) in the configuration before $s$.
We next show that it cannot, in fact, be \ref{exception}.  Since $s$ finalizes \B, a \co{remove(B)} has been invoked before $s$.  If claim \ref{exception} were satisfied, then there would also be a pending
\co{tryAppend(B,*)} that performed line \ref{head-cas} successfully, so no other \co{tryAppend(B,*)} could have returned true.  This contradicts assumption \ref{list-assumptions}.\ref{append-before-remove}.
Thus, \B\ is the only active unfinalized node before $s$.  So, after $s$ finalizes \B, there are no active, non-finalized
nodes, so claims \ref{well-formed}, \ref{well-formed-left} and \ref{well-formed-right} are trivially
satisfied.

If \A\ is \co{null} but \C\ is not \co{null}, then
$\co{null} \leftarrow \B \leftrightarrow \C$ and \B\ becomes finalized by $s$, leaving
\C\ as the leftmost active, non-finalized node.
Thus, case \ref{not-done-leftend} of
claim \ref{well-formed-left} is satisfied after $s$.

Symmetrically, if \C\ is \co{null} but \A\ is not \co{null}, then
$\A \leftrightarrow \B \rightarrow \co{null}$ and \B\ becomes finalized by $s$, leaving
\A\ as the rightmost active, non-finalized node.  Thus, case \ref{not-done-rightend} of claim
\ref{well-formed-right} is satisfied after $s$.

If neither \A\ nor \C\ is \co{null}, $\A \leftrightarrow \B \leftrightarrow \C$ when $s$ occurs.
Thus, when $s$ finalizes \B, \A\ and \C\ become consecutive,
non-finalized nodes that satisfy case \ref{not-done} of claim \ref{well-formed}.

In all cases, claims \ref{well-formed}, \ref{well-formed-left} and \ref{well-formed-right} are preserved.

Step $s$ trivially preserves claim \ref{splice-args} since it does not change any
\co{left} or \co{right} pointers.

Finally, we verify that $s$ preserves claim \ref{no-skip}.
As argued above, \B's \co{left} and \co{right} pointers point to the next unfinalized node
on either side of \B\ in the order $<_c$, 
or, if they are \co{null} pointers, there are no such nodes.
So, the claim is satisfied.

\item
\label{append-proof1}
Suppose $s$ is a successful CAS at line \ref{mr-cas1}  in \co{tryAppend(B,C)}.
This CAS modifies \co{A->right} from \co{null} to \B.
By Assumption \ref{list-assumptions}.\ref{remove-pre}, the configuration before $s$ cannot be a waning configuration,
since no \co{tryAppend} can be pending in a waning configuration.

By Assumption \ref{list-assumptions}.\ref{append-pre}, \co{Head} was equal to \B\ prior to $s$.
Since \A\ was read from \co{B->left} at line \ref{mr-read-left}, we have $\A <_c \B$, by Lemma \ref{order}.
Thus, \A\ is not the last active node with respect to $<_c$.
By Observation \ref{head-increases}, $\A <_c \co{Head}$ when $s$ occurs.
To derive a contradiction, suppose \A\ is finalized when $s$ occurs.
Since $\A\rightarrow \co{null}$ in the configuration before $s$, 
induction hypothesis \ref{no-skip}.\ref{no-skip-right} implies
that there \co{Head} is finalized, contradicting Lemma \ref{head-unfinalized}.
So, \A\ is unfinalized when $s$ occurs.

Since $\A \rightarrow \co{null}$ in the configuration before $s$ and is not the last active, unfinalized node (with respect to $<_c$), the induction hypothesis
implies that \A\ must be $U_{k-1}$, satisfying case \ref{exception} of induction hypothesis \ref{well-formed}.
Thus, some \co{tryAppend(A,U$_k$)} has performed line \ref{head-cas}, but no 
\co{tryAppend(U$_k$,*)} has made another node greater then $\U_k$ active.  
Since \B\ is an active node with $\B>_c\A$, \B\ must be \U$_k$.
Thus, after $s$ updates \co{A->right} to \B,
we have $\A \leftrightarrow \B$, or $\U_{k-1} \leftrightarrow \U_k$, satisfying claim \ref{done}.

This step does not affect the other claims of the invariant.

\item
\label{append-proof2}
Suppose $s$ is a successful CAS at line \ref{mr-cas2}  in \co{tryAppend(B,C)}.
Then, $s$ changes \co{B->right} from \co{null} to \C.
Prior to $s$, \co{Head} was changed from \B\ to \C\ at line \ref{head-cas}, 
so \B\ and \C are active at $s$.

By Observation \ref{head-increases}, there is no ABA problem on the \co{Head} pointer.
So the \co{tryAppend} that performs $s$ is the only \co{tryAppend(B,*)} that could return true.
By Assumption \ref{list-assumptions}.\ref{append-before-remove}, no \co{remove(B)} has been invoked before $s$.
By Lemma \ref{remove-safe}, \B\ is not finalized when $s$ occurs.

Since $\co{Head} >_c \B$ when $s$ occurs and \co{Head} is not finalized, by Lemma \ref{head-unfinalized},
\B\ is not the last active, unfinalized node with respect to $<_c$.
In the configuration before $s$, $\B\rightarrow \co{null}$, so the induction hypothesis implies
that $\B = \U_{k-1}$ and case \ref{exception} of induction hypothesis \ref{well-formed} is satisfied.
Thus, some \co{tryAppend(B,U$_k$)} has performed line \ref{head-cas}, but no 
\co{tryAppend(U$_k$,*)} has made another node greater then $\U_k$ active.  
Since \C\ is an active node with $\C >_c \B$, we must have $\C = \U_k$.
Thus, after $s$ updates \co{B->right} to \C,
we have $\B\leftrightarrow \C$, or $\U_{k-1} \leftrightarrow \U_k$, satisfying claim \ref{done}.

This step does not affect the other claims of the invariant.

\item
\label{invoke-splice-proof}
Suppose $s$ is an invocation of \co{splice(X,Y,Z)} or a CAS at line \ref{store-right-desc} or \ref{store-left-desc} that successfully stores a Descriptor containing \co{(X,Y,Z)}.
This step does not affect the truth of claim
\ref{well-formed}, \ref{well-formed-left}, \ref{well-formed-right} or \ref{no-skip}.

We first prove that $s$ preserves claim \ref{no-overlap}.

If $s$ is an invocation of \spl\ at line \ref{help-splice} of \co{help(desc)},
it gets its arguments from the Descriptor \co{desc}.
Prior to the call to \co{help(desc)} at line \ref{remove-help}, \ref{sur-help-old}, \ref{sur-help-new},
\ref{sul-help-old} or \ref{sur-help-new}, \co{desc} was in the \co{rightDesc} or \co{leftDesc} field
of a node at line \ref{read-desc1}, \ref{read-desc2}, \ref{store-right-desc}, \ref{read-desc3} or \ref{store-left-desc}, respectively.
Thus, Claim \ref{no-overlap} follows from induction hypothesis \ref{no-overlap}.

So it remains to show that splice triples that arise from calls to splice on 
line \ref{call-splice} and from Descriptors satisfy the claim.
To derive a contradiction, assume there are two such overlapping triples \co{(W,X,Y)} and \co{(X,Y,Z)}.
We consider all possible cases and show that each leads to a contradiction.

\begin{enumerate}[{C{a}se i}]
\item
Suppose both \co{splice(W,X,Y)} and \co{splice(X,Y,Z)} have been called from line \ref{call-splice}.
Before \co{splice(W,X,Y)}, the test on line \ref{test-priority} ensures that $\X >_p \Y$.
Similarly if \co{splice(X,Y,Z)} is called, then $\X <_p \Y$.  A contradiction.

\item
\label{sul-sul}
Suppose there are two Descriptors \co{descW} containing \co{(W,X,Y)} and \co{descX} containing \co{(X,Y,Z)} that have each been stored in a node.
We consider the case where $\X >_p \Y$.
The case where $\X <_p \Y$ is symmetric (and \X\ and \Y\ cannot have the same priority, by Lemma \ref{desc-properties}).
\Eric{The symmetric proof is actually done in detail in the .tex file inside an ignore statement, just to be sure it works.}
By Lemma~\ref{desc-properties}, \co{descW} and \co{descX} were stored in the
\co{rightDesc} fields of \W\ and \X\ by calls to \co{\sul(W,X,Y)} and \co{\sul(X,Y,Z)}, respectively.
Let these calls be $sul_{\X}$ and $sul_{\Y}$, and let $rr_{\X}$ be the call to \co{removeRec(X)} that invokes $sul_{\X}$.
By Lemma \ref{frozen-before-rr}, \co{descX} is removed
from \co{X->rightDesc} at some point before $rr_{\X}$ performs line \ref{read-right}.
By induction hypothesis \ref{descriptors-work}, this means that, at all times during $rr_{\X}$, $\X\not\rightarrow \Y$.
This contradicts the fact that $rr_{\X}$ read \Y\ from \co{X->right} at line~\ref{read-right}.

\ignore{
For completeness, here is the symmetric version of the proof above for the case where $\X <_p \Y$.
Suppose there are two descriptors \co{descY} containing \co{(W,X,Y)} and \co{descZ} containing \co{(X,Y,Z)}.

By Lemma \ref{desc-properties}, \co{descY} and \co{descZ} were stored in the
\co{leftDesc} fields of \Y\ and \Z\ by calls to \co{\sur(W,X,Y)} and \co{\sur(X,Y,Z)}, respectively.
Let these calls be $sul_{\X}$ and $sul_{\Y}$, and let $rr_{\Y}$ be the call to \co{removeRec(Y)} that invokes $sul_{\Y}$.

By Lemma \ref{frozen-before-rr}, \co{descY} is removed
from \co{Y->leftDesc} at some point before $rr_{\Y}$ performs line \ref{read-left}.
By induction hypothesis \ref{descriptors-work}, this means that, at all times during $rr_{\Y}$, $\X\not\leftarrow \Y$.
This contradicts the fact that $rr_{\Y}$ reads \X\ from \co{Y->left} at line \ref{read-left}.
}

\item
\label{regular-sul}
Suppose \co{splice(W,X,Y)} has been called from line \ref{call-splice}
and a Descriptor \co{desc} containing \co{(X,Y,Z)} has been stored in a node.
Let $rr$ be the \co{removeRec(X)} operation that calls \co{splice(W,X,Y)} at line \ref{call-splice}.
By the test at line \ref{test-priority}, $\X >_p \Y$.
By Lemma~\ref{desc-properties}, \co{desc} must have been stored in
\co{X->rightDesc}.
By Lemma \ref{frozen-before-rr}, \co{desc} is removed from \co{X->rightDesc} before $rr$
is called.
By induction hypothesis \ref{descriptors-work}, this means that, at all times during $rr$, $\X\not\rightarrow \Y$.
This contradicts the fact that $rr$ read \Y\ from \co{X->right} at line \ref{read-right}.

\item
Suppose a Descriptor \co{desc} containing \co{(W,X,Y)} has been stored in a node and
\co{splice(X,Y,Z)} is called from line \ref{call-splice}.  The proof is symmetric to Case \ref{regular-sul}.
\Eric{The symmetric proof is actually done in detail in the .tex file inside an ignore statement, just to be sure it works.}

\ignore{
Let $rr$ be the \co{removeRec(Y)} operation that calls \co{splice(X,Y,Z)} at line \ref{call-splice}.
By the test at line \ref{test-priority}, $\X <_p \Y$.
By Lemma \ref{desc-properties}, \co{desc} must have been stored in
\co{Y->leftDesc}.
By Lemma \ref{frozen-before-rr}, \co{desc} is removed from \co{Y->leftDesc} before $rr$
is called.
By induction hypothesis \ref{descriptors-work}, this means that, at all times during $rr$, $\X\not\leftarrow \Y$.
This contradicts the fact that $rr$ reads \X\ from \co{Y->left} at line \ref{read-left}.
}
\end{enumerate}
This completes the proof of Claim \ref{no-overlap}.

Next, we verify that $\X\leftarrow \Y\rightarrow \Z$ to ensure claim \ref{splice-args} is satisfied.
We show that $\X\leftarrow\Y$ in the configuration after $s$; the proof that $\Y\rightarrow\Z$ is symmetric.
By Lemma \ref{splice-pre}, there is a time before $s$ when $\X\leftarrow \Y$.
To derive a contradiction, suppose there is some step $s'$ that changed \co{Y->left} from \X\ to some other value \W\ before $s$.
Step $s'$ must be an execution of line \ref{left-cas} in a \co{splice(W,X,Y)} operation, which
contradicts Claim \ref{no-overlap}, proved above.

If $s$ is a CAS at line \ref{store-right-desc} or \ref{store-left-desc}, we verify that it preserves
claim \ref{descriptors-work} in Case \ref{remove-desc}, below.

\item
\label{fail-left-proof}
Suppose $s$ is a failed CAS on line \ref{left-cas} inside a \co{splice(A,B,C)}.
This step is a failed attempt to change \co{C->left} from \B\ to \A.
If this is not the first time line \ref{left-cas} is performed by a \co{splice(A,B,C)}, then there is nothing to check.
If it is the first, then we must verify that \A, \B\ and \C\ do not satisfy case \ref{not-done} of claim
\ref{well-formed} or case \ref{not-done-leftend} of claim \ref{well-formed-left}.
But this is obvious, since if they did, the CAS would succeed.

\item
\label{fail-right-proof}
Suppose $s$ is a failed CAS on line \ref{right-cas} inside a \co{splice(A,B,C)}.
This step is a failed attempt to change \co{A->right} from \B\ to \C.
If this is not the first time line \ref{right-cas} is performed by a \co{splice(A,B,C)}, then there is nothing to check.
If it is the first, then we must verify that \A, \B\ and \C\ do not satisfy case \ref{left-done} of claim
\ref{well-formed} or case \ref{not-done-rightend} of claim \ref{well-formed-right}.
But this is obvious, since if they did, the CAS would succeed.

\item
\label{fail-mrcas}
Suppose $s$ is an execution of the CAS on line \ref{mr-cas1} or \ref{mr-cas2} of \co{tryAppend} that fails.
We must show that $s$ does not cause claim \ref{exception} to become false.
Suppose step $s$ attempts to CAS \co{X->right} from \co{null} to a node \Y.
The CAS fails, so $\X \not\rightarrow \co{null}$ in the configuration before $s$.
Thus, \X\ cannot be $\U_{k-1}$ in claim \ref{exception}, so $s$ cannot cause that claim to become false.

\item
\label{remove-desc}
Suppose $s$ is a successful CAS on line \ref{freeze-desc}, \ref{store-right-desc} or \ref{store-left-desc}
that removes a Descriptor \co{desc} containing \co{(X,Y,Z)} from the \co{leftDesc} or \co{rightDesc} field of a node.
We must show that claim \ref{descriptors-work} holds in the configuration after $s$.

To prove claim \ref{descriptors-work-right}, suppose that $\X \neq \co{null}$.
Before \co{desc} was created at line \ref{create-right-desc} of \co{\sul(X,Y,Z)} or line \ref{create-left-desc} of \co{\sur(X,Y,Z)},
we had $\X\rightarrow \Y$ at line \ref{sul-test-right} or \ref{sur-test}.
We show that, at some time between the creation of \co{desc} and $s$, $\X \not\rightarrow \Y$.
Prior to $s$, \co{help(desc)} is called at line \ref{remove-help}, \ref{sul-help-old} or \ref{sur-help-old}, respectively.
That call to \co{help} completes a \co{splice(X,Y,Z)} operation.
If the \co{splice} terminates on line \ref{test-right}, then $\X\not\rightarrow \Y$.
Otherwise, the \co{splice} performs line \ref{right-cas}.
Either $\X\not\rightarrow \Y$ before that CAS,
or $\X\rightarrow\Z$ (where $\Z \neq \Y$) after the CAS.
In all cases, $\X\not\rightarrow \Y$ at some point during the \co{help(desc)}, prior to the CAS that removes \co{desc} from \co{X->rightDesc}.
Thus, \co{X->right} has changed from \Y\ to some other value before $s$.
It follows from Lemma \ref{order} that $\co{X->right} >_c \Y$ in the configuration after $s$.

To prove claim \ref{descriptors-work-left}, suppose that $\Z \neq \co{null}$.
The Descriptor \co{desc} was created at line \ref{create-right-desc} 
of \co{\sul(X,Y,Z)} or line \ref{create-left-desc} of \co{\sur(X,Y,Z)},
each called by an instance $rr_{\Y}$ of \co{removeRec(Y)}.
Consider the configuration $C'$ before $rr_{\Y}$ executes line \ref{read-right}, 
when $\Y \rightarrow \Z$.
We consider two cases.

If \Y\ is finalized in $C'$, then it was finalized by a \co{splice($\X'$,Y,$\Z'$)}.
Since, in the configuration before $s$, there has been a call to \co{splice($\X'$,Y,$\Z'$)} and \co{desc} containing \co{(X,Y,Z)} has been installed in a node, we have $\X' = \X$ and $\Z' = \Z$ by induction hypothesis \ref{splice-args}.
By Lemma \ref{desc-properties}, the \co{splice} was not called at line \ref{call-splice}
since either $\X <_p \Y <_p \Z$ or $\X >_p \Y >_p \Z$.
So, the \co{splice} was called by \co{help(desc$'$)} for some Descriptor $\co{desc}'$, also containing \co{(X,Y,Z)},
that was installed in the same node as \co{desc} prior to the creation of \co{desc}.
Thus, $\co{desc}'$ has been removed from a node before~$s$.
By induction hypothesis \ref{descriptors-work}, the claim is true in the configuration before $s$, and therefore
also in the configuration after $s$.

If \Y\ is not finalized in $C'$, then by induction hypotheses \ref{well-formed} and \ref{well-formed-right},
we have $\Y\leftarrow \Z$ in $C'$, when $rr_{\Y}$ executes line \ref{read-right}.
Later, at line \ref{sul-test-right} or \ref{sur-test} before \co{desc} is created,
we have that either $\X =\co{null}$ or $\X\rightarrow \Y$.
Prior to $s$, \co{help(desc)} is called at line \ref{remove-help}, \ref{sul-help-old} or \ref{sur-help-old}, respectively.
That call to \co{help} completes a \co{splice(X,Y,Z)} operation.
If it performs line \ref{left-cas}, then $\Y\not\leftarrow \Z$ after that CAS.
Otherwise, it terminated at line \ref{test-right}.
This means that $\X\neq \co{null}$ and $\X\not\rightarrow \Y$.
We know that, in an earlier configuration, $\X\rightarrow \Y$, so \co{X->right} has changed
from \Y\ to another value.
This could only have been done by line \ref{right-cas} of a \co{splice(X,Y,Z$'$)} operation.
By induction hypothesis \ref{splice-args}, $\Z=\Z'$.
By Lemma \ref{desc-properties}, this \co{splice} could not have been called at line \ref{call-splice},
so it was called by \co{help(desc$'$)}.
This \co{splice} also performed a \co{CAS(Z->left,Y,X)} at line \ref{left-cas}.
After this step, $\Y\not\leftarrow \Z$.
If this CAS was performed after $C'$ (when $\Y\leftarrow \Z$), then it follows from Lemma \ref{order}
that $\co{Z->left} <_c \Y$ at all times after the CAS.
If this CAS was performed before $C'$, then $\co{desc}'\neq \co{desc}$ (since \co{desc} had 
not yet been created at~$C'$) and $\co{desc}'$, which also contains \co{(X,Y,Z)} 
has been removed from a node before~$s$.
In this case, the claim follows from the induction hypothesis.

\item
An execution of line \ref{mark} cannot change the \co{status} field of a node
from \co{finalized} to \co{marked}, by Lemma \ref{finalize-works}.  Hence, this step cannot
cause any of the claims to become false. \qedhere
\end{enumerate}
\end{proof}

\listcorrectness*

\begin{proof}
We choose linearization points as follows.
\begin{itemize}
\item
Each \co{getHead} is linearized when it reads \co{Head}.
\item
Each \co{tryAppend} is linearized when it performs the CAS on \co{Head} at line \ref{head-cas}.
\item
Each \co{remove} and \co{find} can be linearized at any time during its execution.
\end{itemize}
Results returned by \co{getHead} and \co{tryAppend} are clearly consistent with this
choice of linearization points.  The \co{remove} operation does not return any result.
So, it remains to show that \co{find} operations return results consistent with the linearization.

Recall that $\V_1, \V_2, \ldots$ are the active nodes in the order they are appended to the list,
and $\ts_1, \ts_2, \ldots$ are their timestamps.
Consider a \co{find($\V_j$,t)} operation.
By assumption \ref{list-assumptions}.\ref{append-pre}, $V_j$ is active, so the local variable \co{cur} is initially set
to an active node.  Since \co{cur} is updated by following \co{left} pointers, \co{cur}
is always either \co{null} or an active node, by Lemma \ref{order}.

We next prove the invariant that if $\co{cur} = \V_i$ and $i<j$, then $\ts_{i+1} > \co{t}$.
When \co{cur} is initialized at line \ref{find-init}, $i=j$, so the claim holds.
Suppose \co{cur} is updated from node $\V_i$ to node $\V_{i'}$ at line \ref{find-advance} 
(because $\ts_i > \co{t}$).
Since $\V_{i'}$ is read from \co{$\V_i$->left} we have $i'<i$, by Lemma \ref{order}.
If $i'+1 = i$, then we know that $\ts_{i'+1} > \co{t}$.
If $i' < i-1$, then each of the nodes $\V_k$ with $i'<k<i$ are finalized, by Invariant \ref{inv}.
By Corollary \ref{remove-safe}, \co{remove} has been called on each of these nodes $\V_k$.
By Assumption \ref{list-assumptions}.\ref{removals-safe}, this means that $\co{t} \notin [\ts_k, \ts_{k+1})$ for each such $k$.
So, $\co{t} \notin [\ts_{i'+1},\ts_i)$ and we know that $\ts_i > \co{t}$.
Hence, $\co{t} < \ts_{i'+1}$, as required.

It follows that if the \co{find($\V_j$,t)} returns a node $\V_i$ where $i<j$, then $\V_i$
is the rightmost node ever appended to the list that is to the left of $\V_j$ and has $\ts_i \leq \co{t}$.
Moreover, by assumption \ref{list-assumptions}.\ref{append-pre}, no \co{remove($\V_i$)} has been invoked prior to the \co{find}'s linearization point.
Thus, the value returned by the \co{find} is consistent with its linearization point.

If the \co{find($\V_j$,t)} returns $\V_j$, then $\ts_j \leq \co{t}$ by the test on line \ref{find-read}.  So, it suffices to show that no \co{remove($\V_j$)} is linearized before the \co{find}.
If, at the linearization point of the \co{find}, $\V_j$ is the last node that has been appended 
to the list then Assumption \ref{list-assumptions}.\ref{remove-pre} ensures that 
\co{remove($\V_j$)} has not been invoked. 
If a successful \co{tryAppend($\V_j, \V_{j+1}$)} has been linearized before the \co{find},
then $\ts_{j+1} > t$ by Assumption \ref{list-assumptions}.\ref{find-append}.
By Assumption \ref{list-assumptions}.\ref{removals-safe}, this implies that \co{remove($\V_j$)}
has not been invoked before the linearization point of the \co{find}.

Finally, we consider the case where \co{find} returns \co{null}.
Consider the last execution of line \ref{find-advance} that changes \co{cur} from $\V_i$ to \co{null}.
Then $\ts_i > \co{t}$ and $\V_i\co{->left} = \co{null}$.
By Invariant \ref{inv}, for all $k<i$, $\V_k$ is finalized.
By Corollary \ref{remove-safe}, \co{remove} has been called on each of these nodes $\V_k$.
By Assumption \ref{list-assumptions}.\ref{removals-safe}, this means that $\co{t} \notin [\ts_k, \ts_{k+1})$ for each such $k$.
So, $\co{t} \notin [\ts_1,\ts_i)$.
Since we know $\co{t} < \ts_i$, it follows that $\co{t} < \ts_1$, so no node has ever been appended
to the list with a timestamp less than or equal to \co{t}.
Thus, it is consistent with the linearization to return the result \co{null}.
\end{proof}


\er{The freedom to linearize \co{remove} and \co{find} operations at any time during their execution
results from Assumption \ref{list-assumptions}.\ref{removals-safe}, which essentially
ensures there are no dangerous races between \co{remove} and \co{find} operations.}

\future{Perhaps add a remark that if you want to weaken assumption about not removing a node
when a pending or future find is looking for it, you could modify the code to check if a node is marked.}

\subsection{Amortized Time Bounds for Version Lists}


The \co{tryAppend} operation performs $O(1)$ steps.  Some clever tricks from \cite{And05} can be used to compute the function \co{p} using $O(1)$ machine instructions.

We consider a \co{tryAppend} operation to be {\it successful} if its CAS at line \ref{head-cas}
successfully updates the \co{Head} pointer.
By Theorem \ref{remove-wait-free}, the worst-case number of steps performed by a \co{remove} operation
in an execution with $L$ successful \co{tryAppend} operations is $O(\log L)$.
We now prove that the amortized number of steps per \co{remove} operation is constant.
To prove this, we first prove a lemma that implies at most one call to \co{removeRec(Y)} can recurse.
(Interestingly, this lemma will also be useful later in bounding the amount of space used.)

\begin{lemma}
\label{bound-recursion}
Among all calls to \co{splice(*,Y,*)} at line \ref{call-splice}, 
\co{\sul(*,Y,*)} and \co{\sur(*,Y,*)} in an execution, 
at most one performs a successful CAS at line \ref{finalize-cas}, \ref{store-right-desc}
or \ref{store-left-desc}, and hence at most one can return true.
\end{lemma}
\begin{proof}
It follows from Invariant \ref{inv}.\ref{splice-args} that any two such calls that perform 
a successful CAS at line \ref{finalize-cas}, \ref{store-right-desc}
or \ref{store-left-desc} must have the same triple of arguments \co{(X,Y,Z)}.
If \co{splice(X,Y,Z)} is called at line \ref{call-splice}, then $\X <_p \Y >_p \Z$, by the test at
line \ref{test-priority}.
It follows from Lemma \ref{desc-properties}, that there cannot be calls to two different routines that both return true.
That is, the successful calls are all to \co{splice}, all to \sul, or all to \sur.

By Lemma \ref{finalize-works}, only the first \co{splice(X,Y,Z)} to perform the CAS at line
\ref{finalize-cas} can succeed in finalizing \Y.

To derive a contradiction, suppose two calls to \co{\sul(X,Y,Z)} perform a successful CAS at line \ref{store-right-desc}.
Let these two calls be $sul_1$ and $sul_2$, in the order that they perform their successful CAS.
Let $\co{desc}$ be the Descriptor installed in \co{X->rightDesc} by $sul_1$.
We consider two cases.

First, suppose \co{desc} is replaced by another Descriptor before
$sul_2$ reads \co{X->rightDesc} at line \ref{read-desc2}.
By Invariant \ref{inv}.\ref{descriptors-work},
$\X \not\rightarrow \Y$ when $sul_2$ performs line \ref{sul-test-right} so $sul_2$ cannot
perform the CAS at line \ref{store-right-desc}, a contradiction.

Now, suppose \co{desc} is still in \co{X->rightDesc} when $sul_2$ reads that 
field at line \ref{read-desc2}.
Then, $sul_2$ calls \co{help(desc)} at line \ref{sul-help-old}, which calls \co{splice(X,Y,Z)}.
Prior to this call to \co{splice}, we had $\X\rightarrow \Y$ when $sul_1$ performed 
line \ref{sul-test-right}.
Then, during $sul_2$'s call to \co{splice}, we have $\X\not\rightarrow\Y$ at line \ref{test-right}
or before or after the CAS at line \ref{right-cas} is performed.
By Lemma \ref{order}, $\X\not\rightarrow \Y$ at all times after this \co{splice} returns.
Thus, when $sul_2$ performs the test at line \ref{sul-test-right}, it returns false
and does not perform the CAS at line \ref{store-right-desc}, a contradiction.

The proof that no two calls to \co{\sur(X,Y,Z)} can perform a successful CAS at 
line \ref{store-left-desc} is symmetric to the proof for \sul.  
\Eric{The symmetric case is actually done in the .tex file inside an ignore environment.}
\ignore{
To derive a contradiction, suppose two calls to \co{\sur(X,Y,Z)} perform a successful CAS at line \ref{store-left-desc}.
Let these two calls be $sul_1$ and $sul_2$, in the order that they perform their successful CAS.
Let $\co{desc}$ be the Descriptor installed in \co{X->rightDesc} by $sul_1$.
We consider two cases.

First, suppose \co{desc} is replaced by another Descriptor before
$sul_2$ reads \co{Z->leftDesc} at line \ref{read-desc3}.
By Invariant \ref{inv}.\ref{descriptors-work},
$\Y \not\leftarrow \Z$ when $sul_2$ performs line \ref{sur-test} so $sul_2$ cannot
perform the CAS at line \ref{store-right-desc}, a contradiction.

Now, suppose \co{desc} is still in \co{Z->leftDesc} when $sul_2$ reads that 
field at line \ref{read-desc3}.
Then, $sul_2$ calls \co{help(desc)} at line \ref{sur-help-old}, which calls \co{splice(X,Y,Z)}.
Prior to this call to \co{splice}, we had $\Y\leftarrow \Z$ when $sul_1$ performed 
line \ref{sur-test}.
Then, during $sul_2$'s call to \co{splice}, we have $\Y\not\leftarrow\Z$ 
before or after the CAS at line \ref{left-cas} is performed.
By Lemma \ref{order}, $\Y\not\leftarrow \Z$ at all times after this \co{splice} returns.
Thus, when $sul_2$ performs the test at line \ref{sur-test}, it returns false
and does not perform the CAS at line \ref{store-left-desc}, a contradiction.
}
\end{proof}

\removeAmortized* 
\begin{proof}
As argued in the proof of Theorem \ref{remove-wait-free}, the total number of steps performed
by \co{remove} operations is proportional to the number of calls to \co{remove} and \co{removeRec}.
So, it suffices to show that there are at most $O(R)$ calls to \co{removeRec}.

There are at most $R$ calls to \co{removeRec} at line \ref{call-rr1} of the \co{remove} operation.
Each recursive call to \co{removeRec} is preceded by a (distinct) call to \co{splice} at 
line \ref{call-splice}, \sul\ or \sur\ that returns true.
Thus, it suffices to show that at most $O(R)$ such calls can return true.
By Lemma \ref{splice-marked}, each call to \co{splice(*,Y,*)}, \co{\sul(*,Y,*)} or \co{\sur(*,Y,*)}
is on a marked node \Y.  Since there are at most $R$ marked nodes, the claim follows from 
Lemma \ref{bound-recursion}.
\end{proof}

\subsection{Space Bound for Version Lists}
In this section, we prove Theorem \ref{space-bound}, which bounds the number of nodes that remain in the list.
The proof of this theorem requires several lemmas.
\er{
We say that a node \X\ is {\it lr-reachable} in a configuration if it can be reached from an active, unfinalized node by following a sequence of \co{left} and \co{right} pointers.
}

Consider a finite execution.  Define the priority tree $T$ for the execution as follows.
Consider the sequence of all nodes that become active during the execution, with the
priorities assigned to them by the algorithm.
Arrange the sequence into a priority tree by choosing the minimum priority node
as the root and recursively constructing the left and right subtrees from the subsequences
to the left and right of this node.
As remarked earlier, our scheme for assigning priorities ensures that the choice of the
root in this recursive construction is always unique.
Furthermore, the sequence of active nodes ordered by $<_c$ is an in-order traversal of $T$.
See Figure \ref{priority-tree} for an example.
$T$ is a single tree defined for the entire execution.
Whenever we refer to ancestors or descendants of nodes, we are referring to relationships
in the tree $T$.  By convention, we consider a node to be its own ancestor and descendant.

We say that a node is {\it removable} in a configuration if \co{remove} has been invoked on it before that configuration.
As the execution proceeds, certain removable nodes within $T$ will become marked and finalized.
At any given configuration during the execution, we wish to bound the number of nodes
in the tree that are removable but not finalized.

\begin{lemma}
\label{splice-priorities}
If \co{splice(X,Y,Z)} is called, then $\X <_p \Y >_p \Z$ or $\X <_p \Y <_p \Z$ or $\X >_p \Y >_p \Z$.
\end{lemma}
\begin{proof}
If \co{splice(X,Y,Z)} is called on line \ref{call-splice}, then $\X <_p \Y >_p \Z$ by the test
at line \ref{test-priority}.
Otherwise, \co{splice(X,Y,Z)} is called at line \ref{help-splice} of a call to \co{help(desc)} on some Descriptor \co{desc} containing \X,\Y,\Z.
Since \co{desc} was created by a call to either \co{\sul(X,Y,Z)} or \co{\sur(X,Y,Z)}, we must have
either $\X >_p \Y >_p \Z$ or $\X <_p \Y <_p \Z$, by the tests preceding those calls at line \ref{call-sul}
and \ref{call-sur}.
\end{proof}

\begin{lemma}
\label{anc-or-desc}
In any configuration, if \X\ is an active node then \co{X->left} and \co{X->right} each point to a proper ancestor of \X\ in $T$, a proper descendant of \X\ in $T$, or \co{null}.
\end{lemma}
\begin{proof}
Suppose the claim is false.  Consider the first time it is violated.  We consider all steps that can change \co{left} or \co{right} pointers.

Suppose the violation is caused by a CAS at line \ref{left-cas} or \ref{right-cas} of some \co{splice(A,B,C)} operation.  This CAS either updates \co{A->right} to \C, or \co{C->left} to \A.
Since this causes a violation of the lemma,
\A\ and \C\ must be non-\co{null} and neither \A\ nor \C\ is an ancestor of the other.
When the CAS occurs, $\A\leftarrow \B \rightarrow \C$, by Invariant \ref{inv}.\ref{splice-args}.
Since no violation of the lemma occurred before this step, one of
\A\ and \B\ must be a proper ancestor of the other, and one of \B\ and \C\ must be a proper ancestor of the other.
The only way this can happen is if \B\ is a proper ancestor of both \A\ and \C.
Thus, $\A >_p \B <_p \C$.
This contradicts Lemma~\ref{splice-priorities}.

Suppose the violation is caused by line \ref{mr-cas1} which sets $\co{A->right}$ to \B{}.
Since we have $\A\leftarrow \B$ earlier at line \ref{mr-read-left}, this would violate
the assumption that no violation occurred before this step.

The violation cannot be caused by line \ref{set-left} or line \ref{mr-cas2} because
they set a \co{left} or \co{right} pointer to an adjacent node in the $<_c$ ordering,
and two adjacent nodes in an in-order traversal of $T$ must be ancestors or descendants of each other.
\end{proof}

\Eric{Following corol doesn't seem to be used in step 2 proof.  Cut it out if it isn't needed in Step 3}
\begin{corollary}
\label{recurse-on-anc}
If a call to \co{removeRec(X)} calls \co{removeRec(Y)} recursively, then \Y\ is an ancestor of \X.
\end{corollary}
\begin{proof}
\Y\ is read from \co{X->left} or \co{X->right}.  
The \co{validAndFrozen} test before any recursive call ensures that \Y\ is not \co{null}.
Thus, by Lemma \ref{anc-or-desc}, \Y\ is either an ancestor or descendant of \X.  
Since \co{removeRec(Y)} is called only after testing that $\Y <_p \X$, \Y\ must be an ancestor of~\X.
\end{proof}

Incidentally, it also follows from Lemma \ref{anc-or-desc} that we never compare the priorities of two nodes
with the same priority during the \co{removeRec} operation.

\begin{lemma}
\label{one-subtree-fin}
When a node is finalized, either all nodes in its left subtree in $T$ or all nodes in its right subtree in $T$ are finalized.
\end{lemma}
\begin{proof}
Consider the call to \co{splice(A,B,C)} that finalizes node \B.
By Lemma \ref{splice-priorities}, either $\A <_p \B$ or $\C <_p \B$.
We prove the lemma for the case where $\A <_p \B$; the other case is symmetric.
By Lemma \ref{splice-pre}, there was an earlier configuration $C$ when $\A \leftarrow \B$.
We consider two cases.

If \A\ is \co{null}, then by Invariant \ref{inv} in the configuration when $\co{null} \leftarrow \B$,
there are no non-finalized nodes \D\ such that $\D <_c \B$.
In other words, \B\ is the first non-finalized node in an in-order traversal of $T$.
So, all nodes in the left subtree of
\B\ in $T$ are finalized.

Now, suppose \A\ is not \co{null}.
By Lemma \ref{anc-or-desc}, \B\ is a descendant of \A\ in $T$ since $\A <_p \B$.
By Lemma \ref{order}, $\A <_c \B$, so \B\ is in the right subtree of \A.
By Invariant \ref{inv}, there is no non-finalized node \D\ such that $\A <_c \D <_c \B$ in configuration $C$.
In other words, there is no non-finalized node between \A\ and \B\ in an in-order traversal
of $T$.  Thus, every node in \B's left subtree in $T$ is finalized in $C$.
\end{proof}

We want to bound the number of unfinalized nodes.
We divide them into three types as follows.
\begin{itemize}
\item
Type 0:  neither of the node's two subtrees contains an unfinalized node.
\item
Type 1:  exactly one of the node's two subtrees contains an unfinalized node.
\item
Type 2:  both of the node's two subtrees contain an unfinalized node.
\end{itemize}

The following lemma shows that it will be sufficient to bound the number of unfinalized nodes of Types 0 and 1.

\begin{lemma}
\label{type-2-bound}
For any configuration and any subtree of $T$ that contains unfinalized nodes,
the number of unfinalized nodes of Type 2 in the subtree is less than the number of unfinalized nodes of Type 0 in the subtree.
\end{lemma}
\begin{proof}
We prove the claim by induction on the height of the subtree.

Base case:  Consider a subtree of $T$ consisting of a single leaf node.
If the node is finalized, the claim holds trivially.
If the node is not finalized, then it is of type 0, so the claim holds.

Induction step:  Assume the claim holds for subtrees of height less than $h$.
Consider a subtree of height $h$ with root $r$.  We consider several cases.

Case 1:  Suppose $r$ is finalized.
By Lemma \ref{one-subtree-fin}, either the left or right subtree of $r$ contains only finalized nodes.
The claim follows by applying the induction hypothesis to the other subtree.

Case 2:  Suppose $r$ is unfinalized and of Type 0.  Then $r$ is the
only unfinalized node in the subtree, so the number of unfinalized nodes of Type 2 is 0.

Case 3:  Suppose $r$ is unfinalized and of Type 1.
Then, either the left or right subtree $r$ contains only finalized nodes.
The claim follows by applying the induction hypothesis to the other subtree (which does contain unfinalized nodes).

Case 4:  Suppose $r$ is unfinalized and of Type 2.
Let $T_L$ and $T_R$ be the left and right subtrees of $r$.
By definition, both contain unfinalized nodes.  So by the induction hypothesis,
\begin{eqnarray*}
&&\mbox{\# Type-2 unfinalized nodes in the subtree}\\
&=& 1 + (\mbox{\# Type-2 unfinalized nodes in $T_L$}) + (\mbox{\# Type-2 unfinalized nodes in $T_R$})\\
&\leq& 1 + (\mbox{\# Type-0 unfinalized nodes in $T_L$}) -1 +  (\mbox{\# Type-0 unfinalized nodes in $T_R$}) -1\\
&=& (\mbox{\# Type-0 unfinalized nodes in subtree})-1\\
&<& \mbox{\# Type-0 unfinalized nodes in subtree}.\hfill \qedhere 
\end{eqnarray*}
\end{proof}

\Eric{Following lemma seems useful only for step 3, so might move it later.}
\begin{lemma}
\label{unfinalized-reachable}
In each configuration, every active, unfinalized node is lr-reachable.
\end{lemma}
\begin{proof}
Assume the configuration has some active, unfinalized nodes.
Let $\U_1, \ldots, \U_k$ be the sequence of active, unfinalized nodes in the configuration, ordered by $<_c$.

We first show that $\U_k$ is reachable.
We define a sequence of nodes $\W_1, \ldots, \W_\ell$ inductively.
Let $\W_1$ be the value of \co{Head}.
If $\W_i$ is not finalized or if $\co{null} \leftarrow \W_i$, then $\ell=i$ and $\W_i$ is the last node in the sequence.
Otherwise, $\W_i$ is finalized and we define $\W_{i+1} = \W_i\co{->left}$.
We show that $\U_k=\W_\ell$.

We prove by induction on $i$ that $\U_k\leq_c \W_i$.
For $i=1$, ince $\U_k$ is active, $\U_k \leq_c \W_1$, by Observation \ref{head-increases}.
For $1\leq i < \ell$ suppose $\U_k \leq_c \W_i$ to prove that $\U_k \leq_c \W_{i+1}$.
Since $i<\ell$, $\W_i$ is finalized, so $\U_k \neq \W_i$.
Thus, $\U_k <_c \W_i$.
It follows from Invariant \ref{inv}.\ref{no-skip-left} that $\U_k \leq_c \W_{i+1}$.

Thus, we have $\U_k\leq_c W_\ell$.  
If $\W_\ell$ were finalized, then $\co{null}\leftarrow W_\ell$, which would contradict Invariant \ref{inv}.\ref{no-skip-left}.
Thus, $\W_\ell$ is not finalized, and we have $\U_k=\W_\ell$
since $\U_k$ is the largest unfinalized node with respect to $<_c$.
Thus, $\U_k$ is lr-reachable via the sequence of nodes $\W_1, \ldots, \W_\ell = \U_k$.

By Invariant \ref{inv}.\ref{well-formed}, if $\U_{i+1}$ is lr-reachable, then so is $\U_i$.
The claim follows by induction.
\end{proof}

\Eric{Maybe move this lemma closer to where it is used}
\begin{lemma}
\label{remains-unreachable}
If an active node \X\ ceases being lr-reachable, it never subsequently becomes lr-reachable again.
\end{lemma}
\begin{proof}
It follows from Invariant \ref{inv}.\ref{splice-args} that when a
node's \co{left} or \co{right} pointer is modified, the node is not finalized.
By Invariant \ref{inv}, a \co{left} or \co{right} pointer can only be changed to an unfinalized node or \co{null}.
(For example, Invariant \ref{inv}.\ref{well-formed} implies that \co{$\U_i$->right} can only be changed to  $\U_{i+1}$ or \co{null}.)
If the new value of the pointer is not \co{null}, it is an unfinalized node, which was already lr-reachable by definition, so this change does not cause any node to become lr-reachable.
\end{proof}

\begin{lemma}
\label{full-splice-works}
After an instance of \co{splice(*,Y,*)} executes line \ref{right-cas}, \Y\ is not lr-reachable.
\end{lemma}
\begin{proof}
Let $\U_i$ be the largest unfinalized node that is less than \Y\ (with respect to $<_c$), or \co{null}
if there is no such node.
Similarly, let $\U_{i+1}$ be the smallest unfinalized node that is greater than \Y, or \co{null} if there is no such node.
By Invariant \ref{inv}, if an unfinalized node points
to a finalized node, the finalized node points only to unfinalized ones.
Thus if there is a path of \co{left} or \co{right} pointers from an unfinalized node
to a finalized node, there cannot be two finalized nodes in a row on the path.
So, if \Y\ is lr-reachable, there is some unfinalized node that points directly to it.
Thus, by Invariant \ref{inv}, \Y\ can only be lr-reachable if either $\U_i\rightarrow \Y$ or $\Y\leftarrow \U_{i+1}$.

Since the \co{splice(*,Y,*)} has executed line \ref{right-cas}, 
and therefore also line \ref{finalize-cas}, 
\Y\ is finalized, by Lemma \ref{finalize-works}.
Since the \co{splice} has also executed lines \ref{left-cas} and \ref{right-cas}, it follows from Invariant \ref{inv} that $\U_i \not\rightarrow \Y$ and $\Y\not\leftarrow\U_{i+1}$.
So, by the claim proved in the first paragraph, \Y\ is not lr-reachable.
\end{proof}

\ignore{IT SEEMS THIS LEMMA IS NO LONGER NEEDED----------------------------------
\begin{lemma}
\label{splice-works}
After an instance of \co{splice(X,Y,Z)}, \co{\sul(X,Y,Z)} or\linebreak
\co{\sur(X,Y,Z)} returns true, \Y\ is not lr-reachable.
\end{lemma}
\begin{proof}
If a \co{splice(X,Y,Z)} has returned true, then it performed line \ref{right-cas}, so \Y\ is not lr-reachable, by Lemma \ref{full-splice-works}.

Suppose an instance $s$ of \co{\sul(X,Y,Z)} or
\co{\sur(X,Y,Z)} has returned true.
Then $s$ installed a Descriptor \co{desc} containing \co{(X,Y,Z)} into \co{X->rightDesc} or \co{Z->leftDesc}.
Before installing \co{desc}, we have $\X=\co{null}$ or $\X\rightarrow\Y$ at line \ref{sul-test-right} or \ref{sur-test}.
After installing \co{desc}, $s$ completed a call to \co{help(desc)} at line \ref{sul-help-new} or \ref{sur-help-new}, which in turn completed a call
to \co{splice(X,Y,Z)}.
If this \co{splice} executed line \ref{right-cas}, then \Y\ would not
be lr-reachable, by Lemma \ref{full-splice-works}.
Otherwise, this splice terminated on line \ref{test-right} with $\X\neq \co{null}$ and $\X\not\rightarrow \Y$.
Thus \co{X->right} changed from \Y\ to some other value, and that change must have been done at line 
\ref{right-cas} of a \co{splice(X,Y,*)}.  By Lemma \ref{full-splice-works}, \Y\ is not lr-reachable.
\end{proof}
----------------------------------------------}

\begin{lemma}
\label{nbrs-when-finalized}
At the time a node \Y\ is finalized by a \co{splice(X,Y,Z)}, the following hold with respect to the ordering $<_c$.\\
If $\X \neq \co{null}$ then \X\ is the greatest unfinalized node that is less than \Y\ and $\X\leftrightarrow\Y$.\\
If $\X = \co{null}$ then \Y\ is the smallest unfinalized node.\\
If $\Z \neq \co{null}$ then \Z\ is the least unfinalized node that is greater than \Y\ and $\Y\leftrightarrow\Z$.
\end{lemma}
\begin{proof}
Consider the configuration before the step that finalizes \Y.
By Invariant \ref{inv}.\ref{splice-args}, we have $\X\leftarrow \Y \rightarrow \Z$.

Since \Y\ is not finalized, if \Z\ is a finalized node, then a 
\co{splice(Y,Z,*)} must have been invoked by Invariant \ref{inv}.\ref{well-formed}, 
but this would violate Invariant \ref{inv}.\ref{no-overlap}.  
So, \Z\ is not a finalized node.  By Invariant \ref{inv}.\ref{well-formed},
if \Z\ is not \co{null}, then it is the smallest unfinalized node greater than \Y\ and $\Y\leftarrow\Z$.

Suppose \X\ is not \co{null}.
If \X\ is finalized, then a \co{splice(*,X,Y)} must have been invoked, violating 
Invariant \ref{inv}.\ref{no-overlap}.
So, \X\ is not finalized.  
By Invariant \ref{inv}.\ref{well-formed} \X\ is the greatest
unfinalized node that is less than \Y.
We had $\X\rightarrow\Y$ at line \ref{test-right} before \Y\ is finalized, so
when \Y\ is finalized, we have $\co{X->right} \geq_c \Y$, by Lemma \ref{order}.
By Invariant \ref{inv}.\ref{well-formed}, $\co{X->right} \leq_c \Y$ since neither \X\ nor \Y\ 
is finalized.
Thus, $\X\rightarrow\Y$.

If \X\ is \co{null}, it follows from Invariant \ref{inv}.\ref{well-formed-left} that
\Y\ is the smallest unfinalized node.
\end{proof}

\begin{lemma}
\label{frozen-left}
If, in some configuration $C$, \X\ is frozen
and \co{X->left} is not a descendant of \X, then \co{X->left} never changes after $C$.
\end{lemma}
\begin{proof}
Let \V\ be \co{X->left} in $C$.
To derive a contradiction, suppose \co{X->left} changes from \V\ to some other value 
after $C$.
This change must be performed by a \co{splice(*,V,X)} operation.
By Lemma \ref{splice-pre}, \V\ is not \co{null}.
By assumption, \V\ is not a descendant of \X\, so by Lemma \ref{anc-or-desc}, \V\ is an ancestor of \X.
Since $\V <_p \X$, the \co{splice(*,V,X)} operation
must have been called by a \co{help(desc)}, where \co{desc} is a Descriptor containing \co{(*,V,X)}
that was stored in \co{X->leftDesc}.
Since \co{X->leftDesc} was \co{frozen} before $C$,
\co{desc} was removed from \co{X->leftDesc} before $C$.
By Invariant \ref{inv}.\ref{descriptors-work}, $\co{X->left} <_c \V$ at some time before $C$.
By Lemma \ref{order}, $\co{X->left} <_c \V$ at $C$, a contradiction.
\end{proof}

We say a \co{removeRec} {\it is at \X} if an instance of \co{removeRec(X)} has been
called and it has neither terminated nor made a recursive call.

If \X\ is a finalized node,
we wish to define the instance of \co{removeRec(X)} that {\it caused} \X\ to become finalized.
\X\ was finalized by a call to \co{splice(*,X,*)}.
If this \co{splice} is called from line \ref{call-splice} of an instance of \co{removeRec(X)},
then that is the instance that caused \X\ to become finalized.
Otherwise, the \co{splice} must have been called by \co{help(desc)}, where
\co{desc} is a Desciptor that was installed in a node
by a call to \co{\sul(*,X,*)} or \co{\sur(*,X,*)} at line \ref{call-sul} or \ref{call-sur}
of an instance of \co{removeRec(X)}.  In this case, that instance is the one that cause 
\X\ to become finalized.
By Lemma \ref{bound-recursion}, there is a unique \co{removeRec(X)} that caused \X\ to be finalized.

\begin{lemma}
\label{type-0-1-bound}
If, in a configuration $C$, a node \X\ is of type 0 or type 1 and is frozen but not finalized,
then there is a \co{remove} or \co{removeRec} at a descendant of \X\ in $C$.
\end{lemma}
\begin{proof}
We prove the claim by induction on the height of the subtree rooted at \X.

\begin{description}
\item[Base case]
(\X\ is a leaf of $T$):
To derive a contradiction, suppose the claim is false for \X.
Then, the call to \co{remove(X)} that 
froze \X\ called \co{removeRec(X)} and that call either terminated or recursed to another node.
Since \X\ is not finalized in $C$, the test at line \ref{test-unfinalized} did not 
cause \co{removeRec(X)} to terminate.
By Lemma \ref{anc-or-desc}, the nodes $\X_L$ and $\X_R$ read at line \ref{read-left} and \ref{read-right}
are either \co{null} or ancestors of \X, so the test at line \ref{test-priority} succeeded and 
\co{splice($\X_L$,X,$\X_R$)} was called at line \ref{call-splice}.

Consider the test at line \ref{test-right} of \co{splice($\X_L$,X,$\X_R$)}.
By Invariant \ref{inv}, we cannot have $\co{$\X_L$->right} >_c \X$, since \X\ is not finalized.
If $\co{$\X_L$->right} = \W <_c \X$, then by Invariant \ref{inv}, 
\W\ is finalized and a \co{splice($\X_L$,W,X)} has been called.
By Lemma \ref{anc-or-desc}, this means $\W <_p \X$.
So, the \co{splice($\X_L$,W,X)} was called by \co{help} on a Descriptor stored in \co{X->leftDesc}.
By Lemma \ref{desc-frozen}, this Descriptor was removed before the call to \co{removeRec(X)}.
By Invariant \ref{inv}.\ref{descriptors-work}, $\X_L \not\rightarrow \W$ when line \ref{test-right} is executed,
a contradiction.

Thus, the test at line \ref{test-right} of \co{splice($\X_L$,X,$\X_R$)} cannot cause the splice to terminate.
Since the \co{removeRec(X)} is no longer at \X, the \co{splice($\X_L$,X,$\X_R$)} must execute 
line~\ref{finalize-cas} before~$C$.  By Lemma \ref{finalize-works}, this means \X\ is finalized before $C$, a contradiction.

\item[Induction step]
(\X\ is an internal node of $T$):  We assume the claim holds for
proper descendants of \X\ and prove that it holds for \X.
To derive a contradiction, assume the claim does not hold at \X.
Thus, at configuration $C$, \X\ is frozen but not finalized and there is no \co{remove} or
\co{removeRec} at a descendant of \X.

Let $S$ be the set of proper descendants \Y\ of \X\ such that in $C$, \Y\ is finalized and there are
no unfinalized nodes between \Y\ and \X\ with respect to the ordering $<_c$.
Let $S_L$ and $S_R$ be the subsets of $S$ in \X's left and right subtree, respectively.

\begin{claim*}
There is a configuration $C_L$ before $C$, such that
\begin{enumerate}[(a)]
\item
\label{SL-unreachable-left}
in $C_L$, \co{X->left} does not point to any node in $S_L$,
\item
\label{SL-unreachable-right}
in $C_L$, if $\co{X->left} \neq \co{null}$ then $\co{X->left->right} = \X$, and 
\item
\label{late-call}
a \co{removeRec(X)} is invoked between $C_L$ and $C$.
\end{enumerate}
\end{claim*}

\begin{claimproof}
If $S_L$ is empty, let $C_L$ be the configuration when the \co{right} pointer of \X's predecessor in $<_c$ is updated to point to \X. 
(If \X\ is the minimum node with respect to $<_c$, then let $C_L$ be the configuration when the
head pointer is changed to \X.)
The first part of the claim is vacuously true.  In $C_L$, if $\co{X->left} \neq \co{null}$, then $\co{X->left->right} = \X$.
Since \X\ is frozen in $C$, there must have been a call to \co{removeRec(X)} between $C_L$ and $C$.

If $S_L$ is non-empty, let \W\ be the node in $S_L$ that was finalized last.
Consider the \co{removeRec(W)} $rr_{\W}$ that causes \W\ to be finalized.
Let $\W_L$ and $\W_R$ be the values $rr_{\W}$ reads from \co{W->left} and \co{W->right} 
at line \ref{read-left} and \ref{read-right}.
By Invariant \ref{inv}.\ref{splice-args}, we have $\W_L \leftarrow \W \rightarrow \W_R$ when 
\W\ becomes finalized.
By Lemma \ref{nbrs-when-finalized}, $\W_L$ and $\W_R$ are the closest unfinalized nodes 
on either side of \W\ in the $<_c$ ordering when \W\ becomes finalized 
(or \co{null} if there is no such node).
So, by definition $\W_R = \X$.  By Lemma \ref{anc-or-desc}, $\W_L$ is either an ancestor of \X\ (or \co{null})
if $S_L$ contains all nodes \X's left subtree, 
or a descendant of \X, otherwise.

We consider two cases to show that some call to \co{splice(*,W,*)}
executes line \ref{right-cas} before~$C$.

Suppose $rr_{\W}$ calls \co{splice($\W_L$,W,$\W_R$)} at line \ref{call-splice} and
successfully performs line \ref{finalize-cas} to finalize \W.
Then, $rr_{\W}$ must also perform line \ref{right-cas}
before $C$, since there is no \co{removeRec} at \W\ in $C$, by assumption.

Suppose $rr_{\W}$ calls \sul\ or \sur\ and installs a Descriptor containing \co{($\W_L$,W,$\W_R$)}.
Then, at line \ref{sul-test-right} or \ref{sur-test}  
we have $\W_L\rightarrow \W$ if $\W_L$ is not \co{null}.
After installing the Descriptor, $rr_{\W}$ calls 
\co{splice($\W_L$,W,$\W_R$)} via line \ref{sul-help-new} or \ref{sur-help-new}.
This call must complete before $C$, since there is no \co{removeRec} at \W\ in $C$, by assumption.
If this \co{splice} sees that $\W_L$ is not \co{null} and $\W_L\not\rightarrow \W$ at line \ref{test-right},
then \co{$\W_L$->right} must have been changed from \W\ to another value.  This can only be done
by some \co{splice(*,W,*)} performing line \ref{right-cas}.
(Since \W\ is not \co{null}, this change was not by a \co{tryAppend}.)
Otherwise, $rr_{\W}$'s call to \co{splice($\W_L$,W,$\W_R$)} will perform line \ref{right-cas} itself.

As argued above, $rr_{\W}$ calls \co{splice($\W_L$,W,X)}.
By Invariant \ref{inv}.\ref{splice-args}, {\it all} calls to \co{splice(*,W,*)}
are of the form \co{splice($\W_L$,W,X)}.
As argued above, some \co{splice($\W_L$,W,X)} executes line \ref{right-cas} before $C$.
Let $C_L$ be the configuration after the 
first execution of line \ref{right-cas} by any call to \co{splice($\W_L$,W,X)}.
Prior to $C_L$, a \co{splice($\W_L$,W,X)} executes a \co{CAS(X->left,W,$\W_L$)} at line \ref{left-cas}.

By Lemma \ref{nbrs-when-finalized},
$\W \leftrightarrow \X$ in the configuration before \W\ was finalized.
Thus, when the first CAS(\co{X->left,W,$\W_L$}) is performed, it must succeed,
since no other step can change \co{X->left}.
By Lemma \ref{order}, this means that in all configurations after $C_L$,
$\co{X->left} \leq_c \W_L$.
Since $\W_L$ is not a finalized node when \W\ is finalized,
it must be either \co{null} or less than all nodes in $S_L$ with respect to the order $<_c$.
Thus, at all times after $C_L$, either \co{X->left} is \co{null} or $\co{X->left} <_c \Y$ for all $\Y\in S_L$.
This completes the proof of Claim~\ref{SL-unreachable-left}.

We now prove Claim \ref{SL-unreachable-right}.  Assume \co{$\X$->left} is not \co{null} in $C_L$.

We first show that if $W_L$ is not \co{null}, then the execution of \co{CAS($\W_L$->right,W,X)}
immediately before $C_L$ is successful.
Before \W\ was finalized by line \ref{finalize-cas} of a \co{splice($\W_L$,W,X)}
we have $\W_L\rightarrow \W$ at line \ref{test-right}.
The only way that \co{$\W_L$->right} can change is by line \ref{right-cas} of a call
to \co{splice($\W_L$,W,X)}. 
(Since \W\ is not \co{null}, this change cannot be by a \co{tryAppend}.) 
Thus, the first such execution succeeds.
So, if $\W_L \neq \co{null}$, then we have $\W_L\rightarrow \X$ in $C_L$.

As argued above, after the first execution of line \ref{left-cas} by a call to \co{splice($\W_L$,W,X)},
we have $\W_L \leftarrow \X$.
To derive a contradiction, suppose $\W_L \not\leftarrow \X$ at $C_L$.
Then \co{X->left} has been changed by line \ref{left-cas} of a \co{splice(*,$\W_L$,X)},
which means $\W_L \neq \co{null}$.
When this change occurs, we have $\W_L\rightarrow \X$, by Invariant \ref{inv}.\ref{splice-args}.
But this cannot be true before the step preceding $C_L$, as argued in the previous paragraph.
This contradiction shows that $\W_L \leftarrow \X$ in $C_L$.
So, in $C_L$, \co{X->left} is $\W_L$ (which is assumed to be non-\co{null}) and \co{$\W_L$->right} is \X,
which establishes claim \ref{SL-unreachable-right}.

We now prove Claim \ref{late-call}.  
The routine $rr_{\W}$ either calls \co{splice($\W_L$,\W,$\W_R$)} directly at 
line \ref{call-splice} or via \co{help} after successfully storing a Descriptor
containing \co{($\W_L$,\W,$\W_R$)}.  Thus, $rr_{\W}$ terminates or recurses
between $C_L$ and $C$.
We consider two cases.

If \X\ is frozen in $C_L$, then $rr_{\W}$ cannot terminate; it must recurse.
If $\W_L$ is \co{null}, then $rr_{\W}$ will recurse to \X.
If $\W_L$ is an ancestor of \X, then $rr_{\W}$ will recurse to \X, because $\X >_p \W_L$.
If $\W_L$ is a descendant of \X, then $\W_L$ is unfinalized at $C$ (by definition of $S$) and
it is not of type 2, since all nodes in $\W_L$'s right subtree are finalized at $C$,
and hence by the induction hypothesis, $\W_L$ is not frozen before $C$.
Thus, $rr_{\W}$
will also recurse to \X\ in this case.
In all cases, there is a call to \co{removeRec(X)} between $C_L$ and $C$.

If \X\ is not frozen in $C_L$, then it becomes frozen by a \co{remove(X)} between $C_L$ and $C$.  
By assumption, there is no \co{remove(X)} at \X\ in $C$, 
so the \co{remove(X)} that froze \X\ must call \co{removeRec(X)} between $C_L$ and $C$.

This completes the proof of the claim.
\end{claimproof}

\begin{claim*}
There is a configuration $C_R$ before $C$, such that
\begin{enumerate}[(a)]
\item
\label{SR-unreachable-left}
in $C_R$, \co{X->right} does not point to any node in $S_R$,
\item
\label{SR-unreachable-right}
in $C_R$, if $\co{X->right} \neq \co{null}$ then $\co{X->right->left} = \X$, and 
\item
\label{SR-late-call}
a \co{removeRec(X)} is invoked between $C_R$ and $C$.
\end{enumerate}
\end{claim*}

The proof of this claim is symmetric to the previous one.

\medskip


Combining the two claims, there is an instance $rr_{\X}$ of \co{removeRec(X)} that is invoked 
after both $C_L$ and $C_R$
and before $C$.  By assumption, $rr_{\X}$ must terminate or recurse before $C$.
It cannot terminate at line \ref{test-unfinalized} before $C$, since \X\ is not finalized in $C$.
Let $\X_L$ and $\X_R$ be the nodes that $rr_{\X}$ reads at line \ref{read-left} and \ref{read-right}.
We consider three cases to show that \X\ is finalized before $C$, which is the desired contradiction.

\begin{enumerate}[C{a}se 1]
\item
Suppose \X\ is of type 0 in $C$.  Then, $S$ contains all of \X's proper descendants.
By the claims proved above and Lemma \ref{order}, $\X_L$ and $\X_R$ are not descendants of \X.
By Lemma \ref{anc-or-desc} they must each be either \co{null} or an ancestor of \X.
Thus, $rr_{\X}$ calls \co{splice($\X_L$,\X,$\X_R$)} at line \ref{call-splice}.

Let \V\ be the value of \co{X->left} when $rr_{\X}$ is invoked.
In the configuration $C_L$, which is before $rr_{\X}$ is invoked,
\co{X->left} does not point to any node in \X's left subtree.
By Lemma \ref{order}, \co{X->left} cannot point to any node in \X's left subtree after $C_L$.
Thus, \V\ is not a descendant of \X.
By Lemma \ref{frozen-before-rr}, \X\ is frozen when $rr_{\X}$ is invoked.
By Lemma \ref{frozen-left}, $\V\leftarrow \X$ at all times after $rr_{\X}$ begins.
Thus, $\V=\X_L$ and $\X_L \leftarrow \X$ at all times after $rr_{\X}$ begins.

If $\X_L = \co{null}$, then line \ref{finalize-cas} of the \co{splice($\X_L$,X,$\X_R$)} called at line
\ref{call-splice} of $rr_{\X}$ finalizes \X\ before $C$, by Lemma \ref{finalize-works}.

Otherwise, let \U\ be the value of \co{$\X_L$->right} read at line \ref{test-right}.
To derive a contradiction, suppose $\U <_c \X$.
At line \ref{test-right}, we have $\X_L \leftarrow \X$ and $\X_L\rightarrow \U$ and \X\ is unfinalized.
By Invariant \ref{inv}, this means a \co{splice($\X_L$,U,X)} has been called and $\U\rightarrow \X$.
By Lemma \ref{anc-or-desc}, \U\ is either an ancestor or descendant of \X.
Since \X\ is unfinalized,
$\X_L$ and \U\ are lr-reachable at line \ref{test-right}.
Since all nodes in \X's left subtree are not lr-reachable at the earlier configuration
$C_L$, it follows from the Lemma \ref{remains-unreachable} that \U\ cannot be a descendant of \X.
Thus, \U\ is an ancestor of \X.
By Lemma \ref{desc-properties}, the \co{splice($\X_L$,U,X)} must have 
been called by \co{help} on a Descriptor containing \co{($\X_L$,U,X)} that was stored in 
\co{X->leftDesc}.
Since \X\ is frozen before $rr_{\X}$ begins, the fact that $\X_L\rightarrow \U$ during $rr_{\X}$
contradicts Invariant \ref{inv}.\ref{descriptors-work}.
Thus, $\U\geq_c\X$.
If $\U=\X$ then by Lemma \ref{finalize-works}, line \ref{finalize-cas} finalizes \X\ before~$C$.
If $\U>_c\X$ then \X\ is finalized by Invariant \ref{inv}.

\item
Suppose that in $C$, \X\ is of type 1 and all nodes of its right subtree are finalized.
Then, by Lemma \ref{anc-or-desc} and the first claim above, $\X_R$ is either \co{null} or an ancestor of \X.
In $C_L$, \co{X->left} does not point to any element of $S_L$ by the claim above,
and it cannot point past an unfinalized node by Invariant \ref{inv}.  Thus,
$\X_L$ is the maximum node (with respect to $<_c$) in \X's left subtree 
that is not finalized in $C_L$.  Since all nodes in $S_L$ are finalized in $C_L$,
$\X_L$ is not finalized in $C$ either.
It follows from Invariant \ref{inv}.\ref{well-formed} that throughout the period 
between $C_L$ and $C$, $\X_L \leftrightarrow \X$.

Since all nodes in $\X_L$'s right subtree are finalized in $C$, $\X_L$ is not of type 2 in $C$.
It follows from the induction hypothesis that $X_L$ is not frozen in $C$.
Since there is no \co{remove($\X_L$)} at $\X_L$ in $C$, $\X_L$ is unmarked in $C$.

So, we have $\X_L >_p \X >_p \X_R$ and $rr_{\X}$ completes a call $sul_{\X}$ to \co{\sul($\X_L$,X,$\X_R$)} before $C$.
Since $X_L$ is unmarked at $C$, $sul_{\X}$ does not terminate at line \ref{sul-test-marked}.
Since $\X_L\rightarrow \X$ throughout its execution, $sul_{\X}$ does not terminate at line \ref{sul-test-right}.
Thus, $sul_{\X}$ performs the CAS at line \ref{store-right-desc}.

To derive a contradiction, assume that this CAS fails.  Then some other call to \sul\ installed
a Descriptor \co{($\X_L$,$\X'$,$\X_R'$)} in \co{$\X_L$->rightDesc} between $sul_{\X}$'s executions of line \ref{read-desc2} and \ref{store-right-desc}.
If $\X' >_c \X$ then there was an earlier time when $\X_L\leftarrow \X'$ with $\X_L <_c \X <_c \X'$
which means that \X\ was finalized, a contradiction.
If $\X' = \X$ then the \co{removeRec(X)} that installed the Descriptor \co{($\X_L$,$\X'$,$\X_R'$)} will ensure that \X\ is finalized
before $C$.  
\future{more explanation needed}%
Otherwise, $\X' <_c \X$, so $\X' \in S_L$.
So, $\X'$ is finalized before $C_L$.
Hence, there was a \co{removeRec($\X'$)} that performed a successful CAS at line
\ref{finalize-cas}, \ref{store-right-desc} or \ref{store-left-desc}
that caused $\X'$ to be finalized before $C_L$.
This contradicts Lemma \ref{bound-recursion}.

So, $sul_{\X}$ performs the CAS at line \ref{store-right-desc} successfully, and then calls
\co{splice($\X_L$,X,$\X_R$)} via line \ref{sul-help-new} and that call to \co{splice} must complete before $C$.
It cannot terminate at line \ref{test-right} because $\X_L\rightarrow \X$ throughout its execution.
So it performs line \ref{finalize-cas}, which finalizes $\X$ before $C$, by Lemma \ref{finalize-works}.

\item
Suppose that in $C$, \X\ is of type 1 and all nodes of \X's left subtree are finalized.
Although this case is nearly symmetric to the previous one, there are some differences due to 
asymmetries in the code.

\future{following two parags are very similar to reasoning in case 1; can we pull out a common claim instead?}
Let \V\ be the value of \co{X->left} when $rr_{\X}$ is invoked.
In the configuration $C_L$, which is before $rr_{\X}$ is invoked,
\co{X->left} does not point to any node in \X's left subtree.
By Lemma \ref{order}, \co{X->left} cannot point to any node in \X's left subtree after $C_L$.
Thus, \V\ is not a descendant of \X.
By Lemma \ref{frozen-before-rr}, \X\ is frozen when $rr_{\X}$ is invoked.
By Lemma \ref{frozen-left}, $\V\leftarrow \X$ at all times after $rr_{\X}$ begins.
Thus, $\V=\X_L$ and $\X_L \leftarrow \X$ at all times after $rr_{\X}$ begins.

We next show that if $\X_L \neq \co{null}$ then $\X_L \rightarrow \X$ at all times
between the invocation of $rr_{\X}$ and $C$.
To derive a contradiction, assume that, 
at some configuration $C'$ between the invocation of $rr_{\X}$ and $C$,
$\co{$\X_L$->right} \neq \X$.
Let \U\ be the value of \co{$\X_L$->right} at $C'$.
By Invariant \ref{inv}.\ref{well-formed}, we have $\X_L \leftrightarrow \U \rightarrow \X$
at $C'$ and a \co{splice($\X_L$,U,X)} has been called.
By Lemma \ref{unfinalized-reachable}, \X\ and therefore $\X_L$ and \U\ are lr-reachable
at line \ref{test-right}.
Since all nodes in \X's left subtree are not lr-reachable at the earlier configuration
$C_L$, it follows from the Lemma \ref{remains-unreachable} that \U\ is not a descendant of \X.
Since we have $\U\rightarrow \X$, it follows from Lemma \ref{anc-or-desc} that \U\ is an ancestor of \X, so $\U <_p \X$.
Thus, the call to \co{splice($\X_L$,U,X)} must have 
been called by \co{help} on a Descriptor containing \co{($\X_L$,U,X)} that was stored in 
\co{X->leftDesc}.
Since \X\ is frozen before $rr_{\X}$ begins, the fact that $\X_L\rightarrow \U$ during $rr_{\X}$
contradicts Invariant \ref{inv}.\ref{descriptors-work}.

Let \Z\ be the minimum (with respect to $<_c$) node in \X's right subtree that is not finalized at $C$.
By the second claim above, $\co{X->right} \geq_c \Z$ at $C_R$.
Since \Z\ is not finalized at $C$, it follows from Lemma \ref{order} and Invariant \ref{inv}
that $\X \rightarrow \Z$ at all times between $C_R$ and $C$.
Thus $\Z = \X_R$ and $\X \rightarrow \X_R$ at all times between $C_R$ and $C$.
By the second claim above, $\X \leftarrow \X_R$ at $C_R$.
Since \X\ is not finalized at $C$, it follows from Lemma \ref{order} and Invariant \ref{inv}
that $\X \leftarrow \X_R$ at all times between $C_R$ and~$C$.

At $C$, all nodes in $\X_R$'s left subtree are finalized, so $\X_R$ is not of type~2.  
It follows from the induction hypothesis that $\X_R$ is not frozen in $C$.
Since there is no \co{remove($\X_R$)} at $\X_R$ in $C$, $\X_R$ is unmarked in $C$.

Since $\X_L <_p \X <_p \X_R$, $rr_{\X}$ calls \co{\sur($\X_L$,X,$\X_R$)} at line \ref{call-sur}.
Let $sur_{\X}$ be this call, which
must terminate or recurse before $C$.
It does not terminate at line \ref{sur-test-marked}, since $\X_R$ is unmarked at $C$.
We have already shown that at all times between the invocation of $rr_{\X}$ and $C$,
$\X\leftarrow \X_R$ and if $\X_L\neq\co{null}$ then $\X_L\rightarrow \X$.
Thus, $sur_{\X}$ cannot terminate at line \ref{sur-test}.
So, it performs the CAS at line \ref{store-left-desc}.

\future{Following parag is similar to one in Case 2; pull out a lemma?}
To derive a contradiction, assume that this CAS fails.  Then some other call to \sur\ installed
a Descriptor \co{($\X_L'$,$\X'$,$\X_R$)} in \co{$\X_R$->leftDesc} between $sur_{\X}$'s executions of line \ref{read-desc3} and \ref{store-left-desc}.
If $\X' <_c \X$ then there was an earlier time when $\X'\rightarrow\X_R'$ with $\X' <_c \X <_c \X_R$
which means that \X\ was finalized (by Invariant \ref{inv}), a contradiction.
If $\X' = \X$ then the \co{removeRec(X)} that installed the Descriptor \co{($\X_L'$,$\X'$,$\X_R$)} will ensure that \X\ is finalized
before $C$.  
\future{more explanation needed}%
Otherwise, $\X' >_c \X$, so $\X' \in S_R$.
So, $\X'$ is finalized before $C_R$.
Hence, there was a \co{removeRec($\X'$)} that performed a successful CAS at line
\ref{finalize-cas}, \ref{store-right-desc} or \ref{store-left-desc}
that caused $\X'$ to be finalized before $C_R$.
This contradicts Lemma \ref{bound-recursion}.
\qedhere
\end{enumerate}
\end{description}
\end{proof}

Finally, we can present the proof of the overall space bound.

\removeSpace* 

\begin{proof}
We divide the lr-reachable nodes into several categories in order to bound the number of them.
There are $L-R$ nodes that are not removable.
If a removable node \X\ is not frozen, then the \co{remove(X)} operation is still at \X,
so there are at most $P$ such nodes.
If a node \X\ is finalized but still lr-reachable, then there is a \co{splice(*,X,*)} that
performed line \ref{finalize-cas}, but not line \ref{right-cas}, by Lemma \ref{full-splice-works}, so there are at most $P$ such nodes.
It remains to bound the number of nodes that are frozen but not finalized.
By Lemma \ref{type-0-1-bound}, there are $O(P\log L)$ such nodes of types
0 and 1.  
By Lemma \ref{type-2-bound}, the total number of unfinalized nodes of type 2 is at most $L-R +O(P\log L)$.
Thus, the number of lr-reachable nodes is at most $2(L-R) + O(P\log L)$.
\end{proof}


\section{Details of Memory Reclamation for Version Lists}
\label{sec:smr-list-proof}


\subsection{Pseudo-code}

As described in Section \ref{sec:smr-list}, we must make some adjustments to the list implementation
of Section \ref{sec:vlists} to allow for efficient reclamation of list nodes.
The majority of the VersionList methods are implemented the same way as in Figure \ref{fig:list-alg-full} except with raw pointers replaced by reference-counted pointers.
Some methods require more involved changes; only these methods are shown in Figure \ref{fig:list-alg-full-gc},
with the changes shown in blue.

Using the reference counting scheme from~\cite{ABW21} involves adding a reference count field to each Node and Descriptor object and replacing raw pointers to \y{Node}s or Descriptors with reference-counted pointers (of type \rcptr{} and \arcptr{}), which automatically manage the reference counts of the objects they point to.
The \rcptr{} type supports read, write and dereferencing just like raw pointers,
and it is typically used for pointers stored in local memory.
The \arcptr{} (a.k.a. \texttt{atomic\_rc\_ptr} in~\cite{ABW21}) type additionally supports CAS,
 and is typically used for pointers stored in shared memory.


\begin{figure*}[!thp]\small
  \hspace{-10mm}
  \begin{minipage}[t]{.55\textwidth}
  	\StartLineAt{200}
    \begin{lstlisting}[linewidth=.99\columnwidth, numbers=left,frame=none]
class Node {
  arc_ptr<Node> left, right; // initially null
  enum status {unmarked,marked,finalized};
    // initially unmarked
  int counter; // used to define priority
  int priority; // defines implicit tree
  int ts; // timestamp
  arc_ptr<Descriptor> leftDesc, rightDesc;  
    // initially null
  @\hil{int refCount;}@ };

class Descriptor {
  arc_ptr<Node> A, B, C;
  @\hil{int refCount;}@ };

class VersionList {
  arc_ptr<Node> Head;

  rc_ptr<Node> find(rc_ptr<Node> start, 
                      int ts) {
    rc_ptr<Node> cur = start;
    while(cur != null && cur->ts > ts) {
      @\hil{rc\_ptr<Node> next = cur->left;}@
      @\hil{if(next == $\top$) next = cur->right;}@
      @\hil{cur = next;}@ }
    return cur; } };
  \end{lstlisting}
\end{minipage}\hspace{.3in}
\begin{minipage}[t]{.65\textwidth}
  \StartLineAt{226}
  \begin{lstlisting}[linewidth=.99\textwidth, numbers=left, frame=none]
  bool tryAppend(rc_ptr<Node> B, rc_ptr<Node> C) {
    if(B != null) {
      C->counter = B->counter+1;
      rc_ptr<Node> A = B->left;             @\label{mr-read-left-gc}@
      @\hil{if(A == $\top$) return false;}@   @\label{check-top-gc}@
      if(A != null) A->right.CAS(null, B); @\label{mr-cas1-gc}@
    } else C->counter = 2;
    C->priority = p(C->counter);          @\label{set-priority-gc}@
    C->left = B;                  @\label{set-left-gc}@
    if(Head.CAS(B, C)) {              @\label{head-cas-gc}@
      if(B != null) B->right.CAS(null, C); @\label{mr-cas2-gc}@
      return true;
    } else return false; }

  bool splice(rc_ptr<Node> A, rc_ptr<Node> B,
                                 rc_ptr<Node> C) {
    // B cannot be null
    if(A != null && A->right != B) return false;              @\label{test-right-gc}@
    bool result = CAS(&(B->status), marked, finalized);    @\label{finalize-cas-gc}@
    if(C != null) C->left.CAS(B, A);                    @\label{left-cas-gc}@
    if(A != null) A->right.CAS(B, C);                   @\label{right-cas-gc}@
    @\hil{if(A != null \&\& A->priority > B->priority)}@
        @\hil{B->left = $\top$;}@     @\label{clear-left-gc}@
    @\hil{if(C != null \&\& C->priority > B->priority)}@
        @\hil{B->right = $\top$;}@    @\label{clear-right-gc}@
    return result; } };
  \end{lstlisting}
\end{minipage}
\vspace{-.1in}
\caption{Changes to the implementation of version lists to accommodate memory management.
}
\label{fig:list-alg-full-gc}
\end{figure*}

\subsection{Time Bounds for Memory-Managed Version Lists}

In this section, we prove Theorem \ref{thm:step3-time}.
We divide the proof into the following pieces.
\begin{itemize}
	\item Lemma~\ref{lem:ref-count} shows that the amortized number of steps for reference counting operations is constant.
	\item Lemma~\ref{lem:remove-with-gc} shows that the changes to \op{remove} do not affect its constant amortized time bound.
	\item Lemma~\ref{lem:traversal} shows that the modified \op{find} operation does not go back to the same node multiple times.
\end{itemize}

We assume that memory can be allocated and de-allocated in constant time.
Since nodes and Descriptor objects have a fixed size, we can make use of recent work on implementing concurrent fixed-sized \op{alloc} and \op{free} in constant time~\cite{blelloch2020brief}.

\begin{lemma}
	\label{lem:ref-count}
	Read-only operations on \rcptr{} and \arcptr{} (including reading and dereferencing) take O(1) steps.
	If each 
	object contains $O(1)$ reference-counted pointers, the expected amortized number of steps for each update operation (store and CAS) on \rcptr{} 
	and \arcptr{} is $O(1)$.
\end{lemma}

\begin{proof}
	Theorem 1 from~\cite{ABW21} states that read-only operations perform $O(1)$ steps, and that the expected number of steps for update operations, ignoring the call to \op{delete}, is constant.
	The \op{delete} operation takes as input a pointer and it frees the object being pointed to. If the object contains any reference counted pointers, the reference count of those pointers are decremented. If these decrements cause any objects' reference counts to hit 0, then delete is recursively called on those objects.

	It suffices to argue that the cost of \op{delete} can be amortized to earlier update operations.
	When the \op{delete} operation is called on an object $O$, for each reference counted pointer $ptr$ that belongs to $O$, \op{delete} decrements the reference count of the object pointed to by $ptr$.
	After decrementing the reference counts, the \op{delete} operation frees the memory occupied by $O$, which takes constant time.
	If an object's reference count hits 0 due to one of the earlier decrements, \op{delete} is recursively called on that object.
	Since each object contains a constant number of reference counted pointers, the step complexity of a \op{delete} operation is proportional to the number of decrements performed by the \op{delete} operation.
	We can amortize these reference count decrements as follows. Suppose the \op{delete} operation decrements the reference count of object $Y$ because it was pointed to by field $F$ of object $X$. We charge the cost of this decrement to the update operation that caused $X.F$ to point to $Y$.
	In this way, we guarantee that each update operation is charged with at most one decrement from a \op{delete} operation.
	Therefore, each update operation on a reference-counted pointer takes an amortized expected constant number of steps.
\end{proof}

The following lemma proves both the correctness of the \op{remove} algorithm in Section~\ref{sec:smr-list} as well as its time bounds.
It is convenient to combine these two arguments because they both boil down to arguing that the \op{remove} algorithm in Section~\ref{sec:smr-list} behaves the same way as in Section~\ref{sec:vlists}.
We show that $R$ remove operations takes $O(R)$ steps plus the cost of $O(R)$ operations on reference-counted pointers. 

\begin{lemma}
	\label{lem:remove-with-gc}
	Theorem~\ref{space-bound} holds for the \op{remove} algorithm in Section~\ref{sec:smr-list}. 
	Similar to Theorem~\ref{remove-amortized-time}, $R$ \op{remove} operations take $O(R)$ steps plus the cost of $O(R)$ operations on reference-counted pointers.
\end{lemma}

\begin{proof}
	We begin by comparing Figure~\ref{fig:list-alg-full-gc} with Figure~\ref{fig:list-alg-full} and arguing that the changes made in Figure~\ref{fig:list-alg-full-gc} do not affect the behavior of the original algorithm.
	The major change is Lines \ref{clear-left-gc} to \ref{clear-right-gc} of \op{splice}, which set the 
	\co{left} and \co{right} pointers of the spliced out node to $\top$ if they point to descendants, 
	as well as the additional if statement on Line \ref{check-top-gc}.
	This change only adds constant time overhead to \op{splice} and \op{tryAppend}, but we need to argue that it does not change the time complexity and correctness of other parts of \op{remove} or \op{tryAppend}.
	To do this, we need to go through every read and CAS on \co{left} or \co{right} pointers that could have potentially seen the value $\top$ and argue that the operation would have behaved the same way had it seen the value that was overwritten by $\top$.
	
	We begin by reasoning about CAS operations performed on \co{left} and \co{right} fields by the algorithm of Figure \ref{fig:list-alg-full}. 
	It follows from Lemma \ref{splice-pre} and Lemma \ref{order} that the CAS steps on line \ref{left-cas} and \ref{right-cas} of \co{splice} use distinct values for the expected and new value.
	By Invariant \ref{inv}.\ref{splice-args}, \co{B->left} and \co{B->right} never change after \op{splice(*,B,*)} is invoked.  Thus, CAS operations on the \co{left} and \co{right} pointers of \op{B} can never succeed. 
	Therefore setting \op{B}'s \co{left} or \co{right} pointer to $\top$ during a \op{splice(*,B,*)} does not impact the return value of any other CAS operation.
	
	Consider the reads on Lines \ref{read-left} and \ref{read-right}. We check if \B\ is finalized immediately after the two reads. If either of them were $\top$, then the check will fail, so the operation returns without making use of the invalid pointer.
	
	Consider the read of \op{C->left} on line \ref{sur-test} of \co{\sur(A,B,C)}. If it returns $\top$, then \C\ was marked and frozen, so even if the previous value of \op{C->left} was \B\, the \op{spliceUnmarkedRight} would have returned false anyways after failing the next CAS on Line \ref{store-left-desc}, by Lemma \ref{desc-frozen}.
	The same argument applies to the read of \op{A->right} at line \ref{sul-test-right} of \co{\sul(A,B,C)}.
	\er{Now consider a read of \op{A->right} on line \ref{sur-test} of
	\co{\sur(A,B,C)} or on line \ref{test-right} of \co{splice(A,B,C)}.	
	If it returns \B, then there cannot have been a call to \co{splice(*,A,*)} prior to this test,
	because by Invariant \ref{inv}.\ref{splice-args}, it would have to be a
	\co{splice(*,A,B)},
	which would violate Invariant \ref{inv}.\ref{no-overlap}.
	Thus, in the modified algorithm, \op{A->right} cannot have been changed to $\top$ before the read
	of \co{A->right} occurs.
	So the tests on lines \ref{sur-test}, \ref{sul-test-right} and \ref{test-right} 
	evaluate to the same value
	with or without the modification to the \co{splice} routine.
	Therefore, the return value and behavior
	of \co{splice}, \sul\ and \sur\ do not change. }
	
	
	For the read on Line \ref{mr-read-left-gc}, if \op{B->left} is $\top$, then \B\ has already been finalized, so a newer node has already been installed, in this case the \op{tryAppend} is guaranteed to fail and it will return false on Line \ref{check-top-gc}. Conversely, if \op{tryAppend} returns false on Line \ref{check-top-gc}, then \B\ is no longer at the head of the version list so the CAS on Line \ref{head-cas-gc} is guaranteed to fail.
	
	Therefore, adding Lines \ref{clear-left-gc}--\ref{clear-right-gc} and \ref{check-top-gc} to Figure \ref{fig:list-alg-full} does not affect its
	\Eric{What is "it"?} 
	behavior and this change just potentially causes some operations to return sooner, so the same arguments for correctness and time bounds apply (Theorems \ref{remove-amortized-time} and~\ref{space-bound}).
	
	The only remaining changes is that some raw pointers in Figure~\ref{fig:list-alg-full} are replaced with reference-counted pointers in Figure~\ref{fig:list-alg-full-gc}.
	This does not affect the behavior (and therefore the correctness) of the algorithm as long as safety is ensured (i.e. a node or Descriptor is not accessed after it is freed).
	Our algorithm ensures safety because all access to nodes and Descriptors are done through reference counted pointers and these pointers prevent it from being freed.	
	In terms of time bounds, \op{tryAppend} performs a constant number of reference-counted pointer operations and $R$ \op{remove} operations perform at most $O(R)$ reference-counted pointer operations, so the lemma holds.
\end{proof}

\future{Think of a better name for forward/upward traversals, since upward doesn't really mean up the tree}

\er{
\begin{definition}
\label{def:up-down-traversal}
	An update to \co{cur} in the \co{find} routine made by following a \co{left} or \co{right} pointer is called a \emph{traversal}.
	An \emph{upward traversal} is one that follows a link from an finalized node to another node.
	A \emph{forward traversal} is one that follows a link from a unfinalized node to another node. 
\end{definition}

A forward traversal always follows a \co{left} pointer.
We use the term upward traversal because
most upward traversals move from a node to an ancestor. However, this is not necessarily the case:  if a node has been finalized but its \co{left} pointer has not yet been set to $\top$, then an upward traversal from the node might go to a descendant.
}

\begin{lemma}
	\label{lem:traversal}
	The source nodes of upward traversals in a \co{find} operation are all distinct.
	Similarly, the destination nodes of forward traversals in a \co{find} are all distinct.
\end{lemma}

\begin{proof}
	We first prove the claim about upward traversals. 
	\er{Suppose there is an upward traversal from $\X_0$ to $\X_1$.
	Let $\X_2, \X_3, \ldots$ be the sequence of nodes visited after $\X_1$.
	Assume there is another upward traversal after the one from $\X_0$ to $\X_1$, and let
	the first such upward traversal be from $\X_k$ to $\X_{k+1}$.
	We argue that $\X_k$ becomes finalized after $\X_0$ becomes finalized by considering two cases.
	
	First, suppose $k=1$.
	Consider the time when the \co{find} traverses from $\X_0$ to $\X_1$.
	Either $\X_0\rightarrow \X_1$ or $\X_1\leftarrow \X_0$.
	$\X_0$ was finalized by line \ref{finalize-cas} of a \co{splice(*,$\X_0$,*)}.
	By Invariant \ref{inv}.\ref{splice-args}, that \co{splice} 
	was of the form \co{splice($\X_2$,$\X_0$,$\X_1$)} or \co{splice($\X_1$,$\X_0$,$\X_2$)} for some $\X_2$.
	By Invariant \ref{inv}, $\X_1$ was unfinalized when that pointer was swung to make $\X_0$ \lrunreachable.
	
	Now, suppose $k>1$.  By definition, the traversal from $\X_{k-1}$ to $\X_k$ is a forward
	traversal.  Thus, at the time the traversal is performed, 
	$\X_{k-1}$, and therefore $\X_k$, is \lrreachable.
	Thus, $\X_k$ becomes \lrunreachable\ after $\X_0$.
	
	If we now consider the sequence of source nodes of the upward traversals in the \co{find},
	it follows that each becomes \lrunreachable\ earlier than the next one in the sequence,
	so the same node can never appear as the source of two upward traversals.
	}
		
	
	Next, we prove the claim about forward traversals.
	Suppose there is a forward traversal $f$ from \Y\ to \Z.
	To derive a contradiction, assume that at some later configuration, there is another forward traversal $f'$ from \X\ to \Z.   (\X\ might be equal to \Y.)
	Consider the path of nodes $\pi = \Z, \U_1, \U_2, ..., \U_k = \X, \Z$ that the traversal visits between $f$ and $f'$.
	When the forward traversal $f'$ from \X\ to \Z\ occurs, \X\ is \lrreachable\ and therfore \Z\ is \lrreachable.  Thus, \Z\ is \lrreachable\ when the earlier traversal from \Z\ to $\U_1$ occurs since once \Z\ becomes \lrunreachable{}, it cannot become \lrreachable{} again, by Lemma \ref{remains-unreachable}.
	Therefore, the traversal from \Z\ to $\U_1$ is a forward traversal.
	By Lemma \ref{order}, we know that $\U_1 <_c \Z$ because $\U_1$ was a left neighbor of \Z.
	By the same lemma, we know that $\X >_c \Z$.
	Therefore, there are two consecutive nodes, \A\ and \B\, in $\pi$ such that $\A \leq_c \Z$
	and $\B>_c \Z$.
	By Lemma \ref{order}, the traversal from \A\ to \B\ must follow a \co{right} pointer and is therefore an upward traversal.
	$\A \neq \Z$ because only forward traversals can be performed from \Z\ and these always go left.
	Therefore, $\A <_c \Z$.
	Since $\pi$ performs an upward traversal from \A\ to \B, \A\ must have been \lrunreachable{} (and therefore finalized) at the time of the traversal and $\A \rightarrow \B$ so by Invariant \ref{inv}.\ref{no-skip-right}, all nodes \D\ such that $\A \leq_c \D \leq_c \B$ are unreachable.
	That would imply that \Z\ is unreachable, which is a contradiction.
	Therefore, destination nodes of forward traversals are always distinct.
%
\end{proof}

\begin{lemma}
	\label{lem:up-traversal}
	If \op{find} performs an upward traversal from \B\ to \C, then \C\ is a proper ancestor of \B.
\end{lemma}

\begin{proof}
	If \op{find} performs an upward traversal from \B\ to \C, then \co{B->left} is $\top$.
	So, \B\ must have already been finalized by some \op{splice(A,B,C$'$)} operation.
	By Invariant \ref{inv}.\ref{splice-args}, $\B \rightarrow \C'$ at the time of the upward traversal from \B\ to \C, so $\C = \C'$.
	To complete the proof, we argue that \C\ is a proper ancestor of \B.
	Since \co{B->left} was set to $\top$, we know that $\A \geq_p \B$ so by Lemma \ref{splice-priorities}, we know that $\B \geq_p \C$.
	By Lemma \ref{anc-or-desc}, \co{B->right} is either a proper ancestor or a proper descendant of \B, so \C\ is a proper ancestor of \B.
\end{proof}

\Eric{I changed "\A\ is to the left of \B" to "$\A <_c \B$" in following lemma, since "to the left/right" is a bit unclear.}

\begin{lemma}
	\label{lem:common-ancestor}
	Suppose \A\ and \B\ are two nodes in a list such that $\A <_c \B$. If \C\ is an ancestor of \A\ and $\C >_c \B$ then \C\ is also an ancestor of \B.
\end{lemma}

\begin{proof}
	Let \C\ be an ancestor of \A\ such that $\C >_c \B$.
	Our goal is to show that \C\ is also an ancestor of \B.
	Let \D\ be the lowest common ancestor of \A\ and \B.
	Either $\A = \D$ or \A\ is in the left subtree of \D.
	Similarly, either $\B = \D$ or \B\ is in the right subtree of \D.
	So, each node \E\ along the path from \A\ to \D\ satisfies $\E \leq_c \D \leq_c \B$.
	Since $\C >_c \B$, \C\ is not one of the nodes on the path from \A\ to \D,
	so \C\ is a proper ancestor of \D, and therefore of \B.
\end{proof}

\begin{lemma}
\label{lem:new-nodes-ancestors}
	If the traversal of a \op{find} operation starts at node \V, then the nodes visited by the traversal to the right of \V\ are ancestors of \V.
\end{lemma}

\begin{proof}
	Let $\X_1, \X_2, \ldots$ 
	be the sequence of distinct nodes visited whose counter values are greater than \V, in the order in which they were first visited.
	We proceed by induction on this set of nodes.
	Consider the node $\X_1$.
	The traversal must have arrived at this node by  leaving a node \U, where $\U \leq_c \V$.
	By Lemma \ref{order} this traversal followed \co{U->right}, so it must have been an upward traversal.
	By Lemma \ref{lem:up-traversal}, $\X_1$ is an ancestor of \U.
	By Lemma \ref{lem:common-ancestor}, $\X_1$ must also be an ancestor of \V.
	For the inductive step, suppose $\X_i$ is an ancestor of \V\ for all $i < j$.
	It suffices to show that $\X_j$ is also an ancestor of \V.
	There are two ways the traversal could have arrived at $\X_j$: (1) from some node $\X_i$ where $i < j$, (2) from some node \U\ with $\U \leq_C \V$.
	Case (2) is the same as the base case, so we consider case (1).
	Since an upward traversal was performed from $\X_i$ to $\X_j$, by Lemma \ref{lem:up-traversal}, $\X_j$ must be an ancestor of $\X_i$.
	Since $\X_i$ is an ancestor of \V, $\X_j$ is also an ancestor of \V.
	Therefore, $\X_1, \X_2, \ldots$ are all ancestors of \V.
\end{proof}

\Eric{Note change in wording of following lemma; the old wording said c is the number of tryAppends concurrent with the find.  But its possible to read V from the head, then 1000 nodes are appended to list, then find(V,t) is called.  Those 1000 tryAppends could contribute to c parameter used to bound the time for the find even though they are not concurrent with the find.}
\begin{lemma}
	\label{lem:new-nodes-visited}
	If the traversal of a \op{find} operation starts at node \V, the number of distinct nodes visited to the right of \V\ is bounded above by both the depth of \V\ in the implicit tree and $O(\log c)$, where $c$ is the number of successful \op{tryAppend} operations \er{from the time \V\ was the head of the list and the end of the} \op{find} (on the same object).
\end{lemma}

\begin{proof}	
	Let $\X_1, \X_2, ..., \X_k$ be the $k$ distinct nodes visited by the \op{find} operation whose counter values are larger than \V's, sorted so that $\X_1 <_p \X_2 <_p \cdots <_p \X_k$.
	Each of these $k$ nodes must be an ancestor of \V, so $k$ is at most the depth of \V.
	To prove the second part of the claim, it suffices show that $c$
	is at least $2^{k-1}$.
	Each node $\X_i$ was added by a \co{tryAppend} between the time \V\ was the head of the list and the end of the \op{find}.
	Since $\V <_c \X_i$, \V\ must be in the left subtree of $\X_i$ for all $i$.
	Thus, all of $\X_2, \ldots, \X_k$ are in the left subtree of $\X_1$.
	This subtree (like all left subtrees) is a complete binary tree.
	Hence, $\X_2$ is the root of a complete binary tree that contains
	all the nodes $\X_2, \X_3, ..., \X_k$, so its height is at least $k-1$. This means that the right subtree of $\X_2$ has $2^{k-1}$ nodes and all of these nodes have counter values greater than \V,
	which means that $c \geq 2^{k-1}$.
\end{proof}

\memoryReclamationTime* 

\begin{proof}
\Hao{requires editing}
Since \op{tryAppend} and \op{getHead} have no loops, they perform $O(1)$ steps, plus $O(1)$ operations on reference-counted pointers.
By Lemma \ref{lem:remove-with-gc}, we know that \co{remove} takes amortized $O(1)$ time plus amortized $O(1)$ operations on reference counted pointers.

Next, we  show that the number of steps and reference-counted pointer operations
performed by a \op{find} is $O(n + \min(d, \log{} c))$.
It suffices to show that the number of distinct nodes visited by the \op{find} operation is $O(n + \min(d, \log{} c))$ because it follows from Lemma \ref{lem:traversal} that the number of upward and forward traversals is at most the number of distinct nodes visited.
Each forward and upward traversal contributes $O(1)$ steps plus the cost of $O(1)$ reference counted pointer operations.

By Lemma \ref{lem:new-nodes-visited}, we know that the number of distinct nodes visited whose counter values are greater than \V's is at most $O(\min(d, \log{} c))$.
Let $S$ be the set of nodes visited whose counter values are less than \V's.
Next, we argue that $|S|$ is at most $O(n)$.
This is because whenever we do a forward or upward traversal, we know that the timestamp of the source node is larger than the timestamp of the \op{find} operation.
Thus, all nodes in $S$ have timestamp greater than \ts\ except one.
All nodes in $S$ must have also been reachable from \V\ at the start of the \op{find} operation. \Hao{needs some justification}
Therefore, there are at most $O(n)$ nodes in $S$.

The final step is to argue that the reference counting operation performed by the version list operations all take $O(1)$ amortized expected time.
This is a little tricky to show because there could be operations on reference counted pointers outside of the version list operations.
However, if we example the proof of Lemma~\ref{lem:ref-count}, no decrement operations will ever be charged to operations on reference counted pointers outside of the version list operations.
Therefore, the amortized expected time complexity of \op{tryAppend}, \op{getHead}, \op{remove}, and creating a new version list is $O(1)$, and the amortized expected time complexity of \op{find(V, ts)} is $O(n + \min(d, \log c))$.
%
%
\end{proof}

\subsection{Space Bounds for Memory-Managed Version Lists}

\Hao{When we say reference counted pointer, we mean either \rcptr{} or \arcptr{}.}

\er{The following observation follows from the fact that the RC scheme of \cite{ABW21} 
has $O(P^2)$ delayed decrements, as discussed in Section \ref{sec:smr-list}.}

\begin{observation}
	\label{obs:rc-ptr-space}
	If there are a total of $O(K)$ reference-counted pointers to a particular type of object, then the number of objects of that type that have been allocated, but not freed is $O(K+P^2)$.
\end{observation}
\memoryReclamationSpace* 

\begin{proof}
	Let $T$ be the total number of \lrreachable{} nodes across all the version lists.
	By Lemma~\ref{lem:remove-with-gc}, we know that Theorem~\ref{space-bound} holds for the modified version list algorithm in Section~\ref{sec:smr-list}, so $T \in O((L-R) + P \log L)$.
	To prove this theorem, it suffices to show that the number of \vnode{}s and Descriptors is at most $O(T + (P^2 + K) \log L)$.
	We begin by defining some useful terminology.
	An \emph{\upptr{}} is a \er{\co{left} or \co{right}} pointer from a \vnode{} \A\ to another \vnode{} \B\ such that \B\ is an ancestor of \A\ in the implicit tree.
	Similarly, a \emph{\downptr{}} is a pointer from \A\ to \B\ such that \B\ is a descendant of \A\ in the implicit tree.
	By Lemma \ref{anc-or-desc}, every pointer between two active \vnode{}s is either an \upptr{} or a \downptr{}.
	A \emph{source node} is an \lrunreachable{} node that is not pointed to by any \upptr{}s.
	
	We begin by proving the claim that all \lrunreachable{} nodes are reachable by following \upptr{}s starting from some source node.
	Starting from an \lrunreachable{} node \A, consider the following process: if \A\ is a source node, then terminate; otherwise there exists an \lrunreachable{} node \B\ with an \upptr{} to \A\ (\B\ cannot be \lrreachable{} because that would mean \A\ is \lrreachable{} as well), so we repeat this process starting from \B.
	This process is guaranteed to terminate because \upptr{}s do not form a cycle, and this proves the claim.
	
	Next, we show that there are at most $O(T + P^2)$ Descriptors that have been allocated but not freed.
	This is because there are at most $O(P)$ Descriptors pointed to by local pointers (these pointers are temporarily held by a process when it is in the middle of a version list operation), and $O(T)$ Descriptors pointed to by \lrreachable{} nodes.
	Nodes that are \lrunreachable{} do not point to any non-\co{frozen} Descriptor objects.
	This is because a \vnode{} cannot point to any Descriptor object until it is active, and if a node is active and \lrunreachable{}, then it must have been finalized, by Lemma \ref{unfinalized-reachable}, so its descriptor fields have already been frozen, by Lemma \ref{frozen-before-rr}.
	Therefore, there are at most $O(T+P)$ pointers to Descriptor objects, and by Observation \ref{obs:rc-ptr-space}, there are at most $O(T+P^2)$ Descriptor objects that have been allocated, but not freed.
	
	It remains to show that there are $O(T + (P^2 + K) \log L)$ \vnode{}s and we prove this by bounding the number of \lrunreachable{} \vnode{}s by the same amount.
	
	Next, we show that there are  $O(T + P^2 + K)$ source nodes. 
	This is because source nodes are either pointed to by (1) Descriptor objects, (2) a local pointer held during version list operations, (3) \downptr{}s or (4) user pointers. There are  $O(T + P^2)$ pointers of type (1), $O(P)$ of type (2) and (3), and $O(K)$ of type (4).
	To show the bounds for type (3), there are only two ways for a \lrunreachable{} node to be pointed to by a \downptr{}. 
	First, it could be that the node pointing to it has been spliced out and its \co{left} or \co{right} pointer has not yet been set to $\top$. 
	Second, it could be that the node pointing to it was newly allocated for a \op{tryAppend} that is now guaranteed to fail. 
	There can only be a constant number of such cases per process, which proves the bound.
	The \co{Head} pointer of a version list can never point to a source node, since
	we set the \co{Head} pointer to \op{null} before calling \co{remove} on the head of a version list, \Hao{I think we should change the code to do this.} which ensures \co{Head} always points to an \lrreachable{} node.
	Therefore, there is a total of $O(T + P^2 + K)$ reference-counted pointers to source nodes, which results in $O(T + P^2 + K)$ source nodes that have been allocated and not freed, by Observation \ref{obs:rc-ptr-space}.
	
	From each source node, there are at most $O(\log L)$ nodes reachable by following \upptr{}s. 
	This is because all the reachable nodes are ancestors of the source node, and each node can only have $O(\log L)$ ancestors.
	However, some of these $O(\log L)$ nodes are \lrreachable{} and we only want to count the \lrunreachable{} ones.
	It turns out that whenever a source node \C\ is pointed to by a Descriptor installed in a \vnode{}, the left and right neighbors of \C\ are both \lrreachable{}, so all of the $O(\log L)$ nodes reachable from \C\ by following \upptr{}s are \lrreachable{}. \Hao{Ideally requires some more justification if we have time before the deadline.}\Eric{Seems plausible (though I'm not 100\% convinced) but non-trivial to prove}
	There are $O(T)$ source nodes that are pointed to by a Descriptor installed in a VNode and $O(P^2 + K)$ sources nodes that are not.
	For each source node in the latter category, there could be $O(\log{} L)$ \lrunreachable{} nodes that are reachable from that node by following \upptr{}s. 
	Therefore, the number of \lrunreachable{} nodes is bounded by $O(T + (P^2 + K) \log{} L)$.
\end{proof}


\section{Details of Snapshottable Data Structure Application}
\label{sec:application-details}


\subsection{Pseudo-code for Memory-Managed VersionedCAS Objects}
\label{vcas-description}

\begin{figure*}[!th]\small
	\hspace{-10mm}
	\begin{minipage}[t]{.55\textwidth}
		\begin{lstlisting}[linewidth=.99\columnwidth, numbers=left,frame=none]
class Camera {
	int timestamp;
	@\hil{RangeTracker<VNode*> rt;}@
	Camera() { timestamp = 0; }   @\label{line:ss-con}@
	
	int takeSnapshot() {
		int ts = @\hil{rt.Announce(\&timestamp);}@ @\label{line:read-time}@
		CAS(&timestamp, ts, ts+1);    @\label{line:inc-time}@
		return ts; }
	
	@\hil{void unreserve()}@ {
		@\hil{rt.\Unannounce();}@ } };

class VNode {
	Value val; int ts;
	@\hil{arc\_ptr}@<VNode> left, right; 
	// the remaining variables are only
	// used by remove() and tryAppend()
	@\hil{int priority; int counter;}@
	@\hil{enum status; int refCount;}@
	VNode(Value v) {
		val = v; ts = TBD; } };

class VersionedCAS {
	@\hil{VersionList VList;}@
	Camera* C;
	VersionedCAS(Value v, Camera* c) {           @\label{line:vcas-con}@
		C = c;
		@\hil{rc\_ptr}@<VNode> node = new VNode(v);
		@\hil{VList.tryAppend(null, node);}@
		initTS(node); }	
				\end{lstlisting}
			\end{minipage}\hspace{.3in}
			\begin{minipage}[t]{.55\textwidth}
				\StartLineAt{32}
				\begin{lstlisting}[linewidth=.99\textwidth, numbers=left, frame=none]
void initTS(@\hil{rc\_ptr}@<VNode> n) {
	if(n->ts == TBD) {               @\label{line:init1-gc}@
		int curTS = C->timestamp;       @\label{line:readTS-gc}@
		CAS(&(n->ts), TBD, curTS); } }   @\label{line:casTS-gc}@

Value readVersion(int ts) {
	@\hil{rc\_ptr}@<VNode> node = @\hil{VList.getHead()}@;     @\label{line:readss-head-gc}@
	initTS(node);                     @\label{line:readss-initTS-gc}@
	@\hil{node = VList.find(node, ts);}@
	return node->val; } // node cannot be null       @\label{line:readss-ret-gc}@

Value vRead() {
	@\hil{rc\_ptr}@<VNode> head = @\hil{VList.getHead()}@;         @\label{line:readHead-gc}@
	initTS(head);                         @\label{line:read-initTS-gc}@
	return head->val; }

bool vCAS(Value oldV, Value newV) {
	@\hil{rc\_ptr<VNode>}@ head = @\hil{VList.getHead()}@;         @\label{line:vcas-head-gc}@
	initTS(head);                         @\label{line:vcas-initTS-gc}@
	if(head->val != oldV) return false;   @\label{line:vcas-false1-gc}@
	if(newV == oldV) return true;         @\label{line:vcas-true1-gc}@
	rc_ptr<VNode> newN = new VNode(newV);        @\label{line:newnode-gc}@
	if(@\hil{VList.tryAppend(head, newN)}@) {        @\label{line:append-gc}@
		initTS(newN);                       @\label{line:initTSnewN-gc}@
		@\hil{List<VNode*> redundant = C.rt.\retireInterval(}@
		@\hil{toRawPtr(head), head.ts, newN.ts);}@
		@\hil{forevery(node in redundant)}@ {
			@\hil{// \textit{convert from raw pointer to rc\_ptr}}@
			@\hil{VersionList.remove(rc\_ptr<VNode>(node));}@ } @\label{line:to-rc-ptr}@
		return true; }                      @\label{line:vcas-true2-gc}@
	else {
		delete newN;                        @\label{line:scdelete-gc}@
		initTS(@\hil{VList.getHead()}@);                      @\label{line:vcas-initTSHead-gc}@
		return false; } } };                   @\label{line:vcas-false2-gc}@
			\end{lstlisting}
		\end{minipage}
		\vspace{-.1in}
		\caption{Indirect \vCAS{} implementation (black text) using the range-tracking and version list maintenance interface (blue text). Functions \texttt{announce}, \texttt{unannounce} and \retireInterval\ are defined in the range-tracking object (see Section \ref{sec:identify}). Functions \texttt{tryAppend} and \texttt{remove} are defined in the version list maintenance interface (see Section \ref{sec:vlists}).
		}
		\label{fig:vcas-alg-youla-gc}
	\end{figure*}

%
%
%
%
%
%
%

Wei et al.'s implementation~\cite{WBBFRS21a} of VersionedCAS objects appears as the black text of
Figure~\ref{fig:vcas-alg-youla-gc}.
Additions we make to the algorithm for memory management are shown in blue.  
A \co{takeSnapshot} operation
attempts to increment the counter of the Camera object and returns the old value of the counter as a snapshot handle.
Each vnode contains a value
that has been stored in the VersionedCAS object, a pointer to the next older vnode and a timestamp
value read from the associated Camera object.
A \co{vRead} obtains the current value of the VersionedCAS object from the head of the list.
A $\co{vCAS}(old,new)$ operation checks that the head of the list represents the value $old$ and, if so,
attempts to add a new vnode to the head of the list to represent the value $new$.
A \co{versionedRead} operation using a timestamp $s$ returned by a \co{takeSnapshot}
traverses the version list until it finds a vnode whose timestamp
is less than or equal to $s$, and then returns the value stored in that vnode.
To ensure linearizability, a number of helping mechanisms are incorporated into the
implementation.  See \cite{WBBFRS21a} for details.

To facilitate garbage collection, we add an \op{unreserve} operation to the Camera object, which a process can use
to release a particular snapshot handle, guaranteeing that the process will not subsequently use it
as an argument to \co{readVersion}.

\subsection{Proof of time bounds}

\begin{observation}
	\label{obs:deprecate-remove}
	\item An execution with $M$ calls to \op{vCAS} contains $M$ calls to \op{deprecate} and at most $M$ calls to \op{remove}.
\end{observation}

\applicationTime* 

\begin{proof}
%
Each snapshot query begins with a \op{Camera.takeSnapshot} returning some timestamp \op{ts}, ends with a \op{Camera.unreserve} and performs some \op{readVersion(ts)} operations in between.
By inspection of the pseudo-code and by Theorem~\ref{step1-time}, we can see that the \op{takeSnapshot} and \op{unreserve} operations take $O(1)$.
If each \op{readVersion} operation took $O(1)$ time as well, then time complexity of the snapshot query would be the same as its sequential complexity.
Unfortunately, \op{readVersion} does not take constant time because it calls \op{find}.
We will prove the following claim: a \op{readVersion(ts)} on VersionedCAS object $O$ has amortized expected time complexity at most the number of successful \op{vCAS} operations on object $O$ concurrent with the snapshot query.
Assuming that the snapshot query calls \op{readVersion} on each object at most once (which is true in~\cite{WBBFRS21a} because values returned by \op{readVersion}s are cached so there's no need to read the same object twice), this means that the amortized expected time complexity of the \op{readVersion} is proportional to its sequential complexity plus the number of successful \op{vCAS} operations it is concurrent with.
Therefore, to prove the desired time bound for \op{readVersion}, it suffices to prove the previous claim.

To prove the claim, we first note that a \op{readVersion(ts)} operation starts by performing \op{getHead} and this takes amortized expected constant time by Theorem~\ref{thm:step3-time}.
Then it calls \op{initTS} on $V$ (which also takes constant time) followed by a \op{find(*, ts)}.
We know that all nodes visited by the \op{find(*, ts)} operation, except for the last one, have timestamp greater than \op{ts}.
The value of the global timestamp at the start of \op{readVersion} is less than or equal to \op{ts}, so the vCAS operations that added these nodes must have been linearized after the start of the \op{readVersion}.
Therefore, the number of distinct nodes visited by the \op{find} operation is bounded by one plus the number of successful vCAS operations on the same object that were concurrent with the \op{readVersion}.
By Lemma \ref{lem:traversal}, we know that the total number of forward and upward traversals performed by a \op{find} is proportional the number of distinct nodes visited.
Each forward or upward traversal takes amortized expected constant time, so the \op{readVersion} has amortized expected time complexity $O(C)$ where $C$ is the number of \op{vCAS} operations on the same object that are concurrent with the snapshot query.
Thus the claim holds.

Finally, the amortized expected time complexity of frontier operations remains unchanged because both \op{vRead} and \op{vCAS} take amortized expected constant time.
\end{proof}

\subsection{Proof of space bounds}


As in Section~\ref{sec:application}, we consider a \dnode{} to be \emph{necessary} if it is either not freed by the original data structure's memory reclamation scheme or if there exists an announced timestamp in between its birth (inclusive) and retire (exclusive) timestamp.
We say that a \vnode{} is \emph{necessary} if it is pointed to by a \dnode{} that has not yet been deprecated (i.e., freed by the original data structure's memory reclamation scheme) or if it has been deprecated but the interval it was assigned contains an announced timestamp.
Let $D_{max}$ and $V_{max}$ be the maximum number of necessary \dnode{}s and \vnode{}s \er{at any one time during an execution}.

\begin{theorem}
	\label{thm:dnode-space}
	In any configuration, there are $O(D_{max} + P^2 \log P)$ \dnode{}s that have been allocated but not freed.
\end{theorem}

\begin{proof}
	We begin by noting that a \dnode{} is deprecated as soon as it is freed by the original data structure's memory reclamation scheme.
	The set of \dnode{}s that have been allocated but not freed can be partitioned into three sets, one set $A$ for \dnode{}s that have not been freed by the original data structure's memory reclamation scheme, another set $B$ for \dnode{}s that have been passed to \retireInterval\ but not returned, and another set $C$ for \dnode{}s that have been returned by some \retireInterval\ but not freed.
	Since any \dnode{} that is not yet deprecated is necessary, $|A| < D_{max}$.
	The nodes returned by \retireInterval\ are freed before the next call to \retireInterval\ by the same process, so by Lemma \ref{lem:deprecateRetVal}, each process contributes at most $4 P \log P$ nodes to the set $C$, so $|C| \leq 4P^2 \log P$.
	
	To complete the proof, it remains to bound $|B|$.
	Recall that a deprecated object is said to be \emph{needed} if its interval contains an active announcement.
	This means that all needed deprecated \vnode{}s are also necessary.
	Therefore $D_{max}$ is an upper bound on the number of needed deprecated \vnode{}s across all configurations of the execution.
	By Theorem \ref{step1-space}, $|B| \leq 2D_{max} + 25 P^2\log P$, which completes the proof.
\end{proof}

Assuming each \dnode{} is constant-sized, we can prove the following theorem.

\applicationSpace* 

\begin{proof}
Theorem \ref{thm:dnode-space} bounds the total number of \dnode{}s by $O(D_{max} + P^2 \log P)$.
To bound the number of \vnode{} and Descriptor objects, we make use of Theorem \ref{thm:step3-space}.
In Figure~\ref{fig:vcas-alg-youla-gc}, we can see that each process holds on to a constant number of \vnode{} pointers so we set the parameter $K$ in Theorem~\ref{thm:step3-space} to $O(P)$.
To show that there are not too many \vnode{}s and Descriptors, we need to bound the number of successful \op{tryAppend}s ($L$) minus the number of \op{remove}s ($R$) by $O(D_{max} + V_{max} + P^2 \log P)$.
This is equivalent to bounding the number of \vnode{}s that have been appended, but not passed to remove.
We divide these \vnode{}s into disjoint sets and show that there are:

\begin{enumerate}
	\item $O(D_{max})$ \vnode{}s that have been appended and are still at the head of the version list,
	\item $O(P)$ \vnode{}s that have been replaced as the head of the version list but are not yet deprecated,
	\item $O(V_{max} + P^2 \log P)$ \vnode{}s that have been deprecated, but not yet returned, and
	\item $O(P^2 \log P)$ \vnode{}s that have been returned, but not yet passed to remove.
\end{enumerate}

The first bound holds because each \dnode{} contains a constant number of versionedCAS objects, which are essentially pointers to heads of version lists.
The second bound holds because a \vnode{} is deprecated immediately after a new head is appended and there can only be $P$ processes stalled in between these two steps.
The third bound holds because every needed deprecated \vnode{} is also a necessary \vnode{}, so $V_{max}$ is an upper bound on $H_{max}$ from Theorem~\ref{step1-space}. Therefore, by Theorem~\ref{step1-space}, at most $O(V_{max} + P^2 \log P)$ \vnode{}s have been deprecated, but not yet returned.
The final bound holds because each call to \retireInterval\ returns at most $4P \log P$ \vnode{}s (Lemma~\ref{lem:deprecateRetVal}) and \vnode{}s are passed to remove as soon as they are returned by some \op{deprecate}.

Summing up over all the cases, we see that $(L-R) \in O(D_{max} + V_{max} + P^2 \log P)$. 
By Theorem~\ref{thm:step3-space}, the number of \vnode{}s and Descriptors that have been allocated and not freed is at most $O(D_{max} + V_{max} + P^2 \log P + P^2\log L)$.
Since each \dnode{}, \vnode{} and Descriptor takes a constant amount of space, the overall space usage of these objects is $O(D_{max} + V_{max} + P^2 \log P + P^2 \log L)$.
\end{proof}

\end{document}